%\listfiles
\pdfoutput=1
%\begindocumentclass[12pt]{article}
\documentclass[a4paper,11pt,,english]{article}

\usepackage{amsmath, amssymb, amsthm, array, caption, color,fullpage, graphicx, hyperref,mathtools, multirow, subcaption, tikz, epigraph}
	%, natbib,

%\usepackage[bookmarks=false,pagebackref,colorlinks=true,urlcolor=blue,linkcolor=blue,citecolor=red,pdfstartview=FitH]{hyperref}
%\usepackage[pagebackref,letterpaper=true,colorlinks=true,pdfpagemode=none,urlcolor=blue,linkcolor=blue,citecolor=BrickRed,pdfstartview=FitH]{hyperref}

\begin{document}

\newcommand{\inote}[1]{{\color{red}$\ll$\textsf{#1 --Iddan}$\gg$\marginpar{\tiny\bf IG}}}
\newcommand{\mnote}[1]{{\color{blue}$\ll$\textsf{#1 --Michal}$\gg$\marginpar{\tiny\bf MF}}}
\newcommand{\anote}[1]{{\color{purple}$\ll$\textsf{#1 --Amos}$\gg$\marginpar{\tiny\bf AF}}}

\newcolumntype{?}{!{\vrule width 1.5pt}}

\newtheorem{theorem}{Theorem}
\newtheorem{corollary}[theorem]{Corollary}
\newtheorem{lemma}[theorem]{Lemma}
\newtheorem{observation}[theorem]{Observation}
\newtheorem{conjecture}[theorem]{Conjecture}
\newtheorem{proposition}[theorem]{Proposition}
\newtheorem{definition}[theorem]{Definition}
\newtheorem{claim}[theorem]{Claim}

\newcommand{\cost}{\mbox{\rm cost}}
\newcommand{\SC}{\mbox{\rm SC}}
\newcommand{\OPT}{\mbox{\rm OPT}}
\newcommand{\bara}{\mathbf{a}}
\newcommand{\bx}{\mathbf{x}}
\newcommand{\hx}{\hat{x}}
\newcommand{\cala}{\mathcal{A}}
\newcommand{\calc}{\mathcal{C}}
\newcommand{\cals}{\mathcal{S}}
\newcommand{\calv}{\mathcal{V}}
\newcommand{\calr}{\mathcal{R}}
\newcommand{\bij}{b_{i,j}}
\newcommand{\xmed}{x_{\ceil*{n/2}}}
\newcommand{\hbx}{\hat{\bx}}
	
\usetikzlibrary{plotmarks,automata,positioning}
\usetikzlibrary{decorations.markings}
\tikzstyle{axis}   = [-latex,black!55]
\tikzstyle{two}=[x={(1cm,0cm)},y={(0cm,1cm)}]
\tikzstyle{dim}    = [latex-latex]

\DeclarePairedDelimiter{\ceil}{\lceil}{\rceil}
\DeclarePairedDelimiter{\floor}{\lfloor}{\rfloor}

%\graphicspath{ {images/} }

%\title{Selecting Candidates Truthfully while Minimizing Discontent}
%\title{On Voting and Facility Location -- Efficient, Truthful and Simple Mechanisms for Candidate Selection}
\title{On Voting and Facility Location}
\author{
	Michal Feldman\thanks{Tel Aviv University and Microsoft Research, Israel. Email: {\tt michal.feldman@cs.tau.ac.il}.
		The work of Michal Feldman was partially supported by the European Research Council under the European Union's Seventh Framework Programme (FP7/2007-2013) / ERC grant agreement number 337122.} \and
	Amos Fiat\thanks{Tel Aviv University. Email: {\tt fiat@tau.ac.il}.
		This work was done in part while A. Fiat was visiting the Simons Institute for the Theory of Computing.} \and
	Iddan Golomb\thanks{Tel Aviv University. Email: {\tt igolomb@gmail.com}.
		The work of Iddan Golomb was partially supported by the European Research Council under the European Union's Seventh Framework Programme (FP7/2007-2013) / ERC grant agreement number 337122. This work was done in part while I. Golomb was visiting the Simons Institute for the Theory of Computing.}}

\date{December 18, 2015}
\maketitle

\begin{quote}
{\small\raggedright \noindent {\em ``We would all like to vote for the best man, but he is never a candidate''}}

\hspace{2cm} {\tiny\raggedleft \noindent {--- Kin Hubbard}}
\end{quote}

\begin{abstract}
	 We study mechanisms for candidate selection that seek to minimize the social cost, where voters and candidates are associated with points in some underlying metric space. The social cost of a candidate is the sum of its distances to each voter. 
	 Some of our work assumes that these points can be modeled on a real line, but other results of ours are more general.

	A question closely related to candidate selection is that of minimizing the sum of distances for facility location.
	The difference is that in our setting there is a fixed set of candidates, whereas the large body of work on facility location seems to consider every point in the metric space to be a possible candidate.
	This gives rise to three types of mechanisms which differ in the granularity of their input space (voting, ranking and location mechanisms).
	We study the relationships between these three classes of mechanisms.
	
	While it may seem that Black's 1948 median algorithm is optimal for candidate selection on the line, this is not the case.
	We give matching upper and lower bounds for a variety of settings.
	In particular, when candidates and voters are on the line, our universally truthful {\sl spike} mechanism gives a [tight] approximation of two.
	When assessing candidate selection mechanisms, we seek several desirable properties: $(a)$ efficiency (minimizing the social cost) $(b)$ truthfulness (dominant strategy incentive compatibility) and $(c)$ simplicity (a smaller input space).
	We quantify the effect that truthfulness and simplicity impose on the efficiency.

\end{abstract}

\section{Introduction}

The Hotelling-Downs model (\cite{downs1957economic}, \cite{hotelling1990stability}) used to study political strategies, assumes that individual voters occupy some point along the real line. Non-principled political parties (or ice cream vendors) strategically position themselves at a point 
along the left-right axis (or along a beach) so as to garner the greatest number of supporters (clients). Implicitly, voters will vote for the closest candidate.

We consider an analogous problem to the Hotelling-Downs model, where candidates are principled ({\sl i.e.}, non-strategic) whereas the voters have preferences but may misrepresent them in order to achieve what is a better outcome from their perspective. In this model, in which both voters and candidates are represented by points in the metric space, a closer candidate is preferable to one further away.

Examples for candidate selection:
\begin{itemize}
	\item A municipality plans to erect a public library on a street, and every resident seeks to be as close as possible to the proposed library. However, the new library can only be built on suitable locations (the candidates).
	\item Social choice issues in which the distance is not physical:  there is a set of policies ranging from left to right, and several political candidates stand for election, each one advocating a different policy.
	Every voter is associated with a point along the real line.
	An example of a collective decision problem which does not revolve around the political sphere yet may also fit this setting is the task of determining the temperature of an air conditioner in a room, where each individual has a different ideal point along the scale of temperatures (a line). 
	There are many additional settings of relevant candidate selection problems, e.g., in the realms of recommendation systems, electronic commerce and computational economics.
	While our results do not necessarily apply to all social choice settings, there are many such problems for which they do apply (whether in entirety or partially).
\end{itemize}

Assuming quasi-linear utilities, and allowing payments --- then the well known Vickrey-Clarke-Groves (VCG) mechanism is truthful and can achieve the optimal social cost (see, e.g., \cite{nisan2007introduction}).
However, in many real-life situations we restrict the use of money due to ethical, legal or other considerations, e.g, in democratic elections and in the examples previously mentioned.

We study deterministic truthful mechanisms with no payments for the candidate selection problem. In such mechanisms, no agent can benefit from misreporting her location, regardless of the reported locations of the other agents. Such mechanisms are also known as dominant strategy incentive compatible mechanisms. We also consider randomized truthful mechanisms, both universally truthful (ex-post Nash) and truthful in expectation.

Given a set of candidate and voter locations, it is polytime to find the candidate that minimizes the social cost.

When restricted to deterministic truthful mechanisms, we show that the optimal candidate cannot be selected in the general case. Moreover, we show that the cost may be as bad as three times the optimal cost (matching lower and upper bounds).
When considering randomized mechanisms on the line, the approximation factor drops to two (matching upper and lower bounds).

There are other reasons that an optimal candidate may not be chosen. In particular, this depends on the amount of information the agents supply to the mechanism. We formulate three different types of mechanisms, based on the information each agent submits to the mechanism---
\begin{itemize}
	\item Voting mechanisms, in which each agent casts a vote for her favorite candidate.
	\item Ranking mechanisms, in which each agent states her ordinal preferences over the candidates.	
	\item Location mechanisms, in which each agent sends her exact position.
\end{itemize}

Clearly, knowing the true location of an agent allows one to infer the ranking preferred by that agent, which in turn allows one to infer the favorite candidate of the agent (up to tie-breaking).

In almost all previous work on the facility location problem every point in the metric space was considered to be a candidate, therefore there was no difference between these three mechanism types.

The social choice literature mostly considers ranking mechanisms. Recognize that Arrow's impossibility theorem does not hold when assuming the preferences are single-peaked. 

The more information an agent transmits, the mechanism has more tools to devise an accurate solution.
Albeit, this information comes at a cost, since it might disclose more private information which the agents wish to keep confidential.
Furthermore, behavioral economists have long argued that the agents cannot obtain the full information pertaining to their utility, or that obtaining this information requires a high cognitive cost.
Additionally, sending more information also casts a higher burden on the mechanism.
For all of these reasons deploying a simple mechanism \footnote[1]{We use the term ``simplicity" in the perspective of the voters, which have a smaller action space, i.e, less options to choose from. Upon receiving the input, the mechanism itself can act in an arbitrarily complex fashion.} which requires less information from agents is advantageous, and generally there is a trade-off between the accuracy of a mechanism to its simplicity. 
Indeed, in practice many election schemes use voting mechanisms rather than ranking mechanisms, largely due to these desiderata.

\subsection{Our Contributions}
In the paper, we show the following:

\begin{itemize}
	\item In Section \ref{sec-classes} we formulate a framework of reductions that compare the various mechanism types. 
	We utilize this framework to show the relations (equivalence or strict containment) between the three classes of truthful mechanisms -- voting, ranking and location (see Figure \ref{fig:classes}).
	Furthermore, we show that for the case of two candidates, the set of truthful in expectation location mechanisms is equivalent to the set of truthful in expectation voting mechanisms.
	These results provide a significant step towards a full characterization of truthful mechanisms at large.

	\begin{figure}[t]
		
		\setlength{\fboxrule}{0.5pt}
		\noindent \fbox{\noindent\makebox[1\textwidth][c]
			{\begin{minipage}{1\textwidth}

					\centering	
					\begin{subfigure}{.5\textwidth}
						%			\noindent\fbox{\begin{minipage}{\dimexpr\textwidth-2\fboxsep-2\fboxrule\relax}
						\centering
						
						\begin{tikzpicture}
						\draw[fill=black] (-0.7, -0.4) circle (0.05) node  (1) [right] {Median};		
						\draw (0,0) ellipse (1.6cm and 0.7cm) node [above, yshift=-0.1cm] {Voting};
						\draw (0,0) ellipse (2.8cm and 1.9cm) node [above, yshift=0.75cm] {Ranking $\approx$ Location};
						\end{tikzpicture}
						
						\caption{\textbf{Deterministic truthful} mechanisms}
						\label{fig:classes-det}
					\end{subfigure}%
					\begin{subfigure}{.5\textwidth}
						\centering
						
						\begin{tikzpicture}
						\draw[fill=black] (9.7, -0.38) circle (0.05) node  (1) [right] {Spike};
						\draw (10,0) ellipse (1.4cm and 0.7cm) node [above, yshift=-0.1cm] {UT Voting};
						\draw (10,0) ellipse (2.1cm and 1.5cm) node [above, yshift=0.7cm] {TIE Voting};
						\draw (10,0) ellipse (3cm and 2.3cm) node [above, yshift=1.5cm] {TIE Ranking}; 	
						\draw (10,0) ellipse (3.9cm and 3cm) node [above, yshift=2.3cm] {TIE Location};		
						\end{tikzpicture}

						\caption{\textbf{Randomized} mechanisms}
						\label{fig:classes-ran}
						
					\end{subfigure}			

					\caption{The relationships between classes of mechanisms in candidate elections (Theorem \ref{thm:classes}):
						\newline
						For deterministic truthful mechanisms, the set of ranking mechanisms strictly contains the set of voting mechanisms, yet the set of location mechanisms is equivalent to the set of ranking mechanisms. 
						The lower bound on the social cost for any truthful location mechanism on the line is 3, and it is matched by an upper bound by a voting mechanism - the median mechanism.
						\newline
						In the randomized case, there is a hierarchy of strict containment in the following order - truthful in expectation (TIE) location mechanisms, TIE ranking mechanisms, TIE voting mechanisms and universally truthful (UT) voting mechanisms.
						The lower bound on the social cost of any location mechanism on the line is 2. We introduce the spike mechanism, which is a universally truthful voting mechanism, which achieves a matching upper bound of 2. \newline
						Refer to Section \ref{sec-classes} for formal definitions of equivalence and strict containment in our setting.
					}
					
					\label{fig:classes}
					
				\end{minipage}}} \par\setlength{\fboxrule}{0.2 pt}

		\end{figure}
		
		\item In Section $\ref{sec-spike}$ we define a family of universally truthful voting mechanisms on the line called weighted percentile voting (WPV) mechanisms, which choose the $i$'th vote with some predetermined probability $p_i$. We introduce the {\em spike} mechanism, which is a WPV mechanism that carefully crafts the probability distribution to account for misreports by any agent - whether they are near the center or close to the extremes (see Figure \ref{fig:spike-prob}). We then use backwards induction to show that spike achieves an approximation ratio of two (Theorem \ref{thm-spike}).
		
		\begin{figure}[t]		
			\setlength{\fboxrule}{0.5 pt}
			\noindent \fbox{\noindent\makebox[1\textwidth][c]
				{\begin{minipage}{1\textwidth}		
						\centering
						\includegraphics[scale=0.3]{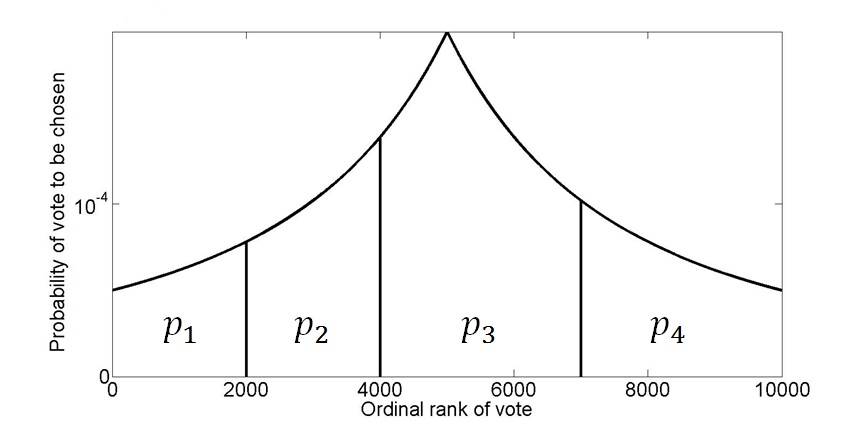}

						\caption{The density function of the spike mechanism, which gives rise to the mechanism's name (the cumulative distribution function is written explicitly in Definition \ref{def:spike}). In this example there are 10000 agents and 4 candidates. The 4 candidates, when ordered from left to right, receive 2000, 2000, 3000 and 3000 votes respectively. The votes are arranged in ascending order (with ties broken arbitrarily), and the graph depicts probability of each vote being chosen -- votes are chosen with higher probability when they are closer to the 50th percentile. The area beneath the graph represents the probability that each candidate will be elected, e.g., the probability of choosing the second candidate ($p_2$) is the integral of the function between 2000 and 4000. }
												
						\label{fig:spike-prob}			
					\end{minipage}}} \par\setlength{\fboxrule}{0.2pt}
		\end{figure}		

		\item In Section \ref{sec-randomized} we show additional bounds for randomized mechanisms -- On the line there is a lower bound of two, even for location mechanisms, which shows that the result for spike is tight. 
		Furthermore, when combining this understanding with the results of Section \ref{sec-classes}, it can be concluded that two is also the tight approximation ratio for truthful in expectation mechanisms (voting, ranking or location) and for universally truthful voting mechanisms.
		
		We move on to show bounds for randomized mechanisms for more general metric spaces\footnotemark[2] (see Figure \ref{fig:results-table-Rd}). An immediate result is that the random dictator mechanism achieves an upper bound of three for any metric space. Theorem \ref{thm:simplex} shows a lower bound of $3-\frac{2}{d}$ for any voting mechanism in $\Re^d$ by using a counterexample based on a regular simplex. This is enough to conclude that on an arbitrary metric space, the bound of three is tight for any voting mechanism. Theorem \ref{thm:triangle-ranking} displays a lower bound of $7/3$ for any ranking mechanism in $\Re^2$ (which also holds in any higher dimension Euclidean space $\Re^d$). 
		
		\footnotetext[2]{We do not present results for deterministic mechanisms in general metric spaces, since in these cases the incentive compatibility constraints take a significant toll on the approximation ratio -- according to Anshelevich et al. \cite{anshelevich2014approximating} in the non-strategic setting it is possible to reach a constant ratio in any metric space, while due to the characterization of Schummer and Vohra \cite{schummer2002strategy} there exist metric spaces in which the approximation ratio is $\Omega(n)$ even in the continuous model.}
		
		\begin{figure}[t]				
			\setlength{\fboxrule}{0.5 pt}
			\noindent \fbox{\noindent\makebox[1\textwidth][c]
			{\begin{minipage}{1\textwidth}	
						
					\centering
										
					\begin{tabular}{|c|c|ll|ll|}
						\cline{3-6} \multicolumn{2}{l|}{ }  & \multicolumn{2}{|c|}{Strategic} & \multicolumn{2}{c|}{Non-Strategic} \\ 
						\hline Voting (low information) & LB
						& $3-\frac{2}{d+1}$ & (Thm. \ref{thm:simplex}) %\renewcommand{\thefootnote}{\roman{footnote}} \footnotemark[1] 
						& $2$ & (Lemma \ref{lemma:rand-lb-2-non-strategic}) %\renewcommand{\thefootnote}{\roman{footnote}} \footnotemark[2] 
						\\ 
						
						\cline{2-6} & UB 
						& $3$ & (Lemma \ref{lemma:rd-3}) %\renewcommand{\thefootnote}{\roman{footnote}} \footnotemark[3] 
						& $3$ & (Lemma \ref{lemma:rd-3}) %\renewcommand{\thefootnote}{\roman{footnote}} \footnotemark[3] 
						\\ 
						
						\hline Ranking & LB  
						& $7/3$ & (Thm. \ref{thm:triangle-ranking}) %\renewcommand{\thefootnote}{\roman{footnote}} \footnotemark[4]
						& $2$ & (Lemma \ref{lemma:rand-lb-2-non-strategic}) %\renewcommand{\thefootnote}{\roman{footnote}} \footnotemark[2] 
						\\ 
						
						\cline{2-6} & UB  
						& $3$ & (Lemma \ref{lemma:rd-3}) %\renewcommand{\thefootnote}{\roman{footnote}} \footnotemark[3]
						& $3$ & (Lemma \ref{lemma:rd-3}) %\renewcommand{\thefootnote}{\roman{footnote}} \footnotemark[3] 
						\\ 
						
						\hline Location (high information) & LB  
						& $2$ & (Obs. \ref{obs:lb-location-2}) %\renewcommand{\thefootnote}{\roman{footnote}} \footnotemark[5]
						& $1$ & 
						\\ 
						
						\cline{2-6} & UB  
						& $3$ & (Lemma \ref{lemma:rd-3}) %\renewcommand{\thefootnote}{\roman{footnote}} \footnotemark[3]
						& $1$ & 
						\\ 
						\hline 
					\end{tabular} 

					\caption{Summary of our results for randomized mechanisms in $\Re^d$.
						 The columns correspond to the truthfulness constraints (in the strategic setting the mechanism must be truthful in expectation), whereas the rows correspond to the information constraints (voting, ranking or location mechanisms). The rows are further divided to show the lower bound (LB) and upper bound (UB) of each such cell. 
						Note that for non-strategic location mechanisms the result is always optimal by definition, since there are neither information nor strategic constraints.
						\newline
						In the strategic setting there is a difference between the lower bounds for voting, ranking and location mechanisms. 
						\newline
						Most of the results here are rather straightforward, except for the upper bounds of $3-\frac{2}{d+1}$ and of $7/3$, which are more involved.
						%\protect\footnotetext[1]{From Theorem \ref{thm:simplex}.}
						%\protect\footnotetext[2]{From Lemma \ref{lemma:rand-lb-2-non-strategic}. }		
						%\protect\footnotetext[3]{From Lemma \ref{lemma:rd-3}. }				
						%\protect\footnotetext[4]{From Theorem \ref{thm:triangle-ranking}. }		
						%\protect\footnotetext[5]{From Observation \ref{obs:lb-location-2}. }
					}					
					
					\label{fig:results-table-Rd}
					
				\end{minipage}}} \par\setlength{\fboxrule}{0.2pt}								
					
		\end{figure}

		\item In Section \ref{sec-deterministic} we present deterministic bounds on the line -- there is a lower bound of three, which is met by a matching upper bound due to the median mechanism. All the results on the line, deterministic or randomized, are displayed in the table in Figure \ref{fig:results-table-line}. 
		
		Recognize the following surprising phenomenon apparent in Figure \ref{fig:results-table-line}. In both deterministic and randomized cases, any constraint in either information or truthfulness, yields the same ratio as taking the both of these constraints simultaneously --- 
		when insisting on truthful mechanisms (in the strategic case), there is no trade-off between high and low information settings, and one can enjoy the benefits of minimal information mechanisms (voting mechanisms) without incurring any additional cost to the approximation ratio;
		Similarly, when deciding to reduce the information requirements to anything less than location mechanisms, it is possible to devise a truthful [voting] mechanism, without increasing the approximation ratio.  
		
		\begin{figure}[t]				
		\setlength{\fboxrule}{0.5 pt}
		\noindent \fbox{\noindent\makebox[1\textwidth][c]
			{\begin{minipage}{1\textwidth}						
			
						\centering	
	
						%\cline{3-4} \multicolumn{2}{l|}{ }  & Strategic & Non-Strategic \\ 
						%\begin{table}[h]
						%	\centering
							
						\begin{tabular}{|c|c|c?c|c|}
						\cline{2-5} \multicolumn{1}{l|}{ } & \multicolumn{2}{c?}{ Deterministic }  & \multicolumn{2}{c|}{Randomized} \\ 
						\cline{2-5} \multicolumn{1}{l|}{ }  & Strategic & Non-Strategic & Strategic & Non-Strategic \\ 
						%\cline{2-5}  & Strategic & Non-Strategic & Strategic & Non-Strategic \\ 
						
						\hline 
						Voting & 3 & 3 & 2 & 2 \\
						(low information) & (LB:Lm. \ref{lemma:lb-det-3}, & (LB:Thm. 3 in \cite{anshelevich2014approximating}, & (LB:Lm. \ref{lemma:rand-lb-2-non-strategic}, & (LB:Lm. \ref{obs:lb-location-2}, \\
						& UB:Lm. \ref{lemma:det-med-3}) & UB:Lm.   \ref{lemma:det-med-3}) & UB:Thm. \ref{thm-spike} ) & UB:Thm.\ref{thm-spike}) \\
%						{\addstackgap{\shortstack{Voting \\ (low information)}}}
%						& 
%						{\addstackgap{\shortstack{3 \\ (LB:Lm. \ref{lemma:lb-det-3}, \\ UB:Lm. \ref{lemma:det-med-3})}}}
%						& 
%						{\addstackgap{\shortstack{3 \\ (LB:Thm. 3 in \cite{anshelevich2014approximating}, \\ UB:Lm.   \ref{lemma:det-med-3})}}}  
%						&
%						{\addstackgap{\shortstack{2 \\ (LB:Lm. \ref{lemma:rand-lb-2-non-strategic}, \\ UB:Thm. \ref{thm-spike} )}}} 
%						& 
%						{\addstackgap{\shortstack{2 \\ (LB:Obs. \ref{obs:lb-location-2}, \\ UB:Thm.\ref{thm-spike})}}}
%						\\ 
						
						\hline 
						Ranking & 3 & 3 & 2 & 2 \\
						& (LB:Lm. \ref{lemma:lb-det-3}, & (LB: Thm. 3 in \cite{anshelevich2014approximating}, & (LB:Lm. \ref{lemma:rand-lb-2-non-strategic}, & (LB:Obs. \ref{obs:lb-location-2}, \\
						& UB:Lm. \ref{lemma:det-med-3}) & UB: Lm. \ref{lemma:det-med-3}) & UB:Thm. \ref{thm-spike} ) & UB:Lm. \ref{lemma:det-med-3})

%						{\addstackgap{\shortstack{3 \\ (LB:Lm. \ref{lemma:lb-det-3}, \\ UB:Lm. \ref{lemma:det-med-3}) }}}
%						& 
%						{\addstackgap{\shortstack{3 \\ (LB: Thm. 3 in \cite{anshelevich2014approximating}, \\ UB: Lm. \ref{lemma:det-med-3}) }}}
%						& 
%						{\addstackgap{\shortstack{2 \\ (LB:Lm. \ref{lemma:rand-lb-2-non-strategic}, \\ UB:Thm. \ref{thm-spike} ) }}}
%						 
%						& 
%						{\addstackgap{\shortstack{2 \\ (LB:Obs. \ref{obs:lb-location-2}, \\ UB:Lm. \ref{lemma:det-med-3}) }}}
						\\ 
						
						\hline 
						Location & 3 & 1 & 2 & 1 \\
						(high information) & (LB:Lm. \ref{lemma:lb-det-3}, & & (LB:Lm. \ref{lemma:rand-lb-2-non-strategic}, & \\
						&  UB:Lm. \ref{lemma:det-med-3}) & & UB:Thm. \ref{thm-spike} ) & \\
						
%						{\addstackgap{\shortstack{Location \\ (high information)}}}
%						& 
%						{\addstackgap{\shortstack{3 \\ (LB:Lm. \ref{lemma:lb-det-3}, \\ UB:Lm. \ref{lemma:det-med-3})}}}
%						& 
%						1
%						& 
%						{\addstackgap{\shortstack{2 \\ (LB:Lm. \ref{lemma:rand-lb-2-non-strategic}, \\ UB:Thm. \ref{thm-spike} )}}}
%						& 
%						1 \\ 
						
						\hline 
						\end{tabular} 
						
						%\end{table}
						%\footnotetext[1]{Lower bound due to Lemma \ref{lemma:lb-det-3}, upper bound due to the median mechanism in Lemma \ref{lemma:det-med-3}}
						
						%
						\caption{The approximation ratios of mechanisms on the line ($\Re$) in various settings.
							All the results in the table are tight.
							In particular the upper bound in strategic case is due to the spike mechanism.
							%\newline
							%\newline				
							%\protect\footnotetext[1]{Lower bound due to Lemma \ref{lemma:lb-det-3}, upper bound due to the median mechanism in Lemma \ref{lemma:det-med-3}}
							%\protect\footnotetext[2]{Lower bound due to Theorem 3 in \cite{anshelevich2014approximating}, upper bound due to the median mechanism in Lemma \ref{lemma:det-med-3}}  
							%\protect\footnotetext[3]{Lower bound due to Observation \ref{obs:lb-location-2}, upper bound due to the spike mechanism in Theorem \ref{thm-spike}} 
							%\protect\footnotetext[4]{Lower bound due to Lemma \ref{lemma:rand-lb-2-non-strategic}, upper bound due to the spike mechanism in Theorem \ref{thm-spike} }	
							%\newline	
						}
						
						\label{fig:results-table-line}

			\end{minipage}}} \par\setlength{\fboxrule}{0.2pt}								
					
		\end{figure}
		
	\end{itemize}

\subsection{Related Work}

Voting systems have been a domain of prolific research for decades.
The seminal Gibbard-Satterthwaite theorem \cite{gibbard1973manipulation} shows that if the rankings of agents can be arbitrary and the amount of candidates is greater than two, then the only onto truthful mechanisms are dictatorships.
However, if there are limitations on the rankings, then the impossibility theorem of Gibbard-Satterthwaite does not hold.
In many cases the rankings can be limited to single-peaked preferences, a notion used as early as 1948 by Black \cite{black1948rationale}.
In 1980 Moulin showed a complete characterization of truthful deterministic mechanisms for single-peaked preferences \cite{moulin1980strategy}.
Schummer and Vohra \cite{schummer2002strategy} extended this characterization to cycles and general graphs.

There has been extensive work describing various candidate selection mechanisms, which have been generally divided to 3 main types (\cite{brandt2012computational}, \cite{walsh2007uncertainty}) --- scoring rules (e.g., plurality, Borda, anti-plurality, range voting, cumulative), Condorcet extensions (e.g., Copeland, maxmin, Dodgson, Young, ranked pairs), or other mechanisms (e.g., single transferable vote, Bucklin).
Some work on social choice also made use of randomized voting schemes, for instance in order to improve the results of the mechanisms \cite{procaccia2010can} or to make manipulation computationally hard (\cite{conitzer2006nonexistence} pages 632-633).
Most of these mechanisms have no assumptions on the preferences of the agents, and are rankings mechanisms (i.e., they receive the ordinal preferences of the voters as input). 
Since in these circumstances the Gibbard-Satterthwaite impossibility theorem holds, the mechanisms are typically not truthful. 
While the importance of truthfulness and of simple mechanisms (with less options for each voter) has been acknowledged, to the best of our knowledge there has not been a formal framework for reduction or an assessment of the relationships between the different types of mechanisms.

Since in the lack of cardinal costs no global objective functions can be measured (e.g, the social cost), the focus of many of the aforementioned mechanisms is on achieving some desirable axiomatic properties.
Nonetheless, the use of utilitarianism in the realm of social choice has firm and ancient roots (see, e.g, a 1952 paper by Fleming \cite{fleming1952cardinal} and a 1955 work by Harsanyi \cite{harsanyi1955icardinal}).
Moreover, a new line of work commenced in recent years regarding {\em distortion}, which also refers to the utilitarian goal of minimizing social cost (partly due emergence of new domains such as recommender systems and e-commerce, as previously noted). 
The term was coined in 2006 by Procaccia and Rosenschein \cite{procaccia2006distortion}, and it was later used, for instance, by Boutilier et. al. \cite{boutilier2012optimal}.
Recently, Anshelevich et al. \cite{anshelevich2014approximating} assessed the distortion of several voting rules, and provided lower bounds on them.
In \cite{anshelevich2014approximating}, the distortion is the worst case ratio between the social cost of the candidate elected and the social cost of the optimal candidate, over any ranking profile (that is, preference profile) in any metric space.
The distortion is a quite similar to the approximation ratio used in this paper, but it differs in two key properties --
\begin{itemize}
	\item Most importantly, the source of the imperfection depicted by the distortion is the mechanism's lack of information (the mechanism has access to the ordinal ranking of the agents, but not to their exact location, that is - not to their full cardinal utilities). In this paper, the approximation ratio is greater than one {\sl both} because of this information deficiency (in the cases or ranking and voting mechanisms), {\sl and} because of incentive compatibility constraints.
	In this sense, we can quantify the cost of limited information as well as the cost of truthfulness in various settings.
	\item The distortion calculates the worst-case ratio in any metric space, whereas the approximation ratio is sometimes calculated over a specific metric space.
\end{itemize}

Anshelevich et al. show a deterministic lower bound of 3 on the distortion, and they prove that two mechanisms (social choice functions), Copeland and Uncovered Set, achieve a distortion of 5.

Procaccia and Tennenholtz introduced game theoretical aspects to the facility location problem.
As mentioned before, their setting is similar to the one in this paper, except that the location of the facility is not restricted to a set of candidates, but instead can be located at any point on the line.
This model was extended by these authors and by others in many different ways.
The metric space researched spanned from a line (\cite{fotakis2014power}, \cite{procaccia2009approximate}) to a circle (\cite{alon2009strategyproof}, \cite{alon2010walking}), a tree (\cite{alon2009strategyproof}, \cite{feldman2013strategyproof}) or a general graph (\cite{alon2009strategyproof}).
There are many papers regarding building several facilities (or electing a committee of candidates), where the cost of an agent is her distance to the closest facility (\cite{fotakis2013strategyproof}, \cite{fotakis2014power}, \cite{lu2010asymptotically}, \cite{procaccia2009approximate}).
As opposed to the voting scenario, the goal of the vast majority of these papers was to optimize over some global target function, and the most popular target functions were the utilitarian (social cost) and egalitarian (the maximal cost of an agent) (see, e.g, \cite{alon2009strategyproof}, \cite{fotakis2014power}, \cite{procaccia2009approximate}), but there were also works regarding additional target functions like the $L_2$ norm (the sum of the squared distances of the agents, see \cite{feldman2013strategyproof}).
Some papers consider ``obnoxious facility location" --- a setting in which agents want to be as far away as possible from the facility, e.g., when selecting a location for a central garbage dump (\cite{cheng2013strategy}).

When the outcome is constrained to a set of candidates, the facility location literature is far less extensive.
In this setting, a recent paper by Sui and Boutilier defines a set of deterministic mechanisms which is GSP on the line and $\epsilon$-GSP on $\Re^n$ \cite{sui2015approximately}.
The paper does not show bounds on global objectives such as the social cost.

Dokow et al. \cite{dokow2012mechanism} characterize deterministic truthful mechanisms on the discrete line and the discrete circle. They move on to give a lower bound on the social cost for large circles, and to the best of our knowledge this is the only result regarding assessment of the social cost in a constrained setting.
It is worthy to note the model in \cite{dokow2012mechanism} has 2 major properties which differ from the one in this paper: (a) The discrete constraints of the locations apply to the agents as well as to the candidates, so all agents are located precisely on some candidate; (b) The distance between any two neighboring candidates must be constant (for instance: 1,2,3,4,...).

\section{Model} \label{model}

Let $N=\{1 \ldots n\}$ be a set of agents, where each agent $i \in N$ is located at some point $x_i$.
We refer to the location of agent $i$ as agent $i$'s {\em type}.
Let $\bx=(x_1 \ldots x_n)$ be the {\em location profile} of the agents.
There exists a fixed set of candidates $\mathcal{C}=\{\mathcal{C}_1 \ldots \mathcal{C}_m\}$. 
Each candidate $\mathcal{C}_j$, is located at point $y_j$, and this location is publicly known. 
The agents and candidates are located on some metric space.
A significant part of the paper deals with specific metric spaces, and these will be specifically noted.
In the parts where the metric space is $\Re$, it is assumed that the agents and the candidates are both numbered in ascending order based on their locations (otherwise they could be renamed in this manner).

A {\em deterministic mechanism} $M$, is a function which maps an {\em action profile} $\bara=\{a_1 \ldots a_n\} \in \mathcal{A}^n$ to a candidate, that is: $M: \mathcal{A}^n \rightarrow \mathcal{C}$.
We consider three classes of mechanisms that differ in the input they accept, {\sl i.e.}, in the action space $\mathcal{A}$ of the agents:
\begin{itemize}
	\item {\em Voting mechanisms}, in which each agent casts a vote for a candidate, that is: $a_i \in \mathcal{C}$. 
	\item {\em Ranking mechanisms}, in which every agent reports ordinal preferences over all the $m$ candidates. The notation $\mathcal{C}_j \succeq \mathcal{C}_k$ indicates a preference of candidate $\mathcal{C}_j$ over candidate $\mathcal{C}_k$ (or is indifferent between the two). In ranking mechanisms $a_i \in \Pi$, where $\Pi$ is the set of all permutations of the set of candidates $\mathcal{C}$. These mechanisms are sometimes referred to in the literature as social choice functions. 
	\item {\em Location mechanisms}, in which every agent reports their location, that is $a_i$ is some point in the metric space.
\end{itemize}

Given a joint action profile $\bara$, the \textit{cost} of point $x$ is its distance to the facility, that is: $\cost_{x}(M,\bara)=|x-M(\bara)|$. For agent $i \in N$ located at point $x_i$, we refer to $\cost_{x_i}(M,\bara)$ as the {\em cost of agent $i$}.
The goal of each agent is to minimize her cost.

Truthful mechanisms are usually defined in the context of direct revelation mechanisms. 
Since in ranking and voting mechanisms the action space does not coincide with the type space, we extend this notion in the following trivial manner for these cases as well. 
For an agent in location $x_i$ and for any mechanism (location, ranking or voting), let $\cala(x_i)$ be the set of {\em true actions} of this agent --- the actions which convey the real preferences of this agent. 
For instance, in voting mechanisms $\cala(x_i)$ is the set of candidates closest to $x_i$, which we refer to as the {\em favorite candidates} of $x_i$ (this might be a set since there may be ties). 
An agent reporting $a_i \in \cala(x_i)$ is said to be reporting {\em truthfully}, and an action profile $\bara$ in which all agents report truthfully is called a {\em truthful profile}.
The set of truthful profiles is denoted $\cala(\bx)$.
A {\em truthful mechanism} $M$ is one in which no agent can suffer from reporting truthfully, regardless of the actions of the other agents: 
$$\forall i \in N, \forall x_i, \forall a_i \in \cala (x_i), \forall a_{-i} \in \cala^{n-1}, \forall a_i' \in \cala: \cost_{x_i}\left(M,(a_i,a_{-i}) \right) \leq \cost_{x_i}\left(M,(a_i',a_{-i}) \right)$$

A {\em randomized mechanism} is a mapping from an action profile to a distribution over the candidates, that is: $M: \cala^n \rightarrow \Delta(\calc)$.
The cost of agent $i$ is the expected cost of this agent according to the probability distribution returned by the mechanism, that is: $\cost_{x_i}(M, \bara)=\mathbb{E}_{\calc_j\sim M(\bara)}|x_i-y_j|$.

Two different notions of randomized truthful mechanisms have been studied in the literature, and we extend them naturally based on our definitions of truthful reports: 
\begin{itemize}
	\item {\em Truthful in expectation (TIE)} mechanisms --- where the expected cost of an agent reporting truthfully is never higher than any other action.
	That is: $\forall i \in N$, $\forall a_i \in \cala (x_i)$, $\forall a_{-i} \in \cala^{n-1}$, $\forall a_i' \in \cala$: $\cost_{x_i}\left(M,(a_i,a_{-i}) \right) \leq \cost_{x_i}\left(M,(a_i',a_{-i}) \right)$.
	In these mechanisms the agent may regret her action ex-post for some of the instances. 
	\item {\em Universally truthful} mechanisms are mechanisms which can be expressed as a probability distribution over deterministic truthful mechanisms. In these mechanisms an agent never regrets reporting truthfully, even after the random outcome is unraveled. 
\end{itemize}
Clearly, every universally truthful mechanism is truthful in expectation mechanisms, but not necessarily vice versa. 
Throughout the paper, in the randomized setting we use the term ``truthful" to refer to truthful in expectation mechanisms, unless otherwise stated. 

The {\em social cost} of a mechanism is the sum of the agents' costs. 
For a location profile $\mathbf{x}$ and an action profile $\bara$ the social cost is: $\SC(M,\mathbf{x}, \bara) = \sum_i \cost_{x_i}(M, \bara)$. 
%When $M$ is a truthful mechanism, the goal of the mechanism is to minimize the social cost. 
The cost of a candidate is the cost of the mechanism which locates the facility on that candidate, that is: $\SC(\calc_j,\bx)= \sum_{i \in N}|y_j-x_i|$.
Given a location profile $\mathbf{x}$, the {\em optimal mechanism}, denoted $\OPT(\mathbf{x})$, is one which chooses a candidate that minimizes the social cost ($\calc_{opt}$). 
For the sake of consistency, when there are several optimal candidates, we refer to the leftmost among them as $\calc_{opt}$.
For any truthful in expectation mechanism $M$ (including universally truthful mechanisms), the {\em social cost of $M$ given a location profile $\bx$} is the maximal social cost it yields by any truthful action profile $\bara$, that is: $\SC(M,\bx) = \max_{\bara \in \cala(\bx)}\SC (M,\bx,\bara)$.
The {\em approximation ratio} of a truthful in expectation mechanism $M$ is the maximal ratio for any location profile $\bx$, between social cost of $M$ given $\bx$ and the optimal social cost given $\bx$: $\max_{\bx}\frac{\SC(M,\bx)}{\SC(\OPT,\bx)}$.

We make use of several terms which are relevant for voting mechanisms:
\begin{itemize}
	\item The location of a vote $a_i$ (some candidate) is denoted $y(a_i)$.
	\item When the network is the line, then there is an inherent order of the votes, and therefore it is possible to make use of percentiles. 
	A {\em percentile mechanism} is a voting mechanism that elects the $i$'th percentile vote (for example, the mechanism which chooses the leftmost vote is the $0$ percentile mechanism). 
	\item A {\em weighted percentile voting (WPV)} mechanism locates the facility on the $i$'th percentile vote with some probability $p_i$, where $\{p_i\}$ does not depend on the action profile $\bara$. For example, ``random dictator" is a WPV mechanism which chooses any vote $a_i$ with probability $\frac{1}{n}$.
\end{itemize}

In voting mechanisms, the set of candidates $\calc$ induces a partition of the line in the following manner --- the {\em voting zone} of candidate $\calc_i$, denoted $\calv_i$, is the set of points whose favorite candidate is $\calc_i$: $\calv_i=\{x: \forall \calc_j: |x- \calc_i| \leq |x-\calc_j| \}$.
The voting zones are bounded by {\em voting borders}.
For example, when the metric space is $\Re$, there are $n-1$ borders, which are the midpoints between two consecutive candidates: $b_i = \frac{y_i+y_{i+1}}{2}$ (see Figure \ref{fig:definitions-voting}). 
When the metric space is $\Re^d$, the voting zones create a Voronoi diagram.
A candidate which has at least one agent in their zone is called {\em active}.

In ranking mechanisms, $\calc$ induces a partition which divides the line into {\em ranking zones}.
All points in some ranking zone $\calr_i$ share some ranking $\pi_i$. 
In this case, we say that the ranking $\pi_i$ is consistent with ranking zone $\calr_i$. 
The ranking zones bounded by {\em ranking borders}.
For example, when the metric space is $\Re$ the ranking borders are the midpoints between any two candidates: $b_{i,j} = \frac{y_i+y_{j}}{2}$. 

\begin{figure}[t]
	
	\setlength{\fboxrule}{0.5 pt}
	\noindent \fbox{\noindent\makebox[1\textwidth][c]
		{\begin{minipage}{1\textwidth}
			
			\centering		

			\begin{tikzpicture}[y=.3cm, x=.3cm,font=\sffamily]			
			\draw[->, thick] (0,0) -- (40,0) node{};
			\draw[fill=white] (5, 0) circle (0.8) node [above, yshift=0.3cm] {$y_1$};
			\draw[fill=white] (15, 0) circle (0.8) node [above, yshift=0.3cm] {$y_{2}$};
			\draw[fill=white] (34, 0) circle (0.8) node [above, yshift=0.3cm] {$y_{3}$};
			
			\draw (10, 1) -- (10, -1) node [above, , yshift=0.5cm, align=center]{$b_1$};
			\draw (24.5, 1) -- (24.5, -1) node [above, yshift=0.5cm, align=center]{$b_{2}$};
			
			\draw[fill=black] (3, 0) circle (0.4) node [below, yshift=-0.3cm] {$x_1$};
			\draw[fill=black] (12, 0) circle (0.4) node [below, yshift=-0.3cm] {$x_{2}$};
			\draw[fill=black] (23, 0) circle (0.4) node [below, yshift=-0.3cm] {$x_{3}$};
			\draw[fill=black] (27, 0) circle (0.4) node [below, yshift=-0.3cm] {$x_{4}$};
			\draw[fill=black] (30, 0) circle (0.4) node [below, yshift=-0.3cm] {$x_{5}$};			
			
			\draw[dim] (0,3) -- (10,3) node[midway,above] {$\calv_{1}$};
			\draw (10, 2.7) -- (10, 3.3) node [above]{};	 				
			\draw[dim] (10,3) -- (24.5,3) node[midway,above] {$\calv_{2}$};
			\draw (24.5, 2.7) -- (24.5, 3.3) node [above]{};	 	
			\draw[dim] (24.5,3) -- (40,3) node[midway,above] {$\calv_{3}$};   
			
			\draw[dim] (5,-3) -- (10,-3) node[midway,above] {};
			\draw (7.5, -2.7) -- (7.5, -3.3) node [above]{};	 				
			\draw[dim] (10,-3) -- (15,-3) node[midway,above] {};
			\draw (12.5, -2.7) -- (12.5, -3.3) node [above]{};	 	
			
			\end{tikzpicture}
			
		\caption{Illustration of candidates (white circles), agents (black circles), voting borders and voting zones when the metric space is $\Re$. 
		\newline
		For example, in this case the favorite candidate of $x_3$ is $\calc_2$: $\calc(x_{3})=\{\calc_2\}$. 
		The voting borders divide the distance between two consecutive candidates exactly in half, for example: $|b_1-y_1|=|y_2-b_1|$.}			
		\label{fig:definitions-voting}
		\end{minipage}}} \par\setlength{\fboxrule}{0.2pt}

\end{figure}

\section{Classes of Mechanisms} \label{sec-classes}

In this section we go over the containment hierarchy of various classes of truthful mechanisms (e.g., Figure \ref{fig:classes}). 
We start with some intuition, then defining some necessary terms, and finally present the main theorem of this section.

Intuitively, for any mechanism $M$, there exists a mechanism $M'$ which receives a ``richer" input than $M$, and acts identically to $M$. 
For instance, for some arbitrary voting mechanism $M$, there obviously exists a ranking mechanism $M'$ which disregards all of the preferences except the top choice of each agent, and behaves essentially just like $M$ does.

We generalize this notion in the following informal definition --- 
a mechanism $M$ (whether location/ranking/or voting) is said to be {\em reducible} to a mechanism $M'$ (location/ranking/or voting) if for every location profile $\bx$ and true reports, the output of $M$ is identical to the output of $M'$ (a formal definition, which is based on $M$ simulating $M'$, is deferred to appendix \ref{def:reduciblity}). 

As pointed out, it is clear that every voting mechanism $M$ is reducible to some ranking mechanism $M'$ (or some location mechanism $M'$).
In these cases, if $M$ is truthful then so is $M'$, since $M'$ only uses the information which is inputted to $M$, so any misreports to $M'$ which would not change the input of $M$ do not affect the outcome at all. 
Note that the same reasoning also shows that every ranking mechanism is reducible to some location mechanism, and that any voting mechanism is reducible to some location mechanism.

On the other hand, it is not true that every ranking mechanism is reducible to some voting mechanism. 
Somewhat surprisingly, we will soon show that when we restrict ourselves to deterministic truthful mechanism this does hold, that is --- every {\sl deterministic truthful} ranking mechanism is reducible to some deterministic truthful voting mechanism. 

Two sets of mechanisms, $A$ and $B$, are said to be {\em equivalent} if every $a\in A$ is reducible to some $b\in B$, and every $b\in B$ is reducible to some $a\in A$. 

A set of mechanisms $A$ is said to be {\em strictly contained} in a set of mechanisms $B$ if every mechanism $a\in A$ is reducible to some mechanism $b \in B$, yet not every mechanism $b\in B$ is reducible to some mechanism $a\in A$. 
This is a slight abuse of terminology since the sets $A$ and $B$ may be disjoint, as their input space may be different.

The following theorem shows several claims regarding relations (equivalence or strict containment) between sets of truthful mechanisms. 
Notice that not only does this theorem show the hierarchy of the different classes, but it also provides notions relevant to characterization of truthful mechanisms. 
For instance, the second claim proves that no mechanism can use any information regarding the location of the agents beyond their ranking, while maintaining truthfulness.
In addition, in the claims showing strict containment, the examples in the proofs portray the expressiveness that the additional information gives the mechanism.

\begin{theorem} \label{thm:classes}
	The following claims hold in the Euclidean metric space $\Re^d$ (for any $d \in \mathbb{N}$):
	\begin{enumerate}
		\item The set of truthful deterministic ranking mechanisms strictly contains the set of truthful deterministic voting mechanisms.
		\item The set of truthful deterministic location mechanisms is equivalent to the set of truthful deterministic ranking mechanisms.
		\item The set of truthful in expectation randomized ranking mechanisms strictly contains the set of truthful in expectation randomized voting mechanisms.
		\item The set of truthful in expectation randomized location mechanisms strictly contains the set of truthful in expectation randomized ranking mechanisms.
		\item The set of truthful in expectation randomized voting mechanisms strictly contains the set of universally truthful randomized voting mechanisms.
		\item When there are two candidates, the set of truthful in expectation randomized location mechanisms is equivalent to the set of truthful in expectation randomized voting mechanisms.
	\end{enumerate}
\end{theorem}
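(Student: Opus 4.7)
The plan is to treat the six claims in a uniform way by first recording the ``easy direction'' of every reducibility statement: any voting mechanism can be simulated by a ranking mechanism that reads only each agent's top choice, and any ranking mechanism can be simulated by a location mechanism that reads only the ranking induced by each reported location. Since the simulating mechanism uses only the coarser data, truthfulness is inherited automatically. This disposes of one containment in every claim, so each equivalence (claims 2, 6) reduces to proving the opposite reduction, and each strict-containment claim (1, 3, 4, 5) reduces to exhibiting one separating mechanism in the larger class.

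The substantive content lies in claim 2. I would take a truthful deterministic location mechanism $M$ and show that, with $x_{-i}$ fixed, $M(x_i, x_{-i})$ depends on $x_i$ only through the ranking $x_i$ induces over $\calc$. Suppose $x_i$ and $x_i'$ lie strictly inside the same ranking zone but $M$ outputs $\calc_j$ and $\calc_k$ respectively with $j\ne k$. Truthfulness at $x_i$ against the misreport $x_i'$ forces $|x_i-y_j|\le|x_i-y_k|$, while truthfulness at $x_i'$ against the misreport $x_i$ forces $|x_i'-y_k|\le|x_i'-y_j|$. But $x_i$ and $x_i'$ rank $\calc_j$ and $\calc_k$ identically, so these two inequalities cannot both be strict; being strictly inside a common ranking zone rules out the boundary case. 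Hence $M$ factors through the ranking map, producing a truthful ranking mechanism equivalent to $M$.

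Claim 6 is an analogue with expected cost linearized in a single probability. With $m=2$, fix $x_{-i}$ and write $p(x_i)$ for the probability the mechanism outputs $\calc_1$. TIE between the truthful report $x_i$ and a misreport $x_i'$ rearranges to $(p(x_i)-p(x_i'))(|x_i-y_2|-|x_i-y_1|)\ge 0$. If $x_i$ lies strictly on $\calc_1$'s side of the midpoint this yields $p(x_i)\ge p(x_i')$ for all $x_i'$; swapping the roles of $x_i$ and $x_i'$ (both on the same side) forces equality. So $p$ depends only on the voting zone of $x_i$, giving an equivalent TIE voting mechanism. The restriction to $m=2$ is what makes the single-scalar monotonicity argument go through; the failure of this argument for $m\ge 3$ is precisely why claim 4 separates location from ranking.

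For the strict containments, the plan is to exhibit one witness per claim. For (1) and (3), with three candidates on the line one can design a truthful ranking mechanism whose output depends on the second-choice structure (so the same voting profile yields distinct outputs under two distinct rankings), which no voting mechanism can replicate. For (5), one constructs a TIE voting mechanism whose distribution over candidates is not a convex combination of truthful deterministic voting mechanisms—a small three-candidate gadget suffices. Claim 4, in light of claim 2, is the subtlest: one must exhibit a TIE location mechanism whose output probabilities vary with the exact location inside a fixed ranking zone, exploiting the linearity of expected cost in a way the argument behind claim 6 only supports when $m=2$. Pinning down such an example—ideally a simple convex combination of two candidates whose mixing weight depends smoothly on the reported coordinate—is where I expect the main work to lie; the remainder of the theorem is the bookkeeping above.
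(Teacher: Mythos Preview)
Your overall plan matches the paper's: the ``easy direction'' containments are exactly the paper's Lemma~\ref{lemma:reducible-granular}, and each remaining direction is handled as you describe. Your argument for Claim~2 is in fact more direct than the paper's---where the paper runs a hybrid argument through two intermediate profiles $\bx',\bx''$ (first perturbing border agents toward the chosen candidate, then sending everyone to a canonical point of their ranking zone), your pairwise contradiction (two points in the same open ranking zone forcing incompatible truthfulness inequalities) gets there in one step. Both approaches must still say something about agents sitting exactly on a ranking border, which the paper handles via the perturbation step; you should not skip this. Your Claim~6 argument is essentially the paper's, compressed.

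There is, however, a genuine gap in your plan for Claim~4. A TIE location mechanism supported on only two candidates, say $\calc_j$ and $\calc_k$, will \emph{never} separate location from ranking, no matter how many other candidates are present. Writing $p(x_i)$ for the probability of $\calc_j$, your own Claim-6 computation gives $(p(x_i)-p(x_i'))\bigl(|x_i-y_k|-|x_i-y_j|\bigr)\ge 0$ for every pair of reports, and swapping the roles of $x_i,x_i'$ forces $p$ to be constant on each side of the midpoint $b_{j,k}$. But $b_{j,k}$ is a ranking border, so such a mechanism factors through the ranking map. The paper's witness therefore necessarily puts all three candidates in the support: with $y_1=0$, $y_2=3$, $y_3=4$ and one agent drawn uniformly, it outputs $(1/3,1/3,1/3)$ if the drawn report is at most $1$ and $(1/4,1/2,1/4)$ otherwise; the threshold $1$ lies strictly inside a single ranking zone, and TIE is verified by a short case analysis. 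This is precisely the ``main work'' you anticipated, but the shape of witness you propose---a two-candidate mixture with a smoothly varying weight---is the one shape that provably cannot work.

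A minor point on Claim~5: two candidates already suffice (the paper picks an agent uniformly and outputs her vote with probability $0.9$, the other candidate with probability $0.1$); there is no need for a three-candidate gadget.
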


For the ease of readability, we defer the proof of this theorem to the appendix (see \ref{prf:classes}).

\section{Spike Mechanism} \label{sec-spike}
	
	In the next sections we will prove that both the median mechanism and the random dictator mechanism achieve an approximation ratio of three on $\Re$. 
	However, the cause for this ratio in these two cases is different - for median it is due to an instance which is bad for the median agent, while for random dictator it is due to a bad instance for an agent in one of the extremes. 
	It is therefore desirable to devise a mechanism which is resistant to bad instances of any agent.
	The spike mechanism stems from this intuition.
	
	This section contains foundations needed for the introduction of the spike mechanism, the definition of the mechanism, and the theorem showing that spike achieves an approximation ratio of 2. In the entirety of this section, the metric space is $\Re$ and the mechanisms are voting mechanisms. 
	
	We start by showing a basic lemma regarding WPV mechanisms. Recall that these are voting mechanisms which choose the $i$'th percentile vote with a predetermined probability $p_i$.
	
	\begin{lemma} \label{lemma:WPV-truthful}
		Any weighted percentile voting mechanism $M$ is universally truthful.
	\end{lemma}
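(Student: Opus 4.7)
The plan is to exploit the fact that a WPV mechanism is by construction a probability distribution over percentile mechanisms (the one returning the $i$-th percentile vote occurs with weight $p_i$). Since universal truthfulness is defined as a distribution over deterministic truthful mechanisms, it suffices to show that every percentile mechanism is deterministic truthful.

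Fix $i$ and consider the $i$-th percentile mechanism. Fix an agent at $x$ whose favorite candidate is located at $y^*$, and suppose the other $n-1$ agents submit some fixed truthful votes. Let $v_i$ denote the outcome when our agent also votes for $y^*$. I will use the standard order-statistic fact that replacing one element of a multiset by a larger (resp.\ smaller) element moves the $i$-th order statistic weakly right (resp.\ left); this is immediate by tracking the counting function $N(t)=\#\{\text{votes}\le t\}$, which moves in the opposite direction, combined with the identity that the $i$-th order statistic equals $\min\{t:N(t)\ge i\}$.

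I then case-split on the position of $v_i$ relative to $y^*$. If $v_i=y^*$, the agent attains their favorite candidate and cannot improve. If $v_i>y^*$, then since $x$ lies in the Voronoi cell of $y^*$ on the line we have $x\le (y^*+y^+)/2\le y^+\le v_i$, where $y^+$ is the candidate immediately right of $y^*$; so the outcome is already to the right of $x$. A deviation to a vote $v'>y^*$ shifts the outcome weakly further right by the order-statistic fact, hence weakly further from $x$. The delicate subcase is $v'<y^*$, where the order-statistic fact alone only gives $v_i'\le v_i$. Here I use that the agent's truthful vote $y^*$ occupies some position $j\le i-1$ in the sorted vote list, so the $(i-1)$-st order statistic is already $\ge y^*$; a short counting argument then shows that the $+1$ bump of $N$ on the interval $[v',y^*)$ cannot make $N$ reach $i$ there, and therefore $v_i'=v_i$. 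The case $v_i<y^*$ is fully symmetric.

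The technical heart of the argument is this last subcase, which rules out the concern that deviating ``toward'' the current outcome might pull it closer to the agent. Once deterministic truthfulness of each percentile mechanism is established, the WPV mechanism is by definition a probability distribution over these deterministic truthful mechanisms, and hence universally truthful.
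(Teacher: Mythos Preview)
Your proof is correct. Both you and the paper reduce universal truthfulness to the statement that every fixed percentile mechanism is deterministic truthful; the paper makes this explicit only at the end (``regardless of the results of the random bits''), whereas you state it up front. The case analyses are organized dually: the paper fixes one deviation $a_i\to a_i'$ (say to the left), identifies the contiguous block of sorted positions $j{+}1,\dots,i$ whose order statistics shift, and checks that each shifted value is no closer to $x_i$ (using that $y(a_i)$ is the closest candidate when the outcome lies on the far side of $x_i$); you instead fix one percentile, case on whether the truthful outcome $v_i$ lies left of, at, or right of $y^*$, and then treat deviations in each direction separately. Your ``delicate subcase''---that deviating to $v'<y^*$ when $v_i>y^*$ leaves $v_i$ unchanged because the agent's own vote already forces $N(t)\le i-2$ on $[v',y^*)$---is exactly the content of the paper's observation that positions outside the shifted block are unaffected. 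The two arguments are equally elementary and prove the same thing; yours is slightly more explicit about the order-statistic mechanics, the paper's is a bit shorter.
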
	
	The proof is straightforward and deferred to appendix \ref{prf:wpv}.
	
	\begin{definition}[Cumulative count of a candidate]
		Given a voting profile $\bara$, the cumulative count of candidate $\calc_i$ is the amount of agents who voted for any candidate who is not located to the right of $y_i$, that is: $t(i)=|j \in N:y(a_j)\leq y_i|$. 
	\end{definition}
	
	\begin{definition}[Spike Mechanism] \label{def:spike}
		The spike mechanism is a WPV mechanism \footnote[3]{Spike is a WPV mechanism since it chooses the $i$'th percentile vote with some probability $p_i$, and $p_i$ does not depend on the reports.} that for each voting profile $\bara$, chooses candidate $\calc_i$ according to the following cumulative distribution function:
		\begin{eqnarray*}				
			F (i) = \begin{cases}
				\frac{t(i)}{2(n-t(i))} & \quad \text{if }  t(i) \leq n/2 \\
				1.5 - \frac{n}{2t(i)} & \quad \text{if } t(i) > n/2
			\end{cases}
		\end{eqnarray*}
	\end{definition}
	The mechanism is named after the shape of the density function it creates (see Figure \ref{fig:spike-prob}).
	Recognize that the result of the mechanism depends on the amount of votes that each candidate received and on the order of the candidate along the line, but not on the distances between the candidates. 
	%Namely, the mechanism favors candidates in the center over those in the extremes, but the challenge is to choose the density function carefully enough for the approximation ratio to hold when taking into account misreports by candidates in the center or by candidates in the extremes.		
	
	\begin{observation}
		Spike defines a symmetric distribution, that is: $\forall i: F(i)=1-F(n-i)$.
	\end{observation}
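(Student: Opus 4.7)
The plan is to verify the identity $F(i) + F(n-i) = 1$ by direct substitution into the two-branch definition from Definition~\ref{def:spike}. Since $F(i)$ depends on $i$ only through the cumulative count $t(i) \in \{0, 1, \ldots, n\}$, the most convenient formulation is to treat $F$ as a piecewise function of $t$, namely $F(t) = \frac{t}{2(n-t)}$ when $t \leq n/2$ and $F(t) = 1.5 - \frac{n}{2t}$ when $t > n/2$; the symmetry claim then reduces to showing $F(t) + F(n-t) = 1$ for every $t \in \{0, 1, \ldots, n\}$.

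I would then do a case split according to which side of $n/2$ the value of $t$ lies on. If $t < n/2$, then $n - t > n/2$, so the first branch applies to $F(t)$ while the second applies to $F(n-t)$; summing gives $\frac{t}{2(n-t)} + 1.5 - \frac{n}{2(n-t)} = 1.5 + \frac{t-n}{2(n-t)} = 1.5 - \frac{1}{2} = 1$. The case $t > n/2$ is handled symmetrically by swapping the roles of $t$ and $n-t$ (and using the other branch for each term). When $n$ is even, the boundary case $t = n/2$ is immediate, since both $F(t)$ and $F(n-t)$ evaluate to $\frac{1}{2}$ by the first branch.

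There is no real obstacle here: the verification is a short algebraic computation, and the two branches of the definition were evidently engineered precisely so that they pair up under the involution $t \mapsto n-t$. The only item deserving a brief check is that the two pieces agree at the boundary $t = n/2$ (both give $\frac{1}{2}$), which makes the case split consistent and ensures that $F(t) + F(n-t)$ is well-defined regardless of which branch one uses at the midpoint.
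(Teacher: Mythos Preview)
Your proposal is correct and follows essentially the same approach as the paper: a direct algebraic verification that the two branches of the piecewise definition pair up under $t \mapsto n-t$. If anything, your treatment is slightly more careful, since you explicitly handle the boundary case $t = n/2$ and note that the two branches agree there, whereas the paper's proof tacitly applies the second-branch formula at $n-i = n/2$.
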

	\begin{proof}
		Let $1 \leq i \leq n/2$, then:
		\begin{eqnarray*}		
			&& F(i)=\frac{i}{2(n-i)}
		\end{eqnarray*}
		\begin{eqnarray*}
			1-F(n-i)&=& 1-\left(1.5-\frac{n}{2(n-i)}\right)=\frac{n}{2(n-i)}-\frac{1}{2}=\frac{n-(n-i)}{2(n-i)}= \frac{i}{2(n-i)}
		\end{eqnarray*}	
	\end{proof}
	
	We now define a few terms needed for the proof of the approximation ratio. Recall that $\calc_{opt}$ is uniquely defined for a location profile $\mathbf{x}$, since ties are broken in favor of the leftmost candidate.
	We denote the set of borders $\{b_i\}_{i=1}^{m-1}$ by $B$.	
	
	\begin{definition} [Tight profile of $\mathbf{x}$, see Figure \ref{fig:example-tight-before}] 
		Given a location profile $\mathbf{x}$, the profile $\mathbf{x}'$ is said to be the \textit{tight profile of $\mathbf{x}$} if it moves all agents who are not on a border as close as possible to $\calc_{opt}$ within their zones, that is:
		\begin{eqnarray*}		
			\forall i: x_i' = \begin{cases}
				x_i & \quad \text{if }  x_i \in B \\
				y_{opt} & \quad \text{if }  x_i \in \calv_{opt} \setminus B \\
				b_j & \quad \text{if } x_i \in \calv_j \setminus B \text{ and } j < \OPT \\
				b_{j-1} & \quad \text{if } x_i \in \calv_j \setminus B \text{ and } j > \OPT 
			\end{cases}
		\end{eqnarray*}	
	\end{definition}
	
		\begin{figure}[t]
		
		\setlength{\fboxrule}{0.5 pt}
		\noindent \fbox{\noindent\makebox[1\textwidth][c]
			{\begin{minipage}{1\textwidth}					
					\centering					
					\caption{Tight profile}	
					
					\begin{subfigure}{1\textwidth}
						\centering						
						\begin{tikzpicture}[y=.3cm, x=.3cm,font=\sffamily]
						
						\draw[-, thick] (0,0) -- (35,0);
						
						\draw[fill=white] (4, 0) circle (0.5) node [below, yshift=-0.3cm] {$y_1$};
						\draw[fill=white] (14, 0) circle (0.5) node [below, yshift=-0.3cm] {$y_2=y_{opt}$};	
						\draw[fill=white] (22, 0) circle (0.5) node [below, yshift=-0.3cm] {$y_3$};
						\draw[fill=white] (30, 0) circle (0.5) node [below, yshift=-0.3cm] {$y_4$};
						
						\draw (9, -2) -- (9, 2) node [above, align=center]{$b_1=IB(x_1)$};
						\draw (18, -2) -- (18, 2) node [above, align=center]{$b_2=IB(x_3)$};
						\draw (26, -0.5) -- (26, 2) node [above, align=center]{$b_3$};				
						
						\draw[fill=black] (2, 0) circle (0.3) node [below, yshift=-0.3cm] {$x_1$};
						\draw[fill=black] (10, 0) circle (0.3) node [below, yshift=-0.3cm] {$x_2$};
						\draw[fill=black] (20, 0) circle (0.3) node [below, yshift=-0.3cm] {$x_3$};
						\draw[fill=black] (26, 0) circle (0.3) node [below, yshift=-0.3cm] {$x_4$};
						\end{tikzpicture}	
						
						\caption{The original profile $\bx$. }
						\label{fig:example-tight-before}
					\end{subfigure}

					\begin{subfigure}{1\textwidth}
						\centering		
						
						\begin{tikzpicture}[y=.3cm, x=.3cm,font=\sffamily]
						
						\draw[-, thick] (0,0) -- (35,0);
						
						\draw[fill=white] (4, 0) circle (0.5) node [below, yshift=-0.3cm] {$y_1$};
						\draw[fill=white] (14, 0) circle (0.5) node [below, yshift=-0.3cm] {$y_2=y_{opt}$};	
						\draw[fill=white] (22, 0) circle (0.5) node [below, yshift=-0.3cm] {$y_3$};
						\draw[fill=white] (30, 0) circle (0.5) node [below, yshift=-0.3cm] {$y_4$};
						
						\draw (9, -0.5) -- (9, 2) node [above, align=center]{$b_1$};
						\draw (18, -0.5) -- (18, 2) node [above, align=center]{$b_2$};
						\draw (26, -0.5) -- (26, 2) node [above, align=center]{$b_3$};				
						
						\draw[fill=black] (9, 0) circle (0.3) node [below, yshift=-0.3cm] {$x_1$};
						\draw[fill=black] (14, 0) circle (0.3) node [below, yshift=0.7cm] {$x_2$};
						\draw[fill=black] (18, 0) circle (0.3) node [below, yshift=-0.3cm] {$x_3$};
						\draw[fill=black] (26, 0) circle (0.3) node [below, yshift=-0.3cm] {$x_4$};
						
						\draw[->, thick] (2.5,0.5) .. controls (5.75,1) .. (8.5, 0.5);	
						\draw[->, thick] (10.5,0.5) .. controls (12,1) .. (13.5, 0.5);	
						\draw[->, thick] (19.75,0.5) .. controls (19,0.75) .. (18.25, 0.5);					
						\end{tikzpicture}
						\caption{The tight profile of $\bx$. The arrows represent the movements between the two profiles. }
						\label{fig:example-tight-after}		
					\end{subfigure}	
					
				\end{minipage}}} \par\setlength{\fboxrule}{0.2pt}				
				
			\end{figure}
			
			\begin{definition} [Left-compressed profile of $\mathbf{x}$, see Figure \ref{fig:example-lc-before}] 
				Given a tight location profile $\mathbf{x}$, a left-compressed profile of $\mathbf{x}$ moves all the agents on the leftmost border to their neighboring border on the right, if this border is left of $y_{opt}$. Formally: let the location of the leftmost agent be $x_1=b_j$, then the left-compressed profile of $\mathbf{x}$ is:
				\begin{eqnarray*}		
					\forall i: x_i' = \begin{cases}
						b_{j+1} & \quad \text{if }  (x_i = b_j) \wedge (b_{j+1} < y_{opt}) \\
						x_i & \quad \text{otherwise.}
					\end{cases}
				\end{eqnarray*}		
			\end{definition}
			
			Note that the left compressed profile of a tight profile is also a tight profile. 
			The right-compressed profile of $\mathbf{x}$ is defined in a completely symmetrical fashion.

			\begin{figure}[t]
				
				\setlength{\fboxrule}{0.5 pt}
				\noindent \fbox{\noindent\makebox[1\textwidth][c]
					{\begin{minipage}{1\textwidth}
							
							\centering	
							\caption{Left-compressed profile}
							\begin{subfigure}{1\textwidth}
								\centering
								
								\begin{tikzpicture}[y=.3cm, x=.3cm,font=\sffamily]
								
								\draw[-, thick] (0,0) -- (35,0);
								
								\draw[fill=white] (31, 0) circle (0.5) node [below, yshift=-0.3cm] {$y_5$};
								\draw[fill=white] (21, 0) circle (0.5) node [below, yshift=0.7cm] {$y_4=y_{opt}$};	
								\draw[fill=white] (15, 0) circle (0.5) node [below, yshift=-0.3cm] {$y_3$};
								\draw[fill=white] (8, 0) circle (0.5) node [below, yshift=-0.3cm] {$y_2$};
								\draw[fill=white] (3, 0) circle (0.5) node [below, yshift=-0.3cm] {$y_1$};
								
								\draw (26, -0.5) -- (26, 2) node [above, align=center]{$b_4$};
								\draw (18, -0.5) -- (18, 2) node [above, align=center]{$b_3$};
								\draw (11.5, -0.5) -- (11.5, 2) node [above, align=center]{$b_2$};				
								\draw (5.5, -0.5) -- (5.5, 2) node [above, align=center]{$b_1$};								
								
								\draw[fill=black] (26, 0) circle (0.3) node [below, yshift=-0.3cm] {$x_5$};	
								\draw[fill=black] (21, 0) circle (0.3) node [below, yshift=-0.3cm] {$x_4$};
								\draw[fill=black] (18, 0) circle (0.3) node [below, yshift=-0.3cm] {$x_3$};
								\draw[fill=black] (11.5, 0) circle (0.3) node [below, yshift=-0.3cm] {$x_2$};
								\draw[fill=black] (5.5, 0) circle (0.3) node [below, yshift=-0.3cm] {$x_1$};

								\end{tikzpicture}	
								
								\caption{$\mathbf{x}$, the original tight profile.}
								\label{fig:example-lc-before}
							\end{subfigure}
							\begin{subfigure}{1\textwidth}
								\centering
								
								\begin{tikzpicture}[y=.3cm, x=.3cm,font=\sffamily]
								
								\draw[-, thick] (0,0) -- (35,0);
								
								\draw[fill=white] (31, 0) circle (0.5) node [below, yshift=-0.3cm] {$y_5$};
								\draw[fill=white] (21, 0) circle (0.5) node [below, yshift=0.7cm] {$y_4=y_{opt}$};	
								\draw[fill=white] (15, 0) circle (0.5) node [below, yshift=-0.3cm] {$y_3$};
								\draw[fill=white] (8, 0) circle (0.5) node [below, yshift=-0.3cm] {$y_2$};
								\draw[fill=white] (3, 0) circle (0.5) node [below, yshift=-0.3cm] {$y_1$};
								
								\draw (26, -0.5) -- (26, 2) node [above, align=center]{$b_4$};
								\draw (18, -0.5) -- (18, 2) node [above, align=center]{$b_3$};
								\draw (11.5, -0.5) -- (11.5, 2) node [above, align=center]{$b_2$};				
								\draw (5.5, -0.5) -- (5.5, 2) node [above, align=center]{$b_1$};								
								
								\draw[fill=black] (26, 0) circle (0.3) node [below, yshift=-0.3cm] {$x_5$};	
								\draw[fill=black] (21, 0) circle (0.3) node [below, yshift=-0.3cm] {$x_4$};
								\draw[fill=black] (18, 0) circle (0.3) node [below, yshift=-0.3cm] {$x_3$};
								\draw[fill=black] (11.5, 0) circle (0.3) node [below, yshift=-0.3cm] {$x_1,x_2$};
								
								\draw[->, thick] (6,0.5) .. controls (8.5,1) .. (11, 0.5);		
								\end{tikzpicture}
								
								\caption{The left-compressed profile of $\mathbf{x}$ (after $x_1$ moves to $b_2$).}
								\label{fig:example-lc-after}
							\end{subfigure}

						\end{minipage}}} \par\setlength{\fboxrule}{0.2pt}	
						
	\end{figure}
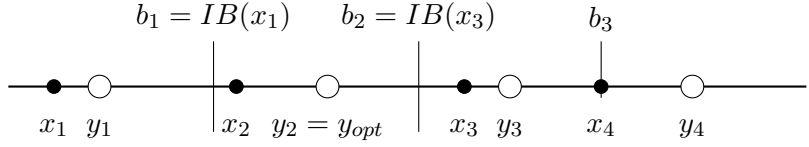
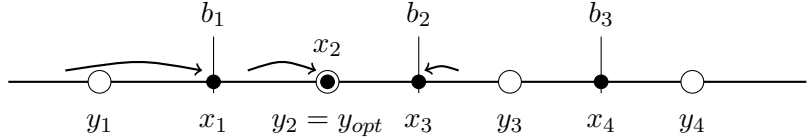	
	
	After compressing location profiles, there are likely to be locations in which there are many agents. 
	We therefore use the following notation: the location profile is written as $\mathbf{x}=\{(\hat{x}_1,n_1),...,(\hat{x}_k,n_k)\}$, which means that for each $j:1 \leq j \leq k$, there are $n_j$ agents located at $\hat{x}_j$. 		
	
	We now use these definitions to prove the main result of this section:
	
	\begin{theorem} \label{thm-spike}
		The spike mechanism is universally truthful, and it achieves an approximation ratio of 2 on $\Re$.
	\end{theorem}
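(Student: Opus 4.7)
The universal truthfulness is immediate from Lemma \ref{lemma:WPV-truthful}, since spike is a WPV mechanism by construction. For the approximation ratio of $2$, my plan is a sequence of monotonicity reductions that collapse any location profile into a canonical compressed form, followed by a direct calculation on this form using the explicit spike CDF.

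Stage 1 (reduction to tight profiles). Given any location profile $\bx$, I would show that replacing $\bx$ with its tight profile $\bx'$ weakly increases the ratio $\SC(\text{spike},\bx)/\SC(\OPT,\bx)$. The key observation is that the probabilities spike assigns depend only on the vote counts $t(i)$, and moving an agent within its own voting zone leaves these counts (and hence the distribution over elected candidates) unchanged. It therefore suffices to track how a single agent's cost contribution changes: for an agent in $\calv_j$ with $j \ne \OPT$ moving to the border adjacent to $\calv_{opt}$, the decrease in $|x_i - y_{opt}|$ equals the decrease in $|x_i - y_k|$ for every candidate $\calc_k$ on the $\calv_{opt}$-side, and is at most the corresponding increase on the far side. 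Aggregating these per-candidate changes against the spike probabilities, combined with the target inequality $\SC(\text{spike}) \le 2\,\SC(\OPT)$, yields the desired monotonicity.

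Stage 2 (left-right compression). Starting from a tight profile, I would next show that a left-compression (pushing the leftmost cluster to its adjacent border, provided that border is still left of $y_{opt}$) also weakly increases the ratio. The symmetric statement for right-compressions follows from the symmetry observation $F(i) = 1 - F(n-i)$. Iterating both directions, any tight profile reduces to a canonical form in which agents sit only at $b_{\OPT-1}$, $y_{opt}$, and $b_{\OPT}$, with the obvious modifications when $\calc_{opt}$ is at an extreme of the candidate sequence.

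Stage 3 (direct analysis) and main obstacle. On the canonical compressed profile $\{(b_{\OPT-1}, n_L), (y_{opt}, n_M), (b_{\OPT}, n_R)\}$, we have $\SC(\OPT,\bx) = n_L(y_{opt} - b_{\OPT-1}) + n_R(b_{\OPT} - y_{opt})$, while $\SC(\text{spike},\bx)$ expands as a sum over candidates weighted by the increments $F(k) - F(k-1)$. Using $b_k = (y_k + y_{k+1})/2$ to express every candidate-to-agent distance in terms of adjacent border gaps, and substituting the explicit piecewise form of $F$, the sum telescopes. The piecewise-rational shape of $F$---a sharp spike near percentile $n/2$ with lighter tails---was designed precisely so that extreme-candidate contributions (which sink random dictator) and near-median contributions (which sink the median mechanism) are simultaneously controlled by a factor of $2$. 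This final calculation is the main obstacle: one must handle an arbitrary number of active candidates on each side of $\calc_{opt}$. I would proceed by \emph{backwards induction} on the number of such candidates, peeling off the outermost active candidate at each step and using the symmetry of $F$ to restrict attention to (say) the case $t(\OPT) \le n/2$, until reducing to a small base case solvable by direct inspection.
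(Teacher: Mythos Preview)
Your high-level architecture---tight-profile reduction, then iterated left/right compression, then a base-case calculation---is exactly the paper's route. Two points, however, are off.

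\textbf{Stage 2 is stated too strongly.} You claim that a left-compression ``weakly increases the ratio''. The paper does \emph{not} prove this, and in fact it need not hold: from the paper's actual inequality $\SC(S,\bx)-\SC(S,\bx')\le 2\Delta$ with $\Delta=\SC(\OPT,\bx)-\SC(\OPT,\bx')$, one can easily build instances where the ratio strictly \emph{drops} under compression. What the paper proves (Lemma~\ref{lemma:inwards}) is only the implication ``ratio $\le 2$ for $\bx'$ $\Rightarrow$ ratio $\le 2$ for $\bx$''; this is the backwards-induction step. Crucially, establishing even this weaker statement is where the specific shape of the spike CDF is used: the proof boils down to the bound $p_1(n-n_1)\le n_1/2$, which is precisely what the piecewise definition of $F$ was engineered to satisfy. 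You locate the role of the spike CDF in Stage~3, but it is already essential here.

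\textbf{Stage 3 is misidentified.} Once compression terminates, agents sit only at $b_{\OPT-1}$, $y_{opt}$, $b_{\OPT}$, and (via Lemma~\ref{lemma:border-outward}, border agents voting outward in the worst case) at most three candidates receive any votes. There is no ``arbitrary number of active candidates on each side'' left to peel off; the peeling is exactly what the compressions already did. The base case (Lemma~\ref{lemma:three-candidates}) is a direct, if tedious, calculation with three terms $p_L,p_C,p_R$ and two distance parameters, split according to whether the median vote lands on $y_C$ or on a border. So your proposed ``backwards induction on the number of active candidates'' in Stage~3 is redundant with Stage~2, and the real work you should anticipate is (i) proving the correct implication form of the compression lemma using the explicit $F$, and (ii) the two-case algebraic verification on the three-candidate profile.
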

	\begin{proof}
		Spike is a WPV mechanism, so according to Lemma \ref{lemma:WPV-truthful} it is universally truthful.
		
		The analysis of the approximation ratio is more involved and is based on backwards induction which follows these steps (see Figure \ref{fig:spike}):
		\begin{enumerate}
			\item Figure \ref{fig:spike-1}: Start with a general location profile $\mathbf{x}$, and compute its optimal candidate, $\calc_{opt}$.
			\item Figure \ref{fig:spike-2}: Let $\bx^{(1)}$ be the tight profile of $\mathbf{x}$. We show that the transition from $\bx$ to $\bx^{(1)}$ cannot decrease the approximation ratio (Lemma \ref{lemma:tight}).
			\item Figure \ref{fig:spike-3}: Let  $\mathbf{x}^{(2)}$ be the left-compression of  $\mathbf{x}^{(1)}$. We show that if the ratio of $\mathbf{x}^{(2)}$ is not higher than 2, then so is the ratio of $\mathbf{x}^{(1)}$ (Lemma \ref{lemma:inwards}).
			\item Figure \ref{fig:spike-4}: Repeat left and right compressions until we can no longer compress. At this stage, the profile is tight with at most 3 active candidates, and we note this profile $\bx^{(3)}$. We show that the approximation ratio of $\bx^{(3)}$ is not above 2 (Lemma \ref{lemma:three-candidates}).
		\end{enumerate}

		\begin{figure}[h!]
			
			\setlength{\fboxrule}{0.5 pt}
			\noindent \fbox{\noindent\makebox[1\textwidth][c]
				{\begin{minipage}{1\textwidth}
						
						\centering	
						\caption{Illustration of the proof of Theorem \ref{thm-spike}:}
						\begin{subfigure}{1\textwidth}
							\centering

							\begin{tikzpicture}[y=.3cm, x=.3cm,font=\sffamily]
							\draw[-, thick] (0,0) -- (40,0);

							\draw[fill=white] (1.5, 0) circle (0.5) node [above, yshift=0.3cm] {$y_1$};
							\draw[fill=white] (8, 0) circle (0.5) node [above, yshift=0.3cm] {$y_2$};	
							\draw[fill=white] (15, 0) circle (0.5) node [above, yshift=0.3cm] {$y_3$};
							\draw[fill=white] (21, 0) circle (0.5) node [above, yshift=0.3cm] {$y_4=y_{opt}$};
							\draw[fill=white] (29, 0) circle (0.5) node [above, yshift=0.3cm] {$y_5$};	
							\draw[fill=white] (38, 0) circle (0.5) node [above, yshift=0.3cm] {$y_{6}$};
							
							\draw (4.75, -2) -- (4.75, 2) node [above, align=center]{$b_1$};
							\draw (11.5, -2) -- (11.5, 2) node [above, align=center]{$b_2$};
							\draw (18, -0.5) -- (18, 2) node [above, align=center]{$b_3$};	
							\draw (25, -0.5) -- (25, 2) node [above, align=center]{$b_4$};	
							\draw (33.5, -2) -- (33.5, 2) node [above, align=center]{$b_5$};

							\draw[fill=black] (3, 0) circle (0.3) node [below, yshift=-0.3cm] {$x_1$};
							\draw[fill=black] (5.5, 0) circle (0.3) node [below, yshift=-0.3cm] {$x_2$};
							\draw[fill=black] (7, 0) circle (0.3) node [below, yshift=-0.3cm] {$x_3$};
							\draw[fill=black] (18, 0) circle (0.3) node [below, yshift=-0.3cm] {$x_4$};
							\draw[fill=black] (23, 0) circle (0.3) node [below, yshift=-0.3cm] {$x_{5}$};
							\draw[fill=black] (25, 0) circle (0.3) node [below, yshift=-0.3cm] {$x_{6}$};
							\draw[fill=black] (32, 0) circle (0.3) node [below, yshift=-0.3cm] {$x_{7}$};			
							\draw[fill=black] (37, 0) circle (0.3) node [below, yshift=-0.3cm] {$x_8$};
							
							\end{tikzpicture} 		
							
							\caption{Initial state: A general location profile $\bx$}
							\label{fig:spike-1}
						\end{subfigure}
						\begin{subfigure}{1\textwidth}
							\centering
							
							\begin{tikzpicture} [y=.3cm, x=.3cm,font=\sffamily]
							\draw[-, thick] (0,0) -- (40,0);

							\draw[fill=white] (1.5, 0) circle (0.5) node [above, yshift=0.3cm] {$y_1$};
							\draw[fill=white] (8, 0) circle (0.5) node [above, yshift=0.3cm] {$y_2$};	
							\draw[fill=white] (15, 0) circle (0.5) node [above, yshift=0.3cm] {$y_3$};
							\draw[fill=white] (21, 0) circle (0.5) node [above, yshift=0.3cm] {$y_4=y_{opt}$};
							\draw[fill=white] (29, 0) circle (0.5) node [above, yshift=0.3cm] {$y_5$};	
							\draw[fill=white] (38, 0) circle (0.5) node [above, yshift=0.3cm] {$y_{6}$};
							
							\draw (4.75, -0.5) -- (4.75, 2) node [above, align=center]{$b_1$};
							\draw (11.5, -0.5) -- (11.5, 2) node [above, align=center]{$b_2$};
							\draw (18, -0.5) -- (18, 2) node [above, align=center]{$b_3$};	
							\draw (25, -0.5) -- (25, 2) node [above, align=center]{$b_4$};	
							\draw (33.5, -0.5) -- (33.5, 2) node [above, align=center]{$b_5$};
							
							\draw[fill=black] (4.75, 0) circle (0.3) node (1){};
							\draw[fill=black] (11.5, 0) circle (0.3) node (2) {};
							\draw[fill=black] (18, 0) circle (0.3) node (3) {};
							\draw[fill=black] (21, 0) circle (0.3) node (4) {};
							\draw[fill=black] (25, 0) circle (0.3) node (5) {};
							\draw[fill=black] (33.5, 0) circle (0.3) node (6) {};
							
							\node [below = 0 cm of 1]{$\hat{x}_1$};
							\node [below = 0.4 cm of 1]{$n_1=1$};			
							\node [below = 0 cm of 2]{$\hat{x}_2$};
							\node [below = 0.4 cm of 2]{$n_2=2$};	
							\node [below = 0.2 cm of 3]{$\hat{x}_{3}$};
							\node [below = 0.7 cm of 3]{$n_{3}=1$};	
							\node [below = 0 cm of 4]{$\hat{x}_{4}$};
							\node [below = 0.3 cm of 4]{$n_{4}=1$};	
							\node [below = 0.2 cm of 5]{$\hat{x}_{5}$};
							\node [below = 0.7cm of 5]{$n_{5}=2$};	
							\node [below = 0 cm of 6]{$\hat{x}_{6}$};
							\node [below = 0.4 cm of 6]{$n_{6}=1$};				
							
							\draw[->, thick] (3.2,0.5) .. controls (3.9,0.7) ..(4.65,0.5);	
							\draw[->, thick] (5.75,0.5) .. controls (8.5,1) ..(11.25,0.5);
							\draw[->, thick] (7.25,0.5) .. controls (9.25,1) ..(11.25,0.5);
							\draw[->, thick] (22.8,0.5) .. controls (22,0.7) ..(21.2,0.5);	
							\draw[->, thick] (31.5,0.5) .. controls (28.5,1) ..(25.5,0.5);	
							\draw[->, thick] (36.5,0.5) .. controls (35.25,1) ..(34,0.5);	
							\end{tikzpicture}

							\caption{$\mathbf{x}^{(1)}$: The tight profile of $\mathbf{x}$, using the notation in which there are $n_i$ agents at point $\hat{x}_i$. 
							}							
							\label{fig:spike-2}
						\end{subfigure}	
						
						\begin{subfigure}{1 \textwidth}
							\centering

							\begin{tikzpicture}
							[y=.3cm, x=.3cm,font=\sffamily]
							\draw[-, thick] (0,0) -- (40,0);

							\draw[fill=white] (1.5, 0) circle (0.5) node (11)[above, yshift=0.3cm] {$y_1$};
							\draw[fill=white] (8, 0) circle (0.5) node (12)[above, yshift=0.3cm] {$y_2$};	
							\draw[fill=white] (15, 0) circle (0.5) node (13)[above, yshift=0.3cm] {$y_3$};
							\draw[fill=white] (21, 0) circle (0.5) node (14)[above, yshift=0.3cm] {$y_4=y_{opt}$};
							\draw[fill=white] (29, 0) circle (0.5) node (15)[above, yshift=0.3cm] {$y_5$};	
							\draw[fill=white] (38, 0) circle (0.5) node (16)[above, yshift=0.3cm] {$y_{6}$};
							
							\draw (4.75, -0.5) -- (4.75, 2) node [above, align=center]{$b_1$};
							\draw (11.5, -0.5) -- (11.5, 2) node [above, align=center]{$b_2$};
							\draw (18, -0.5) -- (18, 2) node [above, align=center]{$b_3$};	
							\draw (25, -0.5) -- (25, 2) node [above, align=center]{$b_4$};	
							\draw (33.5, -0.5) -- (33.5, 2) node [above, align=center]{$b_5$};
							
							\draw[fill=black] (11.5, 0) circle (0.3) node (2) {};
							\draw[fill=black] (18, 0) circle (0.3) node (3) {};
							\draw[fill=black] (21, 0) circle (0.3) node (4) {};
							\draw[fill=black] (25, 0) circle (0.3) node (5) {};
							\draw[fill=black] (33.5, 0) circle (0.3) node (6) {};

							\node [below = 0 cm of 2]{$\hat{x}_1$};
							\node [below = 0.4 cm of 2]{$n_1=3$};	
							\node [below = 0.2 cm of 3]{$\hat{x}_{2}$};
							\node [below = 0.7 cm of 3]{$n_{2}=1$};	
							\node [below = 0 cm of 4]{$\hat{x}_{3}$};
							\node [below = 0.3 cm of 4]{$n_{3}=1$};	
							\node [below = 0.2 cm of 5]{$\hat{x}_{4}$};
							\node [below = 0.7cm of 5]{$n_{4}=2$};	
							\node [below = 0 cm of 6]{$\hat{x}_{5}$};
							\node [below = 0.4 cm of 6]{$n_{5}=1$};		
							
							\draw[->, thick] (5.25,0.5) .. controls (8.1,1) ..(11,0.5);					
							\end{tikzpicture}

							\caption{$\mathbf{x}^{(2)}$: The left-compressed profile of $\mathbf{x}^{(1)}$ (moves $n_1$ agents from $b_1$ to $b_2$).}			
							
							\label{fig:spike-3}
						\end{subfigure}	
						
						\begin{subfigure}{1 \textwidth}
							\centering
							
							\begin{tikzpicture}
							[y=.3cm, x=.3cm,font=\sffamily]
							\draw[-, thick] (0,0) -- (40,0);
							
							\draw[fill=white] (1.5, 0) circle (0.5) node (11)[above, yshift=0.3cm] {$y_1$};
							\draw[fill=white] (8, 0) circle (0.5) node (12)[above, yshift=0.3cm] {$y_2$};	
							\draw[fill=white] (15, 0) circle (0.5) node (13){};
							\node [above=0cm of 13] {$y_L$};
							\node [above=0.5cm of 13]{$y_{3}$};			
							\draw[fill=white] (21, 0) circle (0.5) node (14){};
							\node [above=0.2cm of 14] {$y_C$};
							\node [above=0.7cm of 14]{$y_4=y_{opt}$};						
							\draw[fill=white] (29, 0) circle (0.5) node (15) {};	
							\node [above=0cm of 15] {$y_R$};
							\node [above=0.5cm of 15]{$y_{5}$};			
							
							\draw[fill=white] (38, 0) circle (0.5) node (16)[above, yshift=0.3cm] {$y_{6}$};
							
							\draw (4.75, -0.5) -- (4.75, 1.5) node [above, align=center]{};
							\draw (11.5, -0.5) -- (11.5, 1.5) node [above, align=center]{};
							\draw (18, -0.5) -- (18, 1.5) node [above, align=center]{$b_L$};	
							\draw (25, -0.5) -- (25, 1.5) node [above, align=center]{$b_C$};	
							\draw (33.5, -0.5) -- (33.5, 1.5) node [above, align=center]{};			
							
							\draw[fill=black] (18, 0) circle (0.3) node (4) {}; 		
							\node [below left= 0.6cm of 4] (5) {$\hat{x}_1$};  	
							\node [below= -0.1cm of 5] {$L=\displaystyle\sum_{i=1}^{opt-1}n_i=4$};
							\draw[->] (15,-2) -- (17.5,-0.5);
							\draw[fill=black] (21, 0) circle (0.3) node (6)[below, yshift=-0.2cm] {$\hat{x}_2$};
							\node [below= -0.2cm of 6] (7){$C=n_{opt}=1$};
							%\draw[fill=black] (25, 0) circle (0.3) node [below right, yshift=-0.6cm] {$R=\sum_{i=opt+1}^n n_i$};
							
							\draw[fill=black] (25, 0) circle (0.3) node (8) {}; 		
							\node [below right= 0.6cm of 8] (9) {$\hat{x}_3$};  	
							\node [below= -0.1cm of 9] {$R=\displaystyle\sum_{i=opt+1}^n n_i=3$};
							
							\draw[->] (28,-2) -- (25.5,-0.5);
							
							\draw (15, 4.8) -- (15, 5.2);	   	
							\draw[dim] (15,5) -- (18,5) node[midway,above] {$\beta$};      		
							\draw (18, 4.8) -- (18, 5.2);	   	
							\draw[dim] (18,5) -- (21,5) node[midway,above] {$\beta$};      	
							\draw (21, 4.8) -- (21, 5.2);	   	
							\draw[dim] (21,5) -- (25,5) node[midway,above] {$1$};      	
							\draw (25, 4.8) -- (25, 5.2);	   	
							\draw[dim] (25,5) -- (29,5) node[midway,above] {$1$};      		
							\draw (29, 4.8) -- (29, 5.2);	
							\end{tikzpicture}
							
							\caption{$\mathbf{x}^{(3)}$: The final profile after reapplying left and right compressions on $\mathbf{x}^{(2)}$ repeatedly. The active candidates are denoted $y_L$, $y_C$ and $y_R$, the amount of agents as $L$, $C$ and $R$ respectively, and we scale the distances to $\frac{b_C-y_C}{2}=1$ and $\frac{y_C-b_L}{2}=\beta$.}
														
							\label{fig:spike-4}
						\end{subfigure}		
						
						\label{fig:spike}		
					\end{minipage}}} \par\setlength{\fboxrule}{0.2pt}

				\end{figure}

				Proving these steps is sufficient to complete the proof of the theorem, since in Lemma \ref{lemma:three-candidates} we show that the ratio of  $\mathbf{x}^{(3)}$ is not higher than 2 (the base case). According to Lemma \ref{lemma:inwards}, this implies that the ratio of  $\mathbf{x}^{(1)}$ (prior to all of the compressions) is also not higher than 2 (the induction steps). Since the ratio of the $\mathbf{x}$  is not higher than that of $\mathbf{x}^{(1)}$, this means that the approximation ratio of $\mathbf{x}$ is not above 2, as needed.
				
				Notice that throughout this process $\calc_{opt}$ remains the optimal candidate, since it was optimal in the original profile $\mathbf{x}$, and in each step all agents move towards it, so the cost of any other candidate can decrease by no more than what the cost of $\calc_{opt}$ decreases. 
				
				Truthful reports to the spike mechanism (like any other voting mechanism) are not necessarily unique since an agent $i$ who is located on a border can report either of the two candidates closest to her. 
				For these cases of ties, we show that the worst-case ratio always occurs when the agents vote for the candidate located farther away from $\calc_{opt}$ (Lemma \ref{lemma:border-outward}, whose formal definition and proof are deferred to the appendix \ref{prf:tight}).
				
				We now present the lemmas formally. The proofs of the lemmas, which are given in the appendix, prove the backwards induction and conclude the proof of the theorem at large.
				
				\begin{lemma} \label{lemma:tight}
					Let $\mathbf{x}$ be an arbitrary location profile, let $\mathbf{x}'$ be the tight profile of $\mathbf{x}$ and let $M$ be an arbitrary WPV mechanism. Then the approximation ratio of $M$ given $\mathbf{x}'$ is not lower than that of $M$ given $\mathbf{x}$:
					\begin{eqnarray*}		
						\frac{\SC(M,\bx)}{\SC(\OPT,\bx)} 
						\leq \frac{\SC(M,\bx')}{\SC(\OPT,\bx')}.
					\end{eqnarray*}
				\end{lemma}	
				
				The previous lemma holds for any WPV mechanism, in particular for spike.			
				
				\begin{lemma} \label{lemma:inwards}
					Let $\mathbf{x}$ be a tight location profile, let $\mathbf{x}'$ be the left-compressed profile of $\mathbf{x}$ and let $S$ be the spike mechanism. Then if the approximation ratio of $S$ given $\mathbf{x}'$ is not higher than 2, then so is that of $S$ given $\mathbf{x}$: 	
					$$\frac{\SC(S,\bx')}{\SC(\OPT,\bx')} \leq 2 \Rightarrow  \frac{\SC(S,\bx)}{\SC(\OPT,\bx)} \leq 2$$
				\end{lemma}

				Since the cumulative function defining the spike mechanism is symmetrical, the claim can be trivially extended to right-compressions as well.
				%moving agents from the rightmost candidate: moving $n_k$ agents from $x_k$ to $x_{k-1}$.	
				
				After reapplying compressions on both sides, the resulting profile has agents in three locations at most (see Figure \ref{fig:spike-4}). The last lemma in the proof states that in this final stage, the ratio is not higher than 2:
				
				\begin{lemma} \label{lemma:three-candidates}
					Let $\bx$ be a tight location profile in which there are at most 3 active candidates: $y_{opt-1}<y_{opt}<y_{opt+1}$. The ratio of the $S$ given $\bx$ is not higher than 2: 
					$ \frac{\SC(S, \bx)}{\SC(\OPT,\bx)} \leq 2$.
				\end{lemma}	
				
			\end{proof}

\section{Additional Results for Randomized Mechanisms} \label{sec-randomized}

	\subsection{Lower Bounds}
	In this section show lower bounds of randomized mechanisms in different settings.
	When the network is $\Re$, we present a lower bound of 2 for any truthful in expectation mechanism, even if it is a location mechanism. 
	By the hierarchy presented in Theorem \ref{thm:classes}, this lower bound trivially holds for truthful in expectation ranking and voting mechanisms as well (see Figure \ref{fig:classes}).
	Ergo, spike is optimal over all truthful in expectation mechanisms.
	
	Additionally, we show a lower bound of 2 for any randomized ranking mechanism, even when the mechanism need not be truthful (in the non-strategic setting). 
	These results prove that the approximation ratio achieved by spike is tight. 
	For more general metric spaces the lower bound changes ---
	In $\Re^d$, we show a lower bound for any truthful voting mechanism of $3-\frac{2}{d+1}$. 
	We also present a lower bound of $7/3$ for any truthful ranking mechanism in $\Re^2$ (this bound also holds for $\Re^d$, for any $d>2$).
	
	We start by proving a helpful lemma. Informally, the lemma states that when there is an agent located on a border (and can therefore submit several different truthful actions to a ranking or voting mechanism), her cost should not change under any truthful report she submits.

	\begin{lemma} \label{lemma:border-equal}
		For any truthful in expectation ranking mechanism $M$ in any metric space, let $b_{i,j}$ be the border between ranking zones $\calr_i, \calr_j$.
		Let $\pi_i,\pi_j$ be the rankings consistent with $\calr_i,\calr_j$ respectively.
		Let agent $l$ be located on this border, that is: $x_l \in b_{i,j}$.
		\newline
		Then the cost at point $x_l$ remains the same whether the agent reports $\pi_i$ or $\pi_j$, that is:
		$$\cost_{x_l}(M,(a_l=\pi_i,a_{-l})) = \cost_{x_l}(M,(a_l=\pi_j,a_{-l}))$$
		%, and let $\bara_i=(a_l=\pi_i,a_{-l})$ and $\bara_j=(a_l=\pi_j,a_{-l})$.
	\end{lemma}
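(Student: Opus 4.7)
The plan is to apply the truthfulness-in-expectation condition in both directions, using the fact that at a border both rankings are simultaneously truthful reports.

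First, I would verify that at a point $x_l = b_{i,j}$ on the border between ranking zones $\calr_i$ and $\calr_j$, both $\pi_i$ and $\pi_j$ lie in $\cala(x_l)$. By the definition of a ranking border, the zones $\calr_i$ and $\calr_j$ differ only by a transposition of two candidates that are equidistant from any point on the border (this is exactly the property that defines $b_{i,j}$: a point at which the strict ordering flips because the two candidates involved are at equal distance from the agent). Hence the agent at $x_l$ is genuinely indifferent between those two candidates, so the preferences over all candidates are consistent with both $\pi_i$ and $\pi_j$; both are truthful reports.

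Next, I would invoke the truthfulness-in-expectation definition twice with the same action profile $a_{-l}$ for the other agents. Applied with $a_l = \pi_i$ as the truthful report and $a_l' = \pi_j$ as the alternative, truthfulness yields
\begin{equation*}
\cost_{x_l}(M,(\pi_i,a_{-l})) \le \cost_{x_l}(M,(\pi_j,a_{-l})).
\end{equation*}
Applied with the roles swapped (using that $\pi_j$ is also truthful at $x_l$), I get the reverse inequality
\begin{equation*}
\cost_{x_l}(M,(\pi_j,a_{-l})) \le \cost_{x_l}(M,(\pi_i,a_{-l})).
\end{equation*}
The two inequalities together give equality, which is exactly the claim.

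I do not anticipate a real obstacle: the argument is essentially an observation about how the definition of $\cala(x_l)$ at a border interacts symmetrically with the truthfulness constraint. The only subtlety worth stating explicitly is the step that certifies both $\pi_i$ and $\pi_j$ as truthful at $x_l$, since everything downstream is a direct double application of the definition of truthfulness in expectation.
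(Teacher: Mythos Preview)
Your argument is correct and in fact more direct than the paper's. The paper does not apply the truthfulness definition at the border point itself; instead it argues by contradiction, placing an auxiliary agent $x_k$ strictly inside $\calr_j$ at distance $\epsilon < \delta/2$ from $x_l$, and uses the triangle inequality to show that $x_k$ would strictly prefer to misreport $\pi_i$. Your approach exploits the fact that, under the paper's definition of truthfulness (which quantifies over all $a_l \in \cala(x_l)$), both $\pi_i$ and $\pi_j$ are simultaneously truthful at $x_l$, so the two one-line applications of the definition already give both inequalities. The paper's perturbation argument is slightly more robust---it goes through even under a weaker truthfulness notion that constrains only agents at interior points of ranking zones---whereas your argument relies on the definition being stated for border points as well; but since the paper's definition is indeed stated that way, your shorter route is fully valid here.
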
 
	
	The proof is given in appendix \ref{prf:border-equal}.
	
	\begin{observation} \label{obs:border-equal-voting}
		The previous lemma also holds for voting mechanisms.
	\end{observation}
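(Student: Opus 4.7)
The plan is to mirror the proof of Lemma~\ref{lemma:border-equal} almost verbatim, exploiting the fact that the definition of truthfulness in the paper has identical form for voting and ranking mechanisms, and that on a voting border the agent has two distinct truthful reports. Specifically, let $x_l$ lie on the voting border $b_{i,j}$ between $\calv_i$ and $\calv_j$. By definition of a voting border one has $|x_l - y_i| = |x_l - y_j|$ and this common value is the minimum of $|x_l - y_k|$ over all candidates $\calc_k$. Hence both $\calc_i$ and $\calc_j$ are favorite candidates of $x_l$, and therefore both lie in $\cala(x_l)$.

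The main step is to invoke the truthful-in-expectation condition twice. Fix an arbitrary profile $a_{-l}$ of the other agents' reports. First I would apply the inequality with $a_l = \calc_i \in \cala(x_l)$ as the truthful report and $a_l' = \calc_j$ as the alternative, obtaining
\[
\cost_{x_l}\bigl(M,(\calc_i,a_{-l})\bigr) \;\leq\; \cost_{x_l}\bigl(M,(\calc_j,a_{-l})\bigr).
\]
Then I would apply it with the roles swapped ($a_l = \calc_j \in \cala(x_l)$ truthful, $a_l' = \calc_i$ alternative) to get the reverse inequality. Combining the two yields the claimed equality.

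No continuity or perturbation argument is required, because the paper's definition of truthfulness quantifies over every truthful action in $\cala(x_l)$ and every alternative action in $\cala$, so any two truthful reports must yield equal cost against any fixed $a_{-l}$. The only thing to verify carefully is that $\cala(x_l)$ genuinely contains both candidates, which is immediate from the definition of the voting border. Since the expected-cost formulation enters only through the definition of $\cost_{x_l}$ and the truthfulness constraint is stated in the same inequality form, the argument works uniformly for deterministic and truthful-in-expectation voting mechanisms; there is no real obstacle here beyond transcribing the argument of Lemma~\ref{lemma:border-equal} with $\pi_i,\pi_j$ replaced by $\calc_i,\calc_j$.
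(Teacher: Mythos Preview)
Your argument is correct, but it takes a different route from the paper. The paper's proof of Lemma~\ref{lemma:border-equal} (which the observation merely says to copy) is a perturbation-style contradiction: assuming the two costs differ by $\delta>0$, one places a nearby agent $k$ at distance $\epsilon<\delta/2$ inside one of the zones and shows, via the triangle inequality, that $k$ would strictly gain by misreporting. You instead apply the truthfulness inequality directly twice, once with $\calc_i$ as the truthful report and $\calc_j$ as the deviation, and once with the roles swapped, and combine the two inequalities.

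Your route is shorter and more elementary, and it is fully licensed by the paper's definition of truthfulness, which quantifies over \emph{every} $a_i\in\cala(x_i)$; since both $\calc_i$ and $\calc_j$ lie in $\cala(x_l)$ on the border, each must weakly beat the other, hence equality. The paper's perturbation argument is slightly more robust: it would still go through under a weaker notion of truthfulness in which ties are broken and $\cala(x_l)$ is forced to be a singleton, since it never relies on the border agent herself having two truthful actions. Under the definitions actually adopted in the paper, however, your direct argument is entirely valid and arguably preferable.
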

	The proof of the observation follows the exact same lines as the proof of the lemma.
	
		\begin{theorem} \label{thm:simplex}
			In the $d$ dimensional real space $\Re^d$, any truthful in expectation voting mechanism has an approximation ratio of at least $3-\frac{2}{d+1}$.
		\end{theorem}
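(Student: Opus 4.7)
The plan is to exhibit a bad instance on a regular simplex. Place the $d+1$ candidates $\calc_1,\dots,\calc_{d+1}$ at the vertices $v_1,\dots,v_{d+1}$ of a regular simplex in $\Re^d$ with mutual distance $L$; the circumradius (distance from any vertex to the centroid $c$) is then $R = L\sqrt{d/(2(d+1))}$, so that $L/R = \sqrt{2(d+1)/d}$. This geometric ratio is the quantity that drives $3-\tfrac{2}{d+1}$. The voting zones $\calv_i$ are the Voronoi cells of the $v_i$, and the centroid lies on the border of \emph{every} $\calv_i$ (each vertex is equidistant from $c$), so by Observation~\ref{obs:border-equal-voting} an agent placed at $c$ may truthfully report any candidate whatsoever.

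I would build a location profile using two kinds of agents: ``anchor'' agents placed at the vertices (with a unique truthful vote), and ``flexible'' agents placed at the centroid (with the full menu of $d+1$ truthful votes). For each index $i\in\{1,\dots,d+1\}$, consider the action profile $\bara^{(i)}$ in which the flexible agents cast their votes so as to make $\calc_i$ the candidate whose election probability is smallest. Write $p^{(i)}$ for the mechanism's output on $\bara^{(i)}$. The main claim I would aim to establish is that for some index $i^*$, the probability $p^{(i^*)}_{i^*}$ that the mechanism elects the vertex $v_{i^*}$ under profile $\bara^{(i^*)}$ is bounded above by a quantity of order $\tfrac{1}{d+1}$. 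Computing the social cost in this sub-profile gives (approximately) $aL(1-p^{(i^*)}_{i^*}) + bR$ for $a$ anchor agents at $v_{i^*}$ and $b$ flexible agents at $c$, while the optimal cost is $0 + bR$ after placing the facility at $v_{i^*}$. Plugging in $p^{(i^*)}_{i^*}\le\tfrac{1}{d+1}$ and the geometric identity $L/R=\sqrt{2(d+1)/d}$, then optimizing the ratio $a/b$, yields the claimed lower bound $3-\tfrac{2}{d+1}$.

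The hardest step will be the averaging that delivers the bound on $p^{(i^*)}_{i^*}$. A truthful voting mechanism need not be symmetric across candidate labels, so one cannot average the mechanism's outputs under a relabeling. Instead, the argument must combine three ingredients. First, border-equality at $c$ (Observation~\ref{obs:border-equal-voting}) forces the flexible agents' expected cost $R$ to be independent of their reports, which gives identities relating the mechanism's outputs on the various truthful profiles sharing the same location profile. Second, truthfulness of the anchor agents at the vertices provides monotonicity constraints on how $p_i$ varies as one moves probability mass between candidates. Third, the constraint $\sum_i p^{(i)}_i \le \sum_i \sum_j p^{(i)}_j = d+1$ gives a global budget; combined with an adversarial choice of $i$ (taking the sub-profile where $p^{(i)}_i$ is smallest), this yields $\min_i p^{(i)}_i \le \tfrac{1}{d+1}$. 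Making these three ingredients interlock cleanly, and in particular getting the ratio of anchor to flexible agents right so the additional ``$bR$'' term in the optimum does not dilute the bound below $3-\tfrac{2}{d+1}$, is the technical crux.
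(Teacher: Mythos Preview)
Your plan has a real gap at the point you yourself flag as the crux. Placing the flexible agents at the \emph{centroid} $c$ makes the border-equality constraint vacuous: since $c$ is equidistant from \emph{all} $d+1$ vertices, the expected cost of an agent at $c$ equals $R$ no matter what distribution over candidates the mechanism outputs, so Observation~\ref{obs:border-equal-voting} imposes no restriction whatsoever on how the probabilities change as the flexible agents swap votes. Your ``first ingredient'' therefore delivers nothing, and without it the ``third ingredient'' collapses too: the profiles $\bara^{(i)}$ are genuinely different voting profiles and the mechanism is free to behave arbitrarily on each, so the inequality $\sum_i p_i^{(i)}\le d+1$ carries no information linking $p_{i^*}^{(i^*)}$ to any single location profile you can exploit. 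Relatedly, your cost computation with ``$a$ anchor agents at $v_{i^*}$'' is inconsistent with the averaging step: if anchors sit only at $v_{i^*}$ you cannot average over $i$, and if anchors sit at every vertex then by symmetry every candidate has the same social cost and the ratio is $1$.

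The paper's argument fixes both issues by (i) pigeonholing in the \emph{opposite} direction---on the fully symmetric profile with one agent per vertex, some candidate $\calc_{d+1}$ is elected with probability \emph{at least} $1/(d+1)$---and (ii) using a point $P$ that is equidistant from exactly $d$ of the $d{+}1$ candidates (the circumcenter of a facet) rather than from all of them. At such a $P$, border-equality is non-trivial: the cost at $P$ is $t+(u-t)p_{d+1}$ with $u\neq t$, so invariance of cost pins down $p_{d+1}$ itself as the agents at $P$ switch their votes to $\calc_1$. This transports the bound $p_{d+1}\ge 1/(d+1)$ to the voting profile $(\calc_1,\dots,\calc_1,\calc_{d+1})$, which is then realized by the location profile with $d$ agents at $y_1$ and one agent at the edge midpoint $Q$; the ratio there is $1+2d\cdot p_{d+1}\ge 3-\tfrac{2}{d+1}$. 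Note that the circumradius $R$ and the ratio $L/R$ never enter---the final instance lives on a single edge, which is also why the bound comes out rational.
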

		\begin{proof}
			Let there be $d+1$ candidates, located on the vertices of a regular simplex $H$ (all $d+1$ vertices are equally distanced from one another). Let there be $d+1$ agents, and let $M$ be an arbitrary truthful in expectation voting mechanism.
			
			Let $\bx$ be the profile in which each agent $i$ is located precisely on candidate $\calc_i$. 
			Therefore $\mathbf{a}=(\calc_1,\calc_2 \ldots \calc_{d+1})$ is the only truthful voting profile for $\bx$.
			Denote the probability of choosing candidate $i$ as $p_{i}(a)$, that is: $p_{i}(a)=\Pr(M(\mathbf{a})=\calc_{i})$.
			Clearly there exists some candidate which is chosen by $M$ with probability at least $\frac{1}{d+1}$. 
			Assume without loss of generality that this candidate is $\calc_{d+1}$, that is: $p_{d+1}(a) \geq \frac{1}{d+1}$. 		
			
			We move on to define another location profile, $\bx'$, which is also consistent with the voting profile $\mathbf{a}$.
			Let $H'$ be the regular simplex in which candidates $\calc \setminus \calc_{d+1}$ are on the vertices. 
			Let $P$ be the point with equal distance to all $d$ vertices in $H'$ ($H'$ is a regular simplex, so such a point necessarily exists). 
			Denote this distance as $t$. 
			However, this distance is different from the distance from $P$ to $\calc_{d+1}$: $|P-y_{d+1}| = u \neq t$.
			Let $\bx'$ be the profile in which there are $k$ agents at $P$ and one agent at $\calc_{d+1}$ (see Figure \ref{fig:equilateral-triangle-voting}). 
			
			According to Observation \ref{obs:border-equal-voting}, the cost of an agent at point $P$ should not change under any truthful vote, that is for any vote $\calc_j: 1 \leq j \leq d$.
			In particular, this holds when any agent on point $P$ votes for candidate $\calc_1$.
			We make use of this observation several times by changing the votes for each of the points at $P$ to $\calc_1$, one at a time, such that the final voting profile is $\mathbf{a'}=\{\calc_1,\calc_1 \ldots \calc_1,\calc_{d+1}\}$ ($d$ agents vote for $\calc_1$, one agent votes for $\calc_{d+1}$)..
			Due to the observation, the cost of point $P$ must remain the same throughout these transitions, that is: $\cost_P(M,a')=\cost_P(M,a)$. 
			
			Therefore: 
			\begin{eqnarray*}
				&& \cost_P(M,a) = \cost_P(M,a') \\ 
				&\Rightarrow& u \cdot p_{d+1}(a) + t \cdot (1-p_{d+1}(a)) = u \cdot p_{d+1}(a') + t \cdot (1-p_{d+1}(a')) \\
				&\Rightarrow& t + (u-t)\cdot p_{d+1}(a) = t + (u-t)\cdot p_{d+1}(a') \\
				&\Rightarrow& p_{d+1}(a') = p_{d+1}(a) \\
				&\Rightarrow& p_{d+1}(a') \geq \frac{1}{d+1}
			\end{eqnarray*}				
			
			Denote the midpoint between $y_1$ and $y_{d+1}$ as $Q$.
			Without loss of generality, scale the distances such that $|y_1-Q|=|Q-y_{d+1}|=1$.
			Examine the following location profile $\bx''= (\calc_1,\calc_1 \ldots \calc_1, Q)$, which is also consistent with the voting profile $\mathbf{a'}$.
			In this case the cost of $\calc_1$, which is the optimal candidate, is: $\SC(\calc_1,\bx'')=1$.
			The cost of $\calc_{d+1}$ is $\SC(\calc_{d+1},\bx'') =d\cdot 2 + 1 = 2d+1$.
			Therefore the approximation ratio of $M$ is at least:
			\begin{eqnarray*}
				\frac{\SC(M,\bx'')}{\SC(\OPT,\bx'')} &=& p_{d+1}(a') (2d+1) + (1-p_{d+1}(a'))(1) \\
				&=& 2d \cdot p_{d+1}(a') + 1 \\
				&\geq& (2d) \frac{1}{d+1}  + 1 \\
				&=&	\frac{2d+2-2}{d+1}+1 = 3 - \frac{2}{d+1}
			\end{eqnarray*}

			\begin{figure}[h]
				
		\setlength{\fboxrule}{0.5pt}
		\noindent \fbox{\noindent\makebox[1\textwidth][c]
			{\begin{minipage}{1\textwidth}	
				
				\centering		
				
				\begin{subfigure}{.33\textwidth}

					\begin{tikzpicture}[y=.3cm, x=.3cm,font=\sffamily]			
					
					\draw[-] (0,0) -- (7.5,0) node{};
					\draw[-] (0,0) -- (3.75, 6.5) node{};
					\draw[-] (7.5,0) -- (3.75, 6.5) node{};
					\draw[-] (3.75,2.165) -- (3.75,0) node[midway,right]{};	
					\draw[-] (3.75,2.165) -- (1.875,3.25) node[midway,left]{};				
					\draw[-] (3.75,2.165) -- (5.625,3.25) node[midway,right,xshift=0.1cm]{};	 
					
					\draw[fill=white] (0, 0) circle (0.5) node [left, yshift=0.3cm] {$y_1,x_1$};
					\draw[fill=white] (3.75, 6.5) circle (0.5) node [above, yshift=0.3cm] {$y_2,x_2$};
					\draw[fill=white] (7.5, 0) circle (0.5) node [right, yshift=0.3cm] {$y_3,x_3$};
					
					\draw[fill=black] (3.75,2.165) circle (0) node [right, yshift=0.1cm] {};	
					\draw[fill=black] (1.875,3.25) circle (0) node  [left,yshift=0.1cm] {$P$};  
					\draw[fill=black] (3.75,0) circle (0) node  [below,yshift=-0.1cm] {$Q$}; 
					
					\draw[fill=black] (0,0) circle (0.2) node [below,yshift=-0.2cm] {};	
					\draw[fill=black] (3.75, 6.5) circle (0.2) node  [right,yshift=0.2cm] {};  
					\draw[fill=black] (7.5, 0) circle (0.2) node  [below,yshift=-0.1cm] {};
					
%					\draw (0, -2.7) -- (0, -3.3) node [above]{};	 				
%					\draw[dim] (0,-3) -- (3.75,-3) node[midway,below] {$1$};	
%					\draw (3.75, -2.7) -- (3.75, -3.3) node [above]{};	 								
%					\draw[dim] (3.75,-3) -- (7.5,-3) node[midway,below] {$1$};
%					\draw (7.5, -2.7) -- (7.5, -3.3) node [above]{};	 								
					
					\end{tikzpicture}					
				\end{subfigure}%
				\begin{subfigure}{.33\textwidth}

					\begin{tikzpicture}[y=.3cm, x=.3cm,font=\sffamily]			
					
					\draw[-] (0,0) -- (7.5,0) node{};
					\draw[-] (0,0) -- (3.75, 6.5) node{};
					\draw[-] (7.5,0) -- (3.75, 6.5) node{};
					\draw[-] (3.75,2.165) -- (3.75,0) node[midway,right]{};	
					\draw[-] (3.75,2.165) -- (1.875,3.25) node[midway,left]{};				
					\draw[-] (3.75,2.165) -- (5.625,3.25) node[midway,right,xshift=0.1cm]{};	 
					
					\draw[fill=white] (0, 0) circle (0.5) node [left, yshift=0.3cm] {$y_1$};
					\draw[fill=white] (3.75, 6.5) circle (0.5) node [above, yshift=0.3cm] {$y_2$};
					\draw[fill=white] (7.5, 0) circle (0.5) node [right, yshift=0.3cm] {$y_3,x_3$};
					
					\draw[fill=black] (3.75,2.165) circle (0) node [right, yshift=0.1cm] {};	
					%\draw[fill=black] (1.875,3.25) circle (0) node  [left,yshift=0.1cm] {$P$};  
					\draw[fill=black] (3.75,0) circle (0) node  [below,yshift=-0.1cm] {$Q$}; 
					
					\draw[fill=black] (1.875,3.25) circle (0.2) node [left] {$P,x_1,x_2$};	
					%\draw[fill=black] (3.75, 6.5) circle (0.2) node  [right,yshift=0.2cm] {$x_2$};  
					\draw[fill=black] (7.5, 0) circle (0.2) node  [below,yshift=-0.1cm] {};
					
%					\draw (0, -2.7) -- (0, -3.3) node [above]{};	 				
%					\draw[dim] (0,-3) -- (3.75,-3) node[midway,below] {$1$};	
%					\draw (3.75, -2.7) -- (3.75, -3.3) node [above]{};	 								
%					\draw[dim] (3.75,-3) -- (7.5,-3) node[midway,below] {$1$};
%					\draw (7.5, -2.7) -- (7.5, -3.3) node [above]{};	 								
					
					\end{tikzpicture}				
				
				\end{subfigure}%	
				\begin{subfigure}{.33\textwidth}

					\begin{tikzpicture}[y=.3cm, x=.3cm,font=\sffamily]			
					
					\draw[-] (0,0) -- (7.5,0) node{};
					\draw[-] (0,0) -- (3.75, 6.5) node{};
					\draw[-] (7.5,0) -- (3.75, 6.5) node{};
					\draw[-] (3.75,2.165) -- (3.75,0) node[midway,right]{};	
					\draw[-] (3.75,2.165) -- (1.875,3.25) node[midway,left]{};				
					\draw[-] (3.75,2.165) -- (5.625,3.25) node[midway,right,xshift=0.1cm]{};	 
					
					\draw[fill=white] (0, 0) circle (0.5) node [left, yshift=0.3cm] {$y_1,x_1,x_2$};
					\draw[fill=white] (3.75, 6.5) circle (0.5) node [above, yshift=0.3cm] {$y_2$};
					\draw[fill=white] (7.5, 0) circle (0.5) node [right, yshift=0.3cm] {$y_3$};
					
					\draw[fill=black] (3.75,2.165) circle (0) node [right, yshift=0.1cm] {};	
					\draw[fill=black] (1.875,3.25) circle (0) node  [left,yshift=0.1cm] {$P$};  
					%\draw[fill=black] (3.75,0) circle (0) node  [below,yshift=-0.1cm] {$Q$}; 
					
					\draw[fill=black] (0,0) circle (0.2) node [below,yshift=-0.1cm] {};	
					%\draw[fill=black] (3.75, 6.5) circle (0.2) node  [right,yshift=0.2cm] {$x_2$};  
					\draw[fill=black] (3.75, 0) circle (0.2) node  [below,yshift=-0.1cm] {$Q,x_3$};
					
					\draw (0, -2.7) -- (0, -3.3) node [above]{};	 				
					\draw[dim] (0,-3) -- (3.75,-3) node[midway,below] {$1$};	
					\draw (3.75, -2.7) -- (3.75, -3.3) node [above]{};	 								
					%\draw[dim] (3.75,-3) -- (7.5,-3) node[midway,below] {$1$};
					%\draw (7.5, -2.7) -- (7.5, -3.3) node [above]{};	 								
					
					\end{tikzpicture}				
				
				\end{subfigure}

%				\begin{tikzpicture}[y=.3cm, x=.3cm,font=\sffamily]			
%				
%				\draw[-] (0,0) -- (30,0) node{};
%				\draw[-] (0,0) -- (15, 25.98) node{};
%				\draw[-] (30,0) -- (15, 25.98) node{};
%				\draw[-] (15,8.66) -- (15,0) node[midway,right]{$b_{1,3}$};	
%				\draw[-] (15,8.66) -- (7.5,13) node[midway,left]{$b_{1,2}$};				
%				\draw[-] (15,8.66) -- (22.5,13) node[midway,right,xshift=0.1cm]{$b_{2,3}$};	 
%				
%				\draw[fill=white] (0, 0) circle (0.8) node [above, yshift=0.3cm] {$\calc_1$};
%				\draw[fill=white] (15, 25.98) circle (0.8) node [above, yshift=0.3cm] {$\calc_2$};
%				\draw[fill=white] (30, 0) circle (0.8) node [above, yshift=0.3cm] {$\calc_3$};
%				
%				\draw[fill=black] (15,8.66) circle (0.4) node [right, yshift=0.1cm] {};	
%				\draw[fill=black] (7.5,13) circle (0.4) node  [left,yshift=0.1cm] {$P$};  
%				\draw[fill=black] (15,0) circle (0.4) node  [below,yshift=-0.1cm] {$Q$}; 
%				
%				\draw (0, -2.7) -- (0, -3.3) node [above]{};	 				
%				\draw[dim] (0,-3) -- (15,-3) node[midway,below] {$1$};	
%				\draw (15, -2.7) -- (15, -3.3) node [above]{};	 								
%				\draw[dim] (15,-3) -- (30,-3) node[midway,below] {$1$};
%				\draw (30, -2.7) -- (30, -3.3) node [above]{};	 								
%				
%				\end{tikzpicture}
				\caption{An illustration of the proof of Theorem \ref{thm:simplex} for the case of $d=2$. 
					\newline
					The three candidates are on the vertices of an equilateral triangle (a regular simplex with 3 vertices). 
					The lines within the triangle denote the borders between a pair of candidates. 
					The simplex $H'$ is the line between $y_1,y_2$, and its midpoint is $P$. 
					\newline
					These three figures, from left to right, show the dynamics of the proof (location profiles $\bx$,$\bx'$ and $\bx''$ respectively). The key observation is that the agents at point $P$ are at equal distance to all candidates except $\calc_3$, therefore the transition does not change the probability of $\calc_3$ to be chosen.
				}

				\label{fig:equilateral-triangle-voting}

				\end{minipage}}} \par\setlength{\fboxrule}{0.2pt}						
			\end{figure}
			
		\end{proof}	
		
		\begin{observation} \label{obs:lb-location-2}
			Any truthful in expectation location mechanism has an approximation ratio of at least 2, even on the line.
		\end{observation}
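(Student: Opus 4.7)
The plan is to reduce this observation to a voting lower bound already in hand, by restricting attention to two-candidate line instances and invoking the equivalence from Theorem~\ref{thm:classes}. Since the observation claims a bound that holds ``even on the line,'' it suffices to exhibit a two-candidate instance on $\Re$ on which every TIE location mechanism has approximation ratio at least~$2$.

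Given any TIE location mechanism $M$, I restrict attention to profiles on $\Re$ with $m=2$ candidates. By item~6 of Theorem~\ref{thm:classes}, in this regime the class of TIE location mechanisms is equivalent to the class of TIE voting mechanisms, so $M$ is mutually reducible to some TIE voting mechanism $M'$. Unwinding the definition of reducibility (Appendix~\ref{def:reduciblity}), $M$ and $M'$ produce the same output distribution on every corresponding truthful profile, and therefore realize exactly the same worst-case approximation ratio on two-candidate line instances.

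Every TIE voting mechanism is in particular a randomized voting mechanism, so Lemma~\ref{lemma:rand-lb-2-non-strategic}---which gives a lower bound of $2$ on the approximation ratio of any randomized voting mechanism on the line, even without any truthfulness assumption---applies to $M'$. Chaining this with the equivalence yields an approximation ratio of at least~$2$ for $M$, despite $M$ receiving the full locations of all agents; this is the content of the observation, and the reason it is interesting is precisely that the strategic constraint alone forces a gap, even when the information channel is maximal.

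The one delicate point I expect is confirming that the equivalence in Theorem~\ref{thm:classes} truly transfers the worst-case ratio. This is immediate from the formal definition of reducibility, which enforces identical outputs on corresponding truthful profiles; since the approximation ratio is defined as a supremum over truthful profiles, it is preserved under mutual reducibility. No fresh construction is required---the observation is essentially a quantitative corollary of two results already established in the paper.
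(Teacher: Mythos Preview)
Your proposal is correct and follows essentially the same structure as the paper's proof: restrict to two candidates on the line, invoke item~6 of Theorem~\ref{thm:classes} to pass from TIE location mechanisms to voting mechanisms, and then apply a known lower bound of~$2$ for the latter class. The paper differs only in the final step, invoking Theorem~\ref{thm:simplex} with $d=1$ (which yields $3-\frac{2}{d+1}=2$) rather than Lemma~\ref{lemma:rand-lb-2-non-strategic}; both choices work equally well here.

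One small correction: Lemma~\ref{lemma:rand-lb-2-non-strategic} is stated for \emph{ranking} mechanisms, not voting mechanisms as you wrote. This does not break your argument---with two candidates the two classes coincide, and in any case every voting mechanism is reducible to a ranking mechanism, so a lower bound for the latter transfers to the former---but you should cite the lemma accurately.
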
				
		\begin{proof}
			Let there be two candidates on the line. According to Theorem \ref{thm:classes}, any truthful in expectation location mechanism is equivalent to a voting mechanism. One can apply the same proof as in Theorem \ref{thm:simplex} for the case of $d=1$ (in this case, both point $P$ and point $Q$ are the midpoint between the two candidates), to achieve a lower bound of $3-\frac{2}{d+1}=2$.			
		\end{proof}

		\begin{theorem} \label{thm:triangle-ranking}
			In $\Re^2$, any truthful in expectation ranking mechanism has an approximation ratio of at least $7/3$.
		\end{theorem}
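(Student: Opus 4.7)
The plan is to mirror the three-profile strategy of Theorem \ref{thm:simplex}, but invoke Lemma \ref{lemma:border-equal} in place of Observation \ref{obs:border-equal-voting} to propagate probabilities through ranking profiles. Place the three candidates at the vertices of an equilateral triangle of side length $s$ in $\Re^2$, and fix the ranking profile $\tilde{\mathbf{a}} = ((\calc_1,\calc_2,\calc_3),(\calc_2,\calc_1,\calc_3),(\calc_3,\calc_1,\calc_2))$. This is truthful at the vertex profile $\bx = (\calc_1,\calc_2,\calc_3)$, since each agent sits at a vertex equidistant from the other two candidates. By pigeonhole some $p_i(\tilde{\mathbf{a}}) \geq 1/3$, and the three cases are analogous under cyclic relabeling, so we may assume $p_3(\tilde{\mathbf{a}}) \geq 1/3$.

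Let $P$ be the midpoint of $\calc_1 \calc_2$. The same profile $\tilde{\mathbf{a}}$ is truthful also at $\bx' = (P,P,\calc_3)$: at $P$ the candidates $\calc_1,\calc_2$ are tied at distance $s/2$ while $\calc_3$ is at distance $s\sqrt{3}/2$, and at $\calc_3$ the candidates $\calc_1,\calc_2$ are tied for farthest. At $\bx'$, agent $2$ lies on the border between the ranking zones for $(\calc_2,\calc_1,\calc_3)$ and $(\calc_1,\calc_2,\calc_3)$, so Lemma \ref{lemma:border-equal} forces the cost at $P$, namely $\tfrac{s}{2} + \tfrac{s(\sqrt{3}-1)}{2}\, p_3$, to be unchanged when agent $2$ switches her report. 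Since this expression is strictly monotone in $p_3$, the swap preserves $p_3$, and the new profile $\mathbf{a}^* = ((\calc_1,\calc_2,\calc_3),(\calc_1,\calc_2,\calc_3),(\calc_3,\calc_1,\calc_2))$ satisfies $p_3(\mathbf{a}^*) \geq 1/3$.

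For the final profile take $\bx'' = (\calc_1,\calc_1,Q)$, where $Q$ is the midpoint of $\calc_1\calc_3$; then $\mathbf{a}^*$ is truthful at $\bx''$ (at $\calc_1$ the candidates $\calc_2,\calc_3$ are tied at distance $s$, and at $Q$ the candidates $\calc_1,\calc_3$ are tied at distance $s/2$ with $\calc_2$ at distance $s\sqrt{3}/2$). Direct computation gives $\cost(\calc_1,\bx'') = s/2$, $\cost(\calc_2,\bx'') = (4+\sqrt{3})s/2$, and $\cost(\calc_3,\bx'') = 5s/2$, so $\calc_1$ is optimal and the approximation ratio under $\mathbf{a}^*$ equals $p_1 + (4+\sqrt{3})p_2 + 5p_3$. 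Minimizing this linear functional over $p_i \geq 0$, $\sum_i p_i = 1$, and $p_3 \geq 1/3$ yields the minimum $7/3$, attained at $(p_1,p_2,p_3) = (2/3,0,1/3)$.

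The real work is the combinatorial design: a single border-swap at $\bx'$ must convert $\tilde{\mathbf{a}}$ (where the pigeonhole bound $p_3 \geq 1/3$ is extracted) into $\mathbf{a}^*$, which must be simultaneously truthful at $\bx'$ so Lemma \ref{lemma:border-equal} pins down $p_3$, and truthful at $\bx''$ where $\calc_3$ becomes far from optimal. Arranging the equilateral-triangle geometry so all these constraints mesh with the single swap is the main obstacle; the cost arithmetic and the constrained linear minimization that close the bound are then routine.
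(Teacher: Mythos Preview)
Your argument has a genuine gap at the pigeonhole step. You apply pigeonhole to the fixed ranking profile
\[
\tilde{\mathbf{a}} = \bigl((\calc_1,\calc_2,\calc_3),\,(\calc_2,\calc_1,\calc_3),\,(\calc_3,\calc_1,\calc_2)\bigr)
\]
and then assert that ``the three cases are analogous under cyclic relabeling, so we may assume $p_3(\tilde{\mathbf{a}}) \geq 1/3$.'' But $\tilde{\mathbf{a}}$ is \emph{not} invariant under any nontrivial permutation of the candidates: for instance, $\calc_3$ is ranked last by two agents while $\calc_1$ is never ranked last, so no relabeling can carry the case $p_1(\tilde{\mathbf{a}})\ge 1/3$ to the case $p_3(\tilde{\mathbf{a}})\ge 1/3$. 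Since the mechanism $M$ is fixed and its output depends on the ranking profile (not on the underlying location profile), nothing forbids $M$ from setting $p_3(\tilde{\mathbf{a}})=0$; your single-swap construction then gives no bound, and the remaining cases are not covered by your argument.

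The paper avoids this by starting from the Condorcet-cycle profile
\[
\bara = \bigl((\calc_1,\calc_2,\calc_3),\,(\calc_2,\calc_3,\calc_1),\,(\calc_3,\calc_1,\calc_2)\bigr),
\]
which \emph{is} invariant under simultaneous cyclic relabeling of candidates and agents, so the ``without loss of generality $p_3\ge 1/3$'' step is legitimate. The price of this symmetric starting point is that the paper needs \emph{two} applications of Lemma~\ref{lemma:border-equal} (one swap for agent~1 and one for agent~3) to reach a profile truthful at $(y_2,y_2,Q)$, whereas your asymmetric $\tilde{\mathbf{a}}$ happens to be one swap away from such a profile. That economy is exactly what breaks the symmetry you need for the WLOG. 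To repair your proof you must either replace $\tilde{\mathbf{a}}$ by a cyclically symmetric profile (and then perform two swaps, as the paper does), or separately carry out the $p_1\ge 1/3$ and $p_2\ge 1/3$ cases with their own swap sequences.
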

		\begin{proof}
			Let there be 3 candidates located such that they form an equilateral triangle, and let $M$ be a truthful in expectation ranking mechanism. Let $\bara=(a_1,a_2,a_3)$ be the following ranking profile:
			\begin{eqnarray*}
				a_1 = \calc_1 \succeq \calc_2 \succeq \calc_3 \\
				a_2 = \calc_2 \succeq \calc_3 \succeq \calc_1 \\
				a_3 = \calc_3 \succeq \calc_1 \succeq \calc_2 
			\end{eqnarray*}		
			Let $\bx$ be some location profile consistent with $\bara$ (see Figure \ref{fig:equilateral-triangle-ranking}).
			Denote $p_i(\bara) = \Pr[M(\bara)=\calc_i]$. 
			From symmetry, there exists some candidate chosen with probability at least $1/3$. 
			Assume without loss of generality that this is candidate $\calc_3$, that is: $p_3(\bara) \geq 1/3$.
			
			Let $P_1$ be a point such that $|P_1-y_1|=|P_1-y_2|=t_1$, and $|P_1-y_3|=u_1$, where $t_1 \neq u_1$.
			Let $\bx'=(P_1,x_2,x_3)$. 
			Let $a_1' = \calc_2 \succeq \calc_1 \succeq \calc_3$ and let $\bara' = (a_1',a_2,a_3)$.
			Notice that $\bx'$ is consistent with both $\bara$ and $\bara'$, therefore according to Lemma \ref{lemma:border-equal} the cost at $P_1$ should remain the same for $\bara,\bara'$:
			\begin{eqnarray*}
				&& \cost_{P_1}(M,\bara) = \cost_{P_1}(M,\bara') \\ 
				&\Rightarrow& u_1 \cdot p_{3}(\bara) + t_1 \cdot (1-p_{3}(\bara)) = u_1 \cdot p_{3}(\bara') + t_1 \cdot (1-p_{3}(\bara'))  \\
				&\Rightarrow& t_1 + (u_1-t_1)\cdot p_{3}(\bara) = t_1 + (u_1-t_1)\cdot p_{3}(\bara') \\
				&\Rightarrow& p_{3}(\bara') = p_{3}(\bara) \\
				&\Rightarrow& p_{3}(\bara') \geq \frac{1}{3}
			\end{eqnarray*}	
			
			Let $P_2$ be a point such that $|P_2-y_2|=|P_2-y_1|=t_2$, and $|P_2-y_3|=u_2$, where $t_2 \neq u_2$. 
			Let $\bx''=(x_1',x_2,P_2)$.
			Let $\bara_3'' = \calc_3 \succeq \calc_2 \succeq \calc_1$ and let $\bara'' = (a_1',a_2,a_3'')$.
			According to Lemma \ref{lemma:border-equal} the cost at $P_2$ should remain the same for $\bara',\bara''$:
			\begin{eqnarray*}
				\cost_{P_2}(M,\bara') &=& \cost_{P_2}(M,\bara'') \\
				&\Rightarrow&  u_2 \cdot p_{3}(\bara) + t_2 \cdot (1-p_{3}(\bara')) = u_2 \cdot p_{3}(\bara'') + t_2 \cdot (1-p_{3}(\bara'')) \\
				&\Rightarrow& p_{3}(\bara'') = p_{3}(\bara') \\
				&\Rightarrow& p_{3}(\bara'') \geq \frac{1}{3}
			\end{eqnarray*}		
			
			Let $Q$ be the midpoint between $y_2,y_3$, and let $\bx'''=(y_2,y_2,Q)$. 
			Without loss of generality, scale the distances such that $|y_3-Q|=|Q-y_2|=1$. 
			Therefore the cost of $\calc_{2}$, the optimal candidate, is: $\SC(\calc_2,\bx''',\bara'')=\SC(\OPT,\bx''',\bara'')=1$. 
			The cost of $\calc_{3}$ is: $\SC(\calc_3,\bx''',\bara'') = 2\cdot 2 + 1 = 5$.
			Therefore the approximation ratio of $M$ is at least:
			\begin{eqnarray*}
				\frac{\SC(M,\bx''',\bara'')}{\SC(\OPT,\bx''',\bara'')} = p_{3}(\bara'')\cdot 5 + (1-p_{3}(\bara''))\cdot 1 &=& 1 + 4\cdot p_{3}(\bara'') \geq 1 + \frac{4}{3} = \frac{7}{3}
			\end{eqnarray*}		
		
			\begin{figure}[h]
				
		\setlength{\fboxrule}{0.5pt}
		\noindent \fbox{\noindent\makebox[1\textwidth][c]
			{\begin{minipage}{1\textwidth}	
				
				\centering		
				
				\begin{subfigure}{.25\textwidth}

					\begin{tikzpicture}[y=.3cm, x=.3cm,font=\sffamily]			
					
					\draw[-] (0,0) -- (7.5,0) node{};
					\draw[-] (0,0) -- (3.75, 6.5) node{};
					\draw[-] (7.5,0) -- (3.75, 6.5) node{};
					\draw[-] (3.75,6.45) -- (3.75,0) node[midway,right]{};	
					\draw[-] (7.5,0) -- (1.875,3.25) node[midway,left]{};				
					\draw[-] (0,0) -- (5.625,3.25) node[midway,right,xshift=0.1cm]{};	 
					
					\draw[fill=white] (0, 0) circle (0.5) node [below,yshift=-0.1cm] {$y_1$};
					\draw[fill=white] (3.75, 6.5) circle (0.5) node [above, yshift=0.1cm] {$y_2$};
					\draw[fill=white] (7.5, 0) circle (0.5) node [below,yshift=-0.1cm] {$y_3$};
					
					%\draw[fill=black] (2.5,2.88675) circle (0) node [right, yshift=0.1cm] {$P_1$};	
					%\draw[fill=black] (6.25,0.7225) circle (0) node  [left,yshift=0.1cm] {$P_2$};  
					%\draw[fill=black] (5.625,3.25) circle (0) node  [right] {$Q$}; 
					
					\draw[fill=black] (2,2) circle (0.2) node [left] {$x_1$};	
					\draw[fill=black] (4.5, 4) circle (0.2) node  [right] {$x_2$};  
					\draw[fill=black] (5, 0.75) circle (0.2) node  [right] {$x_3$};
					
%					\draw (0, -2.7) -- (0, -3.3) node [above]{};	 				
%					\draw[dim] (0,-3) -- (3.75,-3) node[midway,below] {$1$};	
%					\draw (3.75, -2.7) -- (3.75, -3.3) node [above]{};	 								
%					\draw[dim] (3.75,-3) -- (7.5,-3) node[midway,below] {$1$};
%					\draw (7.5, -2.7) -- (7.5, -3.3) node [above]{};	 								
					
					\end{tikzpicture}					
				\end{subfigure}%
				\begin{subfigure}{.25\textwidth}

					\begin{tikzpicture}[y=.3cm, x=.3cm,font=\sffamily]			
					
					\draw[-] (0,0) -- (7.5,0) node{};
					\draw[-] (0,0) -- (3.75, 6.5) node{};
					\draw[-] (7.5,0) -- (3.75, 6.5) node{};
					\draw[-] (3.75,6.45) -- (3.75,0) node[midway,right]{};	
					\draw[-] (7.5,0) -- (1.875,3.25) node[midway,left]{};				
					\draw[-] (0,0) -- (5.625,3.25) node[midway,right,xshift=0.1cm]{};	 
					
					\draw[fill=white] (0, 0) circle (0.5) node [below,yshift=-0.1cm] {$y_1$};
					\draw[fill=white] (3.75, 6.5) circle (0.5) node [above, yshift=0.1cm] {$y_2$};
					\draw[fill=white] (7.5, 0) circle (0.5) node [below,yshift=-0.1cm] {$y_3$};
					
					%\draw[fill=black] (2.5,2.88675) circle (0) node [right, yshift=0.1cm] {$P_1$};	
					%\draw[fill=black] (6.25,0.7225) circle (0) node  [left,yshift=0.1cm] {$P_2$};  
					%\draw[fill=black] (5.625,3.25) circle (0) node  [right] {$Q$}; 
					
					\draw[fill=black] (2.5,2.88675) circle (0.2) node [left] {$P_1,x_1$};	
					\draw[fill=black] (4.5, 4) circle (0.2) node  [right] {$x_2$};  
					\draw[fill=black] (5, 0.75) circle (0.2) node  [right] {$x_3$};
					
%					\draw (0, -2.7) -- (0, -3.3) node [above]{};	 				
%					\draw[dim] (0,-3) -- (3.75,-3) node[midway,below] {$1$};	
%					\draw (3.75, -2.7) -- (3.75, -3.3) node [above]{};	 								
%					\draw[dim] (3.75,-3) -- (7.5,-3) node[midway,below] {$1$};
%					\draw (7.5, -2.7) -- (7.5, -3.3) node [above]{};	
					
					\end{tikzpicture}				
				
				\end{subfigure}%	
				\begin{subfigure}{.25\textwidth}

					\begin{tikzpicture}[y=.3cm, x=.3cm,font=\sffamily]

					\draw[-] (0,0) -- (7.5,0) node{};
					\draw[-] (0,0) -- (3.75, 6.5) node{};
					\draw[-] (7.5,0) -- (3.75, 6.5) node{};
					\draw[-] (3.75,6.45) -- (3.75,0) node[midway,right]{};	
					\draw[-] (7.5,0) -- (1.875,3.25) node[midway,left]{};				
					\draw[-] (0,0) -- (5.625,3.25) node[midway,right,xshift=0.1cm]{};	 
					
					\draw[fill=white] (0, 0) circle (0.5) node [below,yshift=-0.1cm] {$y_1$};
					\draw[fill=white] (3.75, 6.5) circle (0.5) node [above, yshift=0.1cm] {$y_2$};
					\draw[fill=white] (7.5, 0) circle (0.5) node [below,yshift=-0.1cm] {$y_3$};
					
					%\draw[fill=black] (2.5,2.88675) circle (0) node [right, yshift=0.1cm] {$P_1$};	
					%\draw[fill=black] (6.25,0.7225) circle (0) node  [left,yshift=0.1cm] {$P_2$};  
					%\draw[fill=black] (5.625,3.25) circle (0) node  [right] {$Q$}; 
					
					\draw[fill=black] (2.5,2.88675) circle (0.2) node [left] {$P_1,x_1$};	
					\draw[fill=black] (4.5, 4) circle (0.2) node  [right] {$x_2$};  
					\draw[fill=black] (6.25,0.7225) circle (0.2) node  [right,yshift=0.1cm] {$P_2,x_3$};
					
%					\draw (0, -2.7) -- (0, -3.3) node [above]{};	 				
%					\draw[dim] (0,-3) -- (3.75,-3) node[midway,below] {$1$};	
%					\draw (3.75, -2.7) -- (3.75, -3.3) node [above]{};	 								
%					\draw[dim] (3.75,-3) -- (7.5,-3) node[midway,below] {$1$};
%					\draw (7.5, -2.7) -- (7.5, -3.3) node [above]{};	 								
					
					\end{tikzpicture}				
				
				\end{subfigure}%	
				\begin{subfigure}{.25\textwidth}

					\begin{tikzpicture}[y=.3cm, x=.3cm,font=\sffamily]

					\draw[-] (0,0) -- (7.5,0) node{};
					\draw[-] (0,0) -- (3.75, 6.5) node{};
					\draw[-] (7.5,0) -- (3.75, 6.5) node{};
					\draw[-] (3.75,6.45) -- (3.75,0) node[midway,right]{};	
					\draw[-] (7.5,0) -- (1.875,3.25) node[midway,left]{};				
					\draw[-] (0,0) -- (5.625,3.25) node[midway,right,xshift=0.1cm]{};	 
					
					\draw[fill=white] (0, 0) circle (0.5) node [below,yshift=-0.1cm] {$y_1$};
					\draw[fill=white] (3.75, 6.5) circle (0.5) node [above, yshift=0.1cm] {$y_2,x_1,x_2$};
					\draw[fill=white] (7.5, 0) circle (0.5) node [below,yshift=-0.1cm] {$y_3$};
					
					%\draw[fill=black] (2.5,2.88675) circle (0) node [right, yshift=0.1cm] {$P_1$};	
					%\draw[fill=black] (6.25,0.7225) circle (0) node  [left,yshift=0.1cm] {$P_2$};  
					%\draw[fill=black] (5.625,3.25) circle (0) node  [right] {$Q$}; 
					
					\draw[fill=black] (3.75, 6.5) circle (0.2) node [left] {};	
					\draw[fill=black] (5.625, 3.25) circle (0.2) node  [right] {$Q,x_3$};  
					%\draw[fill=black] (6.25,0.7225) circle (0.2) node  [right,yshift=0.1cm] {$P_2,x_3$};
					
					%\draw (0, -2.7) -- (0, -3.3) node [above]{};	 				
					\draw[dim] (6.625,3.872) -- (4.75,7.122) node[midway,right] {$1$};	
					%\draw (3.75, -2.7) -- (3.75, -3.3) node [above]{};	 								
					%\draw[dim] (3.75,-3) -- (7.5,-3) node[midway,below] {$1$};
					%\draw (7.5, -2.7) -- (7.5, -3.3) node [above]{};	 								
					
					\end{tikzpicture}				
				
				\end{subfigure}							
			
				\caption{An illustration of the proof of Theorem \ref{thm:triangle-ranking}. \newline
					These four figures, from left to right, show the dynamics of the proof (profiles $\bx$, $\bx'$, $\bx''$ and $\bx'''$ respectively). 
				}				
				\label{fig:equilateral-triangle-ranking}

				\end{minipage}}} \par\setlength{\fboxrule}{0.2pt}						
			\end{figure}
			
		\end{proof}	
	\begin{lemma} \label{lemma:rand-lb-2-non-strategic}
		No randomized ranking mechanism can achieve an approximation ratio strictly below 2, even if the metric is $\Re$ and even if there are no truthfulness requirements from the mechanism (non-strategic case).
	\end{lemma}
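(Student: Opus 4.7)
The plan is to exhibit a single small instance, with a single ranking profile, and show that the mechanism's output on this ranking profile necessarily yields ratio at least $2$ on one of two location profiles consistent with that ranking.

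First, I would place two candidates $\calc_1, \calc_2$ at positions $0$ and $1$ on the line and consider $n=2$ agents with opposite rankings: agent $1$ reports $\calc_1 \succeq \calc_2$ and agent $2$ reports $\calc_2 \succeq \calc_1$. A randomized ranking mechanism $M$ maps this ranking profile to some distribution, say $(p, 1-p)$ over $(\calc_1, \calc_2)$, and this distribution must be the same for every location profile consistent with the ranking profile (since the mechanism sees only the rankings).

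Next, I would pick two location profiles that share this ranking profile but have very different optimal candidates. For small $\epsilon>0$, let $\bx^A = (0,\, 0.5+\epsilon)$ and $\bx^B = (0.5-\epsilon,\, 1)$; both are consistent with the ranking profile above. A quick computation shows $\OPT(\bx^A)=\calc_1$ with social cost $0.5+\epsilon$, while $\OPT(\bx^B)=\calc_2$ with social cost $0.5+\epsilon$. The mechanism's expected social costs on these two profiles are $p(0.5+\epsilon)+(1-p)(1.5-\epsilon)$ and $p(1.5-\epsilon)+(1-p)(0.5+\epsilon)$ respectively, giving approximation ratios that tend, as $\epsilon\to 0$, to $3-2p$ on $\bx^A$ and $1+2p$ on $\bx^B$.

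The core observation is now purely arithmetic: for every $p\in[0,1]$, $\max\{3-2p,\, 1+2p\}\geq 2$, with equality only at $p=1/2$. Hence, for any $\delta>0$, by choosing $\epsilon$ sufficiently small we ensure that at least one of $\bx^A,\bx^B$ yields ratio exceeding $2-\delta$, no matter how $M$ chooses $p$. Since this holds for every $\delta>0$, the approximation ratio of $M$ is at least $2$. Notice that truthfulness was never invoked, so this establishes the bound in the non-strategic setting as claimed. The whole argument is elementary; the only subtle point is realizing that both $\bx^A$ and $\bx^B$ force the mechanism to be evaluated against the same distribution, which is precisely what makes the ranking-only information restrictive.
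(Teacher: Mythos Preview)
Your proof is correct and follows essentially the same approach as the paper: two candidates, two agents with opposite rankings, and two symmetric location profiles (one agent on a candidate, the other just past the midpoint) that force any distribution $(p,1-p)$ to incur ratio approaching $\max\{3-2p,1+2p\}\ge 2$. The only cosmetic differences are the specific coordinates chosen (you use candidates at $0,1$; the paper uses $-1,1$).
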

	The proof is deferred to appendix \ref{prf:rand-lb-2-non-strategic}.

%	\begin{observation} \label{obs:non-strategic-lb-2-ran}
%		The proof in Lemma \ref{lemma:lb-ran-2} also holds even for non-strategic agents on the line, if the mechanism is a voting or for ranking mechanism. Namely, if the mechanism cannot discern between location profiles $\bx$ or $\bx'$, the mechanism would have to preform precisely in the same way it did in the proof.
%	\end{observation}
	
	\subsection{Upper Bound}

	We previously showed that spike achieves an approximation ratio of 2 on the line. We now show that for a general metric space, random dictator achieves a ratio of 3.
		
	Random dictator is a simple randomized mechanism which achieves an approximation ratio of 2 in the continuous model. Recall that random dictator locates the facility on vote $a_i$ with probability $1/n$ for all $i \in N$. We show that in our model, it yields a competitive ratio of 3 for any metric space (but it cannot improve beyond 3 even on the line). 
	
	\begin{lemma} \label{lemma:rd-3}
		On any metric space, random dictator achieves an approximation ratio of exactly $3$.
	\end{lemma}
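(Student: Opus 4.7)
My plan is to establish both sides: an upper bound of $3$ on every instance via two applications of the triangle inequality, and a matching lower bound via an explicit family of instances on the line whose ratio tends to $3$.

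For the upper bound, let $\calc(x_j)$ denote the favorite candidate of agent $j$ (the one her vote names), and let $\calc_{opt}$ be the optimal candidate with location $y_{opt}$. For any pair of agents $i,j$, I would first apply the triangle inequality to bound agent $i$'s cost when agent $j$ is the dictator:
\begin{equation*}
d(x_i, \calc(x_j)) \;\leq\; d(x_i, x_j) + d(x_j, \calc(x_j)) \;\leq\; d(x_i, x_j) + d(x_j, y_{opt}),
\end{equation*}
using the defining property of a favorite candidate that $d(x_j, \calc(x_j)) \leq d(x_j, y_{opt})$. A second triangle inequality $d(x_i, x_j) \leq d(x_i, y_{opt}) + d(y_{opt}, x_j)$ then yields $d(x_i, \calc(x_j)) \leq d(x_i, y_{opt}) + 2\, d(x_j, y_{opt})$. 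Summing over $i$ gives $\sum_i d(x_i, \calc(x_j)) \leq \SC(\OPT, \bx) + 2n\, d(x_j, y_{opt})$, and averaging over the uniformly chosen dictator $j$ gives
\begin{equation*}
\SC(\text{RD}, \bx) \;=\; \tfrac{1}{n}\sum_{j}\sum_{i} d(x_i, \calc(x_j)) \;\leq\; \SC(\OPT, \bx) + 2\,\SC(\OPT, \bx) \;=\; 3\,\SC(\OPT, \bx).
\end{equation*}

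For the lower bound, I would exhibit an instance on the line with two candidates at $0$ and $1$, with $k$ agents placed at $1$ and a single agent placed at $\tfrac{1}{2}-\epsilon$ for small $\epsilon>0$. The lone agent votes for the left candidate and the $k$ agents vote for the right candidate, so $\calc_{opt}$ is the right candidate with social cost approximately $\tfrac{1}{2}$, while random dictator picks the left candidate with probability $1/(k+1)$, incurring a cost of roughly $k$ in that event. A short computation shows that the ratio tends to $3$ as $k\to\infty$ and $\epsilon\to 0$, matching the upper bound.

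The main (and essentially only) obstacle is verifying the key inequality $d(x_j, \calc(x_j)) \leq d(x_j, y_{opt})$, which is immediate because $\calc(x_j)$ is by definition the candidate minimizing distance to $x_j$; the rest is standard metric manipulation. No delicate case analysis is required, which is why the result is stated as straightforward in Figure \ref{fig:results-table-Rd}.
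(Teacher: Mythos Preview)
Your proof is correct and follows essentially the same approach as the paper. Both arguments establish the upper bound via two triangle inequalities together with the defining property $d(x_j,\calc(x_j))\le d(x_j,y_{opt})$ of a favorite candidate, arriving at the same pointwise bound $d(x_i,\calc(x_j))\le d(x_i,y_{opt})+2\,d(x_j,y_{opt})$ before summing and averaging; the only cosmetic difference is that the paper routes first through $y_{opt}$ and then through the dictator's location, whereas you route through the dictator first. Your lower-bound instance (one agent just left of the midpoint, $k$ agents on the far candidate) is a minor variant of the paper's (which places $n-1$ agents on one candidate and one agent near the border), and both give a ratio tending to $3$.
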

	The proof can be found in appendix \ref{prf:rd-3}.

	As opposed to the continuous model, in our candidate model random dictator is not group-strategyproof. Refer to \ref{GSP} in the appendix for the definition of group-strategyproofness, and for proof of this statement.

\section{Deterministic Mechanisms} \label{sec-deterministic}
	\subsection{Lower Bound}

	In the continuous model it is well known that choosing the location of the median agent is both truthful and optimal \cite{procaccia2009approximate}. The following lemma shows that this does not hold in our candidate model.

	\begin{lemma} \label{lemma:lb-det-3}
		No deterministic truthful mechanism (location, ranking or voting) can achieve an approximation ratio strictly below $3$ for the social cost, even when the metric space is $\Re$.
	\end{lemma}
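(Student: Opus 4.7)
The plan is to prove the lower bound directly for location mechanisms; the bound then transfers to ranking and voting mechanisms, since any truthful ranking (respectively voting) mechanism can be simulated by a truthful location mechanism that first extracts the ranking (respectively favorite candidate) induced by the reported location, yielding identical outputs and hence identical approximation ratios. So it suffices to exhibit, for every truthful location mechanism $M$, a profile whose ratio is at least $3$ (in the sense that no strict upper bound below $3$ can be achieved uniformly).

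For the construction I would take the simplest possible instance: two candidates at $y_1=0$ and $y_2=1$ and two agents. Suppose, for contradiction, that $M$ is a truthful location mechanism achieving ratio $c<3$. Choose $\epsilon>0$ small enough that $(1.5-\epsilon)/(0.5+\epsilon) > c$, which is possible precisely because $c<3$. Consider the profile $\bx^{(1)}=(0,\,0.5+\epsilon)$: here $\calc_1$ is optimal with cost $0.5+\epsilon$ while $\calc_2$ costs $1.5-\epsilon$, so outputting $\calc_2$ would exceed the ratio $c$; hence $M(\bx^{(1)})=\calc_1$. Symmetrically, on $\bx^{(2)}=(0.5-\epsilon,\,1)$ we must have $M(\bx^{(2)})=\calc_2$.

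The key intermediate step is to pin down $M$ on the profile $(0,1)$. I would argue that $M(0,1)=\calc_1$, because otherwise an agent truly located at $0.5+\epsilon$ (paired with agent $1$ at $0$) would strictly prefer to misreport location $1$: their truthful cost in $\bx^{(1)}$ is $0.5+\epsilon$, whereas the deviation yields $M(0,1)=\calc_2$ at cost $0.5-\epsilon$. With $M(0,1)=\calc_1$ established, the final contradiction is immediate: agent $1$ at true location $0.5-\epsilon$ in the profile $\bx^{(2)}$ gets the truthful outcome $\calc_2$ at cost $0.5+\epsilon$, but can deviate to report location $0$, inducing $M(0,1)=\calc_1$ at cost $0.5-\epsilon$, violating truthfulness.

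The main obstacle is really only bookkeeping: arranging the quantifiers so that $\epsilon$ is chosen after $c$ is fixed, and making sure tie-breaking at the voting border $0.5$ is not invoked (both agents of interest sit strictly on one side of the midpoint, so no indifference arises). Once those two forced outputs are obtained, truthfulness against the single alternative profile $(0,1)$ closes the argument. This also highlights why the bound is tight: a mechanism that breaks symmetry arbitrarily at the midpoint will necessarily be bad on one of the two near-midpoint profiles, and the resulting ratio approaches exactly $3$.
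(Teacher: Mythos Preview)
Your proof is correct. It differs from the paper's in one structural respect worth noting. The paper does not argue truthfulness directly on location mechanisms; instead it invokes Claim~\ref{claim:two-candidates} (the equivalence between truthful location mechanisms and truthful voting mechanisms when $m=2$) to conclude that any truthful location mechanism must output the same candidate on the two profiles $(-1,\epsilon)$ and $(-\epsilon,1)$, since both induce the vote profile $(\calc_1,\calc_2)$. Whichever candidate is chosen, one of the two profiles witnesses ratio $(3-\epsilon)/(1+\epsilon)\to 3$. Your argument avoids that structural reduction entirely: you force $M(\bx^{(1)})=\calc_1$ and $M(\bx^{(2)})=\calc_2$ from the ratio assumption and then derive a contradiction by chaining two single-agent deviations through the intermediate profile $(0,1)$. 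This is more elementary and self-contained --- it needs only the definition of truthfulness, not the location/voting equivalence --- whereas the paper's route is shorter once that equivalence is available and makes explicit why the ``information content'' of a location report collapses to a vote in the two-candidate case.
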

	\begin{proof}
		Let there be 2 candidates such that $y_1 = -1$ and $y_2 = 1$. According to the sixth claim in Theorem \ref{thm:classes} (Claim \ref{claim:two-candidates}), any truthful location mechanism $M$ is necessarily reducible to a voting mechanism. Let $\mathbf{x}=(-1,\epsilon)$, $\mathbf{x}'=(-\epsilon,1)$ be two location profiles, and let $B$ be the border between them. Clearly, both profiles correspond with the same votes ($x_1,x_1' \in \calv_1 \setminus B$ and $x_2,x_2' \in \calv_2 \setminus B$), therefore their outcome will be identical.
		
		If $M(\mathbf{x})=M(\mathbf{x}')=\calc_1$, then the approximation ratio for $M$ given $\bx$ is $\frac{\SC(M,\mathbf{x})}{\SC(\OPT,\mathbf{x})}=\frac{3-\epsilon}{1+\epsilon}$. 
		
		If $M(\mathbf{x})=M(\mathbf{x}')=\calc_2$ then the approximation ratio for $M$ given $\mathbf{x}'$ is $\frac{\SC(M,\mathbf{x}')}{\SC(\OPT,\mathbf{x}')}=\frac{3-\epsilon}{1+\epsilon}$. 
		
		In either case, the approximation ratio tends to $3$ as $\epsilon$ tends to $0$.
	\end{proof}
	
	A result by Anshelevich et al. (Theorem 3 in \cite{anshelevich2014approximating}) uses a similar method to show the worst case for ranking mechanisms with non-strategic agents is 3 (the proof in that paper was done for a general metric space, but it can also be applied to $\Re$). Lemma \ref{lemma:lb-det-3} makes use of Claim \ref{claim:two-candidates} to prove that the same lower bound of 3 also applies even to location mechanisms, under the condition that the mechanism must be truthful (when the agents are strategic). 
	
	\subsection{Upper Bound}
	
	As noted previously, in the continuous model on $\Re$, the mechanism which locates the facility on the report of the median agent is truthful and results in the optimal social cost. We define the median mechanism in the context of candidate constraints, and assess its social cost. 
	
	\begin{definition} [Median mechanism] 
		Given a vote profile $\bara$, let $\pi$ be a permutation such that $y\left( a_{\pi(1)} \right) \leq \ldots \leq y\left( a_{\pi(n)} \right)$. 
		Median is a voting mechanism which chooses the median vote, that is $a_{\pi(\ceil*{n/2})}$.
	\end{definition}

	\begin{lemma} \label{lemma:det-med-3}
		Median is a truthful mechanism which results in a 3 approximation of the optimal social cost on $\Re$.
	\end{lemma}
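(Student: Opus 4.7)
I will invoke the classical single-peaked median argument. An agent at location $x_i$ has preferences over candidate positions that are single-peaked with peak at her favorite candidate $\calc_k$, since on the line $|x_i-y_j|$ is monotone in $j$ on either side of $k$. The median rule on such single-peaked domains is well known to be strategyproof; a short direct verification also works: if agent $i$ misreports a candidate strictly to the right of $\calc_k$, the sorted list of votes weakly shifts right, so the median weakly moves right as well, and a brief case analysis on whether the original (truthful) median was at $\calc_k$, to its left, or to its right shows that the agent's distance to the outcome cannot decrease. Symmetric reasoning handles leftward misreports.

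\textbf{Approximation ratio.} Fix a location profile $\bx$ and a truthful action profile $\bara\in\cala(\bx)$. Let $\calc^M$ be the winning candidate, $\calc^*$ an optimal candidate, $y^M=y(\calc^M)$, $y^*=y(\calc^*)$, and $d=|y^M-y^*|$; by symmetry assume $y^M>y^*$. Write $S^M=\SC(\calc^M,\bx)$ and $S^*=\SC(\calc^*,\bx)$; the goal is $S^M\leq 3S^*$. The key geometric claim I will prove is that any agent $i$ whose truthful vote targets a candidate $\calc_l$ with $y_l\geq y^M$ must satisfy $x_i\geq y^*+d/2$. Indeed, $x_i$ lies in the voting zone of $\calc_l$, whose left boundary is $(y_{l-1}+y_l)/2$; since $\calc_l$ is at or to the right of $\calc^M$, its left neighbour $\calc_{l-1}$ is at or to the right of $\calc^*$ in the sorted ordering, so $y_{l-1}\geq y^*$ and the midpoint is at least $(y^*+y^M)/2=y^*+d/2$. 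By the median property at least $\lceil n/2\rceil$ agents cast such votes, so the set $R=\{i:x_i\geq y^*+d/2\}$ satisfies $|R|\geq n/2$, whence $S^*\geq|R|\cdot d/2\geq nd/4$, i.e.\ $nd\leq 4S^*$.

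\textbf{Closing the bound.} Split the agents into $R$ and $L=N\setminus R$. For $i\in R$ a short case analysis on whether $x_i\geq y^M$ or $y^*+d/2\leq x_i<y^M$ yields $|x_i-y^M|\leq|x_i-y^*|$. For $i\in L$ the triangle inequality gives $|x_i-y^M|\leq|x_i-y^*|+d$. Summing, $S^M\leq S^*+|L|\cdot d\leq S^*+(n/2)d\leq S^*+2S^*=3S^*$, where the last inequality uses $nd\leq 4S^*$. Since this bound holds for every truthful profile $\bara$, it holds for the worst-case $\SC(M,\bx)$.

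\textbf{Main obstacle.} The delicate step is the geometric claim about voting zones, which must remain valid both when candidates lie strictly between $\calc^*$ and $\calc^M$ and when agents sit exactly on voting borders (and thus admit two truthful votes). Both issues dissolve into the single observation $y_{l-1}\geq y^*$ whenever $y_l\geq y^M$, which is automatic from the ordering of candidates on the line; in particular the analysis never depends on which of the two borderline votes a tied agent selects. Everything else is standard triangle-inequality and counting.
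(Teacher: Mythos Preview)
Your proof is correct but follows a different decomposition than the paper. The paper partitions agents by their \emph{votes} (those voting at or left of the median candidate $y_j$ versus those voting to its right), defines $\alpha,\beta,\gamma,\delta$ as the four resulting partial sums toward $y_j$ and $y_{opt}$, shows $\alpha\leq\gamma$ directly and $\beta\leq\alpha+\gamma+\delta$ via triangle inequality plus the counting fact $|\{i:y(\hat a_i)\leq y_j\}|\geq|\{k:y(\hat a_k)>y_j\}|$, and combines these algebraically. You instead partition agents by their \emph{locations} relative to the midpoint $y^*+d/2$, extracting the quantitative lower bound $S^*\geq nd/4$ from the voting-zone observation $y_{l-1}\geq y^*$, and then close with $S^M\leq S^*+|L|\cdot d$. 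Your route is slightly more geometric and makes the role of the midpoint between $\calc^*$ and $\calc^M$ explicit, which is pleasant; the paper's route stays purely at the level of triangle-inequality chains and never names $d$ or the midpoint, which makes it a touch shorter. Both hinge on the same underlying counting fact (at least half the votes land at or beyond the median), and neither dominates the other.
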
	
	The proof is given in the appendix \ref{prf:det-med-3}.

\section{Open Problems}

As explained in the introduction, we believe that there are many possible manifestations of this setting, though it has barely been investigated, so there is plenty of room for future work. 
Clearly, there are gaps in some of the bounds in the paper which are currently open.
Additionally, one can investigate the questions from the facility location literature in this setting, e.g., electing a committee of several candidates, addressing the problem of ``obnoxious facility location" in which the agents wish to elect the candidate farthest away from them, or proving additional bounds for group strategyproofness.

\section*{Acknowledgements}
We would like to thank Alon Eden, Ilan Cohen, Ophir Friedler, Shai Vardi, Rachel Matichin, Sella Nevo, Guy Reiner, Lirong Xia and Elliot Anshelevich for interesting discussions.

\bibliographystyle{plain}
\bibliography{refs}

\newpage
\appendix
\section{Appendix}
	
	\subsection{Missing Proofs from Section \ref{sec-classes}}
	
	\label{def:reduciblity}
	
	In this part we aim to define reducibility of some mechanism type (voting, ranking or location) to some other mechanism type. The definition of the reduction requires a couple of additional definitions, which we will express formally as well as explain intuitively.
	
	Intuitively, we say that a mechanism type $\hat{A}$ is of finer granularity than mechanism type $\hat{B}$, if the information of a true action in $\hat{A}$ can determine a true action in $\hat{B}$. 
	For instance, a location determines a ranking (or several rankings, if a point is on a border), therefore location mechanisms are of finer granularity than ranking mechanisms. 
	Similarly, ranking mechanisms are of finer granularity than voting mechanisms, and location mechanisms are also of finer granularity than voting mechanisms.
	However, ranking mechanisms are not of finer granularity than location mechanisms, since a ranking does not determine a location. That is, for a ranking $\pi_1$ there exist different locations $x_1, x_2$ whose true ranking is $\pi_1$.
	%is no location $x$ which is necessarily truthful ($x$ must be truthful for {\sl any agent} whose ranking is $\pi$).
	Formally, a mechanism type $\hat{A}: A \rightarrow \calc$ is of {\em finer granularity} than mechanism type $\hat{B}:B\rightarrow \calc$ if for any point $x$ and any $a \in A$ there exists some $b \in B$ such that: if $a$ is a true action of an agent at point $x$ under $\hat{A}$, then $b$ is a true action of point $x$ under $\hat{B}$. 
	In this case we denote $\hat{A} \succ \hat{B}$.
	
	We now utilize this notion of granularity to define consistent functions. Intuitively, we would like to define functions which map between inputs of different mechanism types in a ``consistent" manner. 
	For instance, when mapping from rankings (the input to ranking mechanisms) to votes (candidates, that is - the input to voting mechanisms), we search for functions which map each ranking to the top candidate in that ranking.
	When mapping from votes to rankings, we seek functions which map a vote for a candidate to some ranking in which this candidate is first.
	Formally, given mechanism types $\hat{A}: A \rightarrow \calc$ and $\hat{B}:B\rightarrow \calc$ such that $\hat{A} \succ \hat{B}$, a function $f: A \rightarrow B$ is called {\em consistent} if for any point $x$ and for any $a \in A$, then if $a$ is a true action under $\hat{A}$, then $f(a)$ is a true action under $\hat{B}$. 
	Notice that in these cases a function $f$ is unique (except for the cases in which $\hat{A}$ is a location mechanism and $x$ is a point on a border). 
	If $\hat{B} \succ \hat{A}$ then we define $f:A \rightarrow B$ to be consistent if for any point $x$, if $f(a)$ is a true action for $x$ under $\hat{B}$ then $a$ is a true action for $x$ under $\hat{A}$. In these cases the function $f$ is not unique (for example, there are several rankings in which a specific candidate is first). 
	
	The function $f$ may be randomized, as long as it is a randomization over deterministic consistent functions. For example, a consistent function $f$ mapping locations (the input of location mechanisms) to candidates (the inputs of voting mechanisms) must map every point which is not on a border to their favorite candidate. On the other hand, for a point $x$ on some border, $f$ may randomize the output of $x$ arbitrarily over the set of favorite candidates of $x$.

	A candidate selection mechanism $M$ (whether location, ranking or voting) is said to be {\em reducible} to a candidate selection mechanism $M'$ (location, ranking, or voting) if there exists a consistent function $f$ mapping every action profile $\bara$, which is an input of $M$, to some action profile $f(\bara)=\bara'$ which is the input of $M'$, such that the distribution over the candidates, $M(\bara)$, is identical to the distribution over candidates $M'(\bara')$ (see Figure \ref{fig:reduction}). 

		\begin{figure}[t]
			
			\setlength{\fboxrule}{0.5 pt}
			\noindent \fbox{\noindent\makebox[1\textwidth][c]
				{\begin{minipage}{1\textwidth}
						
						\centering
	
						\includegraphics[scale=0.7]{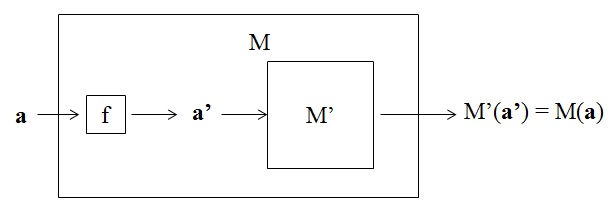}
						
							\caption{A graphic sketch of a mechanism $M$ which is reducible to a mechanism $M'$. }					
						\label{fig:reduction}
					
				\end{minipage}}} \par\setlength{\fboxrule}{0.2pt}

		\end{figure}

	For example, every voting mechanism, $M$, is reducible to some ranking mechanism $M'$. 
	The reduction is as follows: Let $f$ be the consistent function describes previously -- it receives a vector of $n$ candidates, and outputs a vector of $n$ rankings (permutations), where for each $i$, the $i$'th candidate is ranked first in the $i$'th ranking.
	%profile function mentioned whose input is a list of $n$ votes (candidates), and whose output is a list of $n$ permutations. 
	%We choose $f$ to be the following trivial function: the top ranked candidate in the $i$th permutation is the vote of voter $i$ (the others are arbitrary). 
	We choose a ranking mechanism $M'$ which ignores all entries in the rankings but the first, and simulates the voting mechanism $M$ on the top entries of the rankings. 
	By definition, $M$ is reducible to $M'$. 
	%Recognize that in this case, if $M$ is truthful, then so is $M'$ (since $M'$ ignores any other input beyond the input given to $M$, and acts identically). 
	Moreover, note that $M$ was an arbitrary voting mechanism, so we conclude that indeed every voting mechanism is reducible to some ranking mechanism. 
	This logic further shows that for mechanism types $\hat{A},\hat{B}$ such that $\hat{B} \succ \hat{A}$, any mechanism $M$ of type $\hat{A}$ is reducible to some mechanism $M'$ of type $\hat{B}$.
	
	As written in Section \ref{sec-classes}, two sets of mechanisms, $S_1$ and $S_2$, are said to be {\em equivalent} if every $M_1 \in S_1$ is reducible to some $M_2 \in S_2$, and every $M_2 \in S_2$ is reducible to some $M_1 \in S_1$. 	
	A set of mechanisms $S_1$ is said to be {\em strictly contained} in a set of mechanisms $S_2$ if every mechanism $M_1 \in S_1$ is reducible to some mechanism $M_2 \in S_2$, yet not every mechanism $M_2\in S_2$ is reducible to some mechanism $M_1\in S_1$. 
	This is a slight abuse of terminology since the sets $S_1$ and $S_2$ may be disjoint, as their input space may be different.

	The following lemma will be of use in the main theorem of this section.
	
	\begin{lemma} \label{lemma:reducible-granular}
		In any metric space, let $\hat{A},\hat{B}$ be mechanism types such that $\hat{B} \succ \hat{A}$. Let $S_1,S_2$ be the sets of truthful mechanisms of type $\hat{A},\hat{B}$ respectively. Then for any $M_1 \in S_1$ there exists some $M_2 \in S_2$ such that $M_1$ is reducible to $M_2$.
	\end{lemma}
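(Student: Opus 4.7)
The plan is to construct $M_2$ by pre-composing $M_1$ with a consistent coarsening function $g\colon B\to A$, and then to exhibit a refining function $h\colon A\to B$ witnessing that $M_1$ is reducible to $M_2$.

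First I would fix a (possibly randomized) consistent finer-to-coarser function $g\colon B\to A$: since $\hat{B}\succ\hat{A}$, whenever $b\in B$ is a true action at some point $x$ under $\hat{B}$ there is a true action $a\in A$ at $x$ under $\hat{A}$, and I set $g(b)=a$ (randomizing among the valid choices when $b$ is true at points whose true $\hat{A}$-actions differ, and extending $g$ arbitrarily on $b\in B$ that are nowhere true). Define $M_2(\bara):=M_1(g(\bara))$ componentwise, so that $M_2$ is a mechanism of type $\hat{B}$. For truthfulness, take an agent at $x$, a truthful report $b\in\cala(x)$, a deviation $b'\in B$, and any $\bara_{-i}\in B^{n-1}$. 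Consistency of $g$ gives $g(b)\in\cala(x)$ under $\hat{A}$, so conditioning on the randomness of $g$, reporting $b$ to $M_2$ yields the same cost as reporting $g(b)$ truthfully to $M_1$, whereas reporting $b'$ yields the cost of reporting $g(b')$ (possibly untruthfully) to $M_1$. Truthfulness of $M_1$ delivers the desired inequality and places $M_2$ in $S_2$.

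For reducibility I would additionally arrange the construction of $g$ so that for each $a\in A$ true at some point $x$ under $\hat{A}$, there is a distinguished $b\in B$ lying in the interior of the corresponding $\hat{B}$-true region of $x$, for which $g(b)=a$ deterministically; this is a free parameter in defining $g$. Define $h\colon A\to B$ by $h(a):=b$ for this distinguished witness (extending arbitrarily when $a$ is nowhere true). Then $g\circ h=\mathrm{id}_A$ as a distribution over functions. Consistency of $h$ now follows from consistency of $g$: if $h(a)$ is $\hat{B}$-true at some $x'$, then $a=g(h(a))$ is $\hat{A}$-true at $x'$. Hence $M_2(h(\bara))=M_1(g(h(\bara)))=M_1(\bara)$ as distributions for every $\bara\in A^n$, so $M_1$ is reducible to $M_2$.

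The main obstacle is reconciling the two roles of $g$: on borders of $\hat{B}$-space a natural $g$ must randomize in order to be consistent, yet for $g\circ h=\mathrm{id}_A$ one wants $g$ deterministic on the witnesses chosen by $h$. The resolution is to have $h$ always pick its witness in the generic interior of the appropriate $\hat{B}$-true region, where $g$ can be fixed deterministically to the desired $a$; the randomization of $g$ on borders and its arbitrary extension on nowhere-true inputs then do not interfere with the identity condition $g\circ h=\mathrm{id}_A$.
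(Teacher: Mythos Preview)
Your proposal is correct and follows essentially the same approach as the paper: define $M_2 := M_1 \circ g$ for a consistent coarsening $g\colon B\to A$, argue truthfulness of $M_2$ by pulling any violation back to a violation for $M_1$, and exhibit a consistent refining $h\colon A\to B$ with $g\circ h=\mathrm{id}_A$ witnessing the reduction. The paper's own proof is extremely terse (two sentences, relying on an earlier worked example of voting-to-ranking) and leaves the explicit construction of $h$ and the border/tie-breaking issue entirely implicit; your write-up spells these points out, including the observation that $h$ should land in the ``interior'' so that $g$ can be set deterministically there without conflicting with its possibly randomized behavior on borders.
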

	\begin{proof}
		By the fact that $\hat{B}$ is of finer granularity than $\hat{A}$ and by the definition of the consistent function $f$, it is eminent that $M_1$ is reducible to some $M_2$ of type $\hat{B}$ since $M_2$ can completely disregard any input beyond any information in $\hat{A}$ and simulate $M_1$ (as explained previously for the example in which a ranking mechanism disregards any candidate except for the top candidate in each ranking).
		
		In order to complete the proof, it is left to show that there exists such a mechanism $M_2$ which is truthful.
		Since the only reports which change the outcome of $M_2$ are consistent with reports which would change the outcome of $M_1$, then if $M_2$ weren't truthful this would contradict the truthfulness of $M_1$.
	\end{proof}
	\begin{observation} \label{obs:reducible-granular}
		Notice that this reasoning also holds for truthful in expectation mechanisms (that is if $S_1,S_2$ are defined as the sets of truthful in expectation mechanisms of types $\hat{A},\hat{B}$).
	\end{observation}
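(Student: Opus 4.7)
The plan is to observe that the proof of Lemma \ref{lemma:reducible-granular} adapts to the TIE setting essentially unchanged, since the construction it relies on depends only on the input--output behaviour of the mechanisms at the level of distributions over $\calc$, not on whether those distributions happen to be deterministic.

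Starting from an arbitrary $M_1 \in S_1$ of type $\hat{A}$, I would reuse the construction employed in the proof of Lemma \ref{lemma:reducible-granular}: build $M_2$ of type $\hat{B}$ by having it discard all information beyond what is present in $\hat{A}$ and then invoke $M_1$ on the extracted $\hat{A}$-input. Formally, because $\hat{B} \succ \hat{A}$ one can fix a consistent $f \colon A \to B$ together with a map $g \colon B \to A$ which sends every true action of a point $x$ under $\hat{B}$ to a true action of $x$ under $\hat{A}$ and satisfies $g \circ f = \mathrm{id}_A$. Defining $M_2(\bar{b}) := M_1(g(\bar{b}))$ coordinatewise, the distributions $M_2(f(\bara))$ and $M_1(\bara)$ coincide, which is precisely the randomized notion of reducibility.

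It remains to verify that $M_2$ is truthful in expectation. Fix an agent $i$ located at $x_i$, a true action $b_i \in \cala(x_i)$ under $\hat{B}$, an arbitrary profile $b_{-i}$, and an arbitrary deviation $b_i' \in B$. Because the expected cost is a linear functional of the output distribution and $M_2(\cdot) = M_1(g(\cdot))$ by construction,
\begin{equation*}
\cost_{x_i}\!\bigl(M_2,(b_i,b_{-i})\bigr) = \cost_{x_i}\!\bigl(M_1,(g(b_i),g(b_{-i}))\bigr),
\end{equation*}
with the analogous identity for $b_i'$ in place of $b_i$. Since $g(b_i)$ is a true action of $x_i$ under $\hat{A}$ while $g(b_i') \in A$ is arbitrary, applying the TIE property of $M_1$ to the right-hand side gives
\begin{equation*}
\cost_{x_i}\!\bigl(M_2,(b_i,b_{-i})\bigr) \le \cost_{x_i}\!\bigl(M_2,(b_i',b_{-i})\bigr),
\end{equation*}
so $M_2 \in S_2$.

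The only step that merits care is the truthfulness argument, and there is essentially no obstacle there beyond noticing that the TIE condition on $M_1$ already supplies an inequality between \emph{expected} costs for arbitrary second-argument profiles, which is exactly the form needed above. Hence every occurrence of ``truthful'' in the original argument can be replaced by ``truthful in expectation'' with the corresponding inequality continuing to hold, justifying the observation.
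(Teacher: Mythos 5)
Your proposal is correct and follows essentially the same route as the paper: the paper proves the observation by noting that the simulation argument of Lemma \ref{lemma:reducible-granular} ($M_2$ discards the extra information and invokes $M_1$, so any profitable deviation against $M_2$ induces one against $M_1$) carries over verbatim when ``truthful'' is read as ``truthful in expectation.'' Your version merely makes the extraction map $g$ and the resulting cost identity explicit, which is a harmless (and slightly more careful) elaboration of the same argument.
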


	We move on to proving the main theorem of this section:		
	\begin{proof} of Theorem \ref{thm:classes}: \label{prf:classes}
		The proof of each claim is given separately:

		\begin{claim} \label{lemma:det-ranking-voting}
			The class of truthful deterministic ranking mechanisms strictly contains the class of truthful deterministic voting mechanisms.
		\end{claim}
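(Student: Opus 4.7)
The claim has two parts: (i) containment, every truthful deterministic voting mechanism is reducible to some truthful deterministic ranking mechanism, and (ii) strictness, some truthful deterministic ranking mechanism is not reducible to any deterministic voting mechanism.

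Part (i) follows directly from Lemma~\ref{lemma:reducible-granular}. Since a ranking determines its top candidate via the canonical consistent map $\pi\mapsto\pi_1$, ranking mechanisms are of finer granularity than voting mechanisms, so the lemma yields, for every truthful voting mechanism $M$, a truthful ranking mechanism $M'$ that simulates it by discarding all but each agent's top-ranked entry.

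For part (ii) the plan is to exhibit an explicit truthful ranking mechanism $M$ together with two truthful ranking profiles $\bara_1, \bara_2$ whose top-choice vectors coincide but with $M(\bara_1)\neq M(\bara_2)$. Once this is done, strictness follows automatically: any consistent function $f$ from rankings to votes must send both $\bara_1$ and $\bara_2$ to their common top-choice voting profile, so no composition $M'\circ f$ with $M'$ a voting mechanism can reproduce the two different outputs of $M$. Concretely, I would work on $\Re$ with three candidates at $-1$, $0$, $1$ and a single agent, and define $M$ to output $\calc_1$ whenever the reported ranking is one of $\calc_1\succ\calc_2\succ\calc_3$, $\calc_1\succ\calc_3\succ\calc_2$, or $\calc_2\succ\calc_1\succ\calc_3$, and to output $\calc_3$ on the three remaining rankings. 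Then $x=-0.3$ has truthful ranking $\calc_2\succ\calc_1\succ\calc_3$ yielding $\calc_1$, while $x=0.3$ has truthful ranking $\calc_2\succ\calc_3\succ\calc_1$ yielding $\calc_3$, yet both locations' truthful votes equal the singleton $\calc_2$.

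The main remaining task is verifying that $M$ is truthful, which I expect to be the principal obstacle, though only a routine one. It reduces to a short case analysis over the agent's true location: for each of the four single-peaked truthful rankings, any deviation either preserves the output of $M$ or flips it to the candidate on the opposite side of $\calc_2$, whose distance from the agent is strictly larger; at each of the three border locations where two truthful rankings are both available, the two possible outputs lie at equal distance from the agent by the left-right symmetry of $M$ about $\calc_2$. The example extends to arbitrary $\Re^d$ by embedding the three candidates along a single coordinate axis, with no effect on truthfulness or on the distinguishing argument.
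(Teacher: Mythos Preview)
Your proposal is correct and follows essentially the same approach as the paper: both use three collinear candidates and a ranking mechanism that outputs whichever of the two extreme candidates $\calc_1,\calc_3$ is ranked higher, then exploit the two middle ranking zones ($\calc_2\succ\calc_1\succ\calc_3$ versus $\calc_2\succ\calc_3\succ\calc_1$) to separate ranking from voting. The only difference is that the paper instantiates this with two agents while you use a single agent, which is a harmless simplification of the same construction.
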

		\begin{proof}
			According to Lemma \ref{lemma:reducible-granular} any truthful voting mechanism is reducible to some truthful ranking mechanism. 
			We exhibit a deterministic truthful ranking mechanism $M$ which is not reducible to any voting mechanism (even on $\Re$). 
			We show this by exhibiting an example in which $M$ acts differently under two ranking profiles which are mapped by any consistent function to the same voting zone. 
			
			Let there be 3 candidates, and denote the ranking zones $\calr_1,\calr_2, \calr_3, \calr_4$, which match the permutations over candidates $\pi_1,\pi_2,\pi_3,\pi_4$ respectively (see Figure \ref{fig:4-subzones}). 
			$M$ acts as follows: 
			If $a_1 \in \pi_1 \cup \pi_2$ and $a_2 \in \pi_1 \cup \pi_2$, then choose $\calc_1$. 
			Otherwise, choose $\calc_3$.
			
			\begin{figure}[t]
				
				\setlength{\fboxrule}{0.5 pt}
				\noindent \fbox{\noindent\makebox[1\textwidth][c]
					{\begin{minipage}{1\textwidth}
							
							\centering		

							\begin{tikzpicture}[y=.3cm, x=.3cm,font=\sffamily]			
							\draw[<->, thick] (0,0) -- (40,0) node{};
							\draw[fill=white] (5, 0) circle (0.8) node [above, yshift=0.3cm] {$\calc_1$};
							\draw[fill=white] (15, 0) circle (0.8) node [above, yshift=0.3cm] {$\calc_{2}$};
							\draw[fill=white] (35, 0) circle (0.8) node [above, yshift=0.3cm] {$\calc_{3}$};
							
							\draw (10, 1) -- (10, -1) node [above, , yshift=0.5cm, align=center]{$b_{1,2}$};
							\draw (20, 1) -- (20, -1) node [above, , yshift=0.5cm, align=center]{$b_{1,3}$};		
							\draw (25, 1) -- (25, -1) node [above, yshift=0.5cm, align=center]{$b_{2,3}$};
							
							\draw[dim] (0,-3) -- (10,-3) node[midway,above] {$\calr_1$};		
							\draw (10, -2.7) -- (10, -3.3) node [above]{};	 				
							\draw[dim] (10,-3) -- (20,-3) node[midway,above] {$\calr_2$};
							\draw (20, -2.7) -- (20, -3.3) node [above]{};	 						
							\draw[dim] (20,-3) -- (25,-3) node[midway,above] {$\calr_3$};
							\draw (25, -2.7) -- (25, -3.3) node [above]{};	 	
							\draw[dim] (25,-3) -- (40,-3) node[midway,above] {$\calr_4$};       	
							
							\draw[fill=black] (13, 0) circle (0.4) node [above, yshift=0.1cm] {$x_1$};
							\draw[fill=black] (22.5, 0) circle (0.4) node [above, yshift=0.1cm] {$x_2$};		
							\end{tikzpicture}
							
							\caption{The 4 ranking zones. Every border $b_{i,j}$ is the midpoint between candidates $\calc_i,\calc_j$. The points $x_1,x_2$ have different rankings  -- the ranking of $x_1$ is $\pi_2=\calc_2 \succeq \calc_1 \succeq \calc_3$ whereas the ranking of $x_2$ is $\pi_3=\calc_2 \succeq \calc_3 \succeq \calc_1$. However, both strictly prefer $\calc_2$ over any other candidate, therefore for any consistent function $f$: $f(\pi_2)=f(\pi_3)=\calc_2$.}
							\label{fig:4-subzones}		
						\end{minipage}}} \par\setlength{\fboxrule}{0.2pt}

			\end{figure}
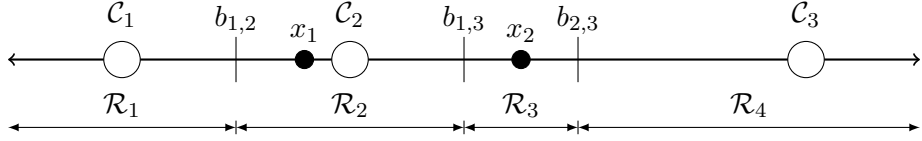
			
			$M$ is truthful --- if $\calc_1$ is chosen, then both agents prefer it over $\calc_3$ and have no incentive to misreport. If $\calc_3$ is chosen, then at least one of the agents is in zones $\calr_3$ or $\calr_4$ - this agent has no incentive to misreport since she prefers $\calc_3$ over $\calc_1$. The other agent has no influence over the outcome, therefore also has no incentive to misreport.			
					
			$M$ is not reducible to any voting mechanism --- any consistent function $f$ must map both $\pi_2$ and $\pi_3$ to $\calc_2$. However, whilst $M$ acts differently under inputs $(\pi_1,\pi_2)$ and $(\pi_1,\pi_3)$, then if there were a reduction, then both of these would have been mapped to the same location $M'(\calc_1,\calc_2)$ in contradiction.		
		\end{proof}
		
		%This specific mechanism has an unbounded approximation ratio (if both agents are precisely at $y_2$). There are also exist deterministic truthful ranking mechanisms which aren't voting mechanisms and have a bounded approximation ratio (see appendix, Lemma \ref{lemma:rankingNotVotingBounded}).
		
		\begin{claim} \label{claim:det-location-ranking}
			The set of truthful deterministic location mechanisms is equivalent to the set of truthful deterministic ranking mechanisms.
		\end{claim}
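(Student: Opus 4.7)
One direction is immediate: every truthful deterministic ranking mechanism is reducible to a truthful deterministic location mechanism by Lemma~\ref{lemma:reducible-granular}, since location mechanisms have finer granularity than ranking mechanisms. The interesting content is the reverse direction -- that every truthful deterministic location mechanism $M$ is reducible to some truthful deterministic ranking mechanism $M'$. My plan is to first establish a ``constancy on a ranking zone'' lemma for $M$, use it to define $M'$, and then verify that the resulting $M'$ is itself truthful.

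The key claim is: for any fixed reports $\bx_{-i}$ of the other agents, the value $M(x_i,\bx_{-i})$ depends only on which ranking zone contains $x_i$, at least on that zone's interior. Suppose toward contradiction that two points $x,x'$ lie in the interior of the same ranking zone (so both induce the same strict ranking $\pi$) yet $M(x,\bx_{-i})=\calc_a$ while $M(x',\bx_{-i})=\calc_b$ with $a\neq b$. Exactly one of $\calc_a,\calc_b$ is strictly preferred under $\pi$: if $\pi$ prefers $\calc_a$, the agent truly at $x'$ profitably misreports $x$ (swapping $\calc_b$ for $\calc_a$), contradicting truthfulness of $M$; the reverse preference is symmetric. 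Applying this argument one coordinate at a time extends the conclusion to all coordinates simultaneously, so on the product of interiors of ranking zones, $M$ depends only on the induced tuple of rankings.

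With this in hand, I would define $M'(\pi_1,\ldots,\pi_n)$ by picking an arbitrary interior representative $x^{\mathrm{rep}}_i$ of the ranking zone corresponding to $\pi_i$ and setting $M'(\pi_1,\ldots,\pi_n)=M(x^{\mathrm{rep}}_1,\ldots,x^{\mathrm{rep}}_n)$; the constancy claim makes this well defined, and the consistent function $f$ simply sends each location to its induced ranking. To verify that $M'$ is truthful, fix agent $i$ at true location $x_i$ with true ranking $\pi_i$ and fix any reports $\pi_{-i}$ with representatives $x^{\mathrm{rep}}_{-i}$. A truthful report of $\pi_i$ gives agent $i$ cost $|x_i-y_{M(x^{\mathrm{rep}}_i,x^{\mathrm{rep}}_{-i})}|$, which by the constancy claim applied in the $i$-th coordinate (since $x_i$ and $x^{\mathrm{rep}}_i$ lie in the same zone) equals $|x_i-y_{M(x_i,x^{\mathrm{rep}}_{-i})}|$; a misreport $\pi'_i$ yields $|x_i-y_{M(\tilde x^{\mathrm{rep}}_i,x^{\mathrm{rep}}_{-i})}|$ for a representative of the zone corresponding to $\pi'_i$. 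Truthfulness of the original location mechanism $M$, applied at true location $x_i$ against the alternative report $\tilde x^{\mathrm{rep}}_i$, provides exactly the required inequality between these two costs.

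The main obstacle I anticipate is handling locations lying on the border between two or more ranking zones, where the true ranking is not unique. At such a boundary point $x^{\mathrm{bdry}}$, a short truthfulness argument comparing $M(x^{\mathrm{bdry}},\bx_{-i})$ to $M(x^{\mathrm{int}},\bx_{-i})$ for a nearby interior $x^{\mathrm{int}}$, using the non-profitable-deviation inequality in both directions, forces the two chosen candidates to be equidistant from $x^{\mathrm{bdry}}$ -- so the agent's cost is preserved even if the candidates themselves differ. The proof is then completed either by (a) modifying $M$ at boundary configurations to agree with a fixed canonical neighboring zone, checking that this modification preserves truthfulness because all affected agents are indifferent at such a point, or by (b) letting the consistent function $f$ randomize among the valid rankings at each border location in order to reproduce the exact output distribution of $M$. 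Option~(a) appears cleaner since it keeps $M'$ deterministic, matching the statement of the claim.
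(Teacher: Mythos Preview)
Your core argument---constancy of a truthful deterministic location mechanism on the interior of each ranking zone, then defining $M'$ via fixed representatives---is exactly the paper's approach and is correct. The gap is in your preferred option~(a) for handling border points.

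Reducibility in this paper requires $M(\bx)=M'(f(\bx))$ as \emph{candidates}, not merely that the border agent's cost is preserved. Your equidistance argument only shows that $M(x^{\mathrm{bdry}},\bx_{-i})$ and $M(x^{\mathrm{int}},\bx_{-i})$ are equidistant from $x^{\mathrm{bdry}}$; they need not be the same candidate when $x^{\mathrm{int}}$ is taken from a \emph{fixed} canonical neighboring zone. For example, with two candidates at $\pm 1$ and a single agent, the truthful mechanism that outputs the nearest candidate with ties to $\calc_1$ gives $M(0)=\calc_1$, but if your canonical zone is the right one you get $M'(\calc_2\succ\calc_1)=\calc_2\neq M(0)$. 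So ``modifying $M$ at boundaries'' yields a reduction for the modified mechanism $\tilde M$, not for $M$ itself.

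Your option~(b) is what the paper actually does, but the point you have not pinned down is that the tie-break at a border location must depend on the \emph{output} $M(\bx)$, not just on the location. The paper's move is: compute $\calc_j=M(\bx)$ first, then at every border agent perturb $\epsilon$ toward $y_j$; a two-sided truthfulness argument (your equidistance argument, sharpened by the fact that the displacement is specifically toward $y_j$, so the agent at the perturbed point \emph{strictly} prefers $\calc_j$ to any other equidistant candidate) shows $M$ is unchanged by this perturbation, and then your interior-constancy step finishes. Thus the consistent function $f_M$ simulates $M$ on the full profile to learn $\calc_j$ and breaks ties toward it. A profile-independent or uniformly random tie-break will not reproduce $M$ in general.
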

		\begin{proof}
			According to Lemma \ref{lemma:reducible-granular}, every truthful deterministic ranking mechanism is reducible to some truthful deterministic location mechanism. 
			It is left to show that every truthful deterministic location mechanism $M$ in $\Re^d$ is reducible to a truthful deterministic ranking mechanism $M'$.	
			
			The proof consists of several parts. 
			For an arbitrary location profile $\bx$ and an arbitrary truthful deterministic mechanism $M$, we define a location profile $\bx'$ which has no locations on borders, and show that it necessarily holds that $M(\bx)=M(\bx')$. 
			We then define a different location profile $\bx''$ and show the same: $M(\bx)=M(\bx'')$.
			The profile $\bx''$ is special in the sense that it is uniquely defined by some ranking profile $\mathbf{\pi}$.
			Finally, we show that given some input $\mathbf{\pi}=f_M(\bx)$ a consistent function $f_M$ (a function which depends on $M$, and will be defined later), there exists a ranking mechanism $M'$ which simulates $M$ on $\bx''$ (see Figure \ref{fig:nested-reductions}).
			The result is a constructive reduction which maintains the same output as the original location mechanism $M(\bx)$, as needed.  

			\begin{figure}[t]			
				\setlength{\fboxrule}{0.5 pt}
				\noindent \fbox{\noindent\makebox[1\textwidth][c]
					{\begin{minipage}{1\textwidth}						
							\centering						
							\includegraphics[scale=0.7]{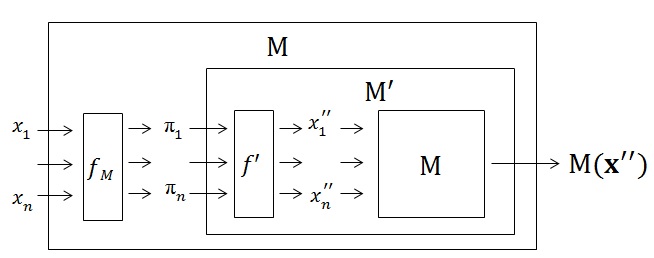}						
							\caption{A graphic sketch of the reductions in the proof of Claim \ref{claim:det-location-ranking}.}					
							\label{fig:nested-reductions}						
						\end{minipage}}} \par\setlength{\fboxrule}{0.2pt}
			\end{figure}			
			
			Let $M$ be an arbitrary truthful deterministic location mechanism, and let $\bx$ be an arbitrary location profile. 
			Let $M(\bx)=\calc_j$ for some candidate $\calc_j$ located at $y_j$. 
			Denote the ranking borders as $B$.
			
			Informally, we define $\bx'$ as a location profile which moves all agents in $\bx$ which are on a border, an infinitesimal distance towards the chosen candidate $\calc_j$.
			The resulting profile $\bx'$ has no agents on borders.
			We now define this formally: let $\epsilon$ be some small positive number. 
			For all $i$ such that $x_i \in B$, let $\vec{\epsilon_i}$ be a vector of size $\epsilon$  in direction $y_j-x_i$ \footnote[4]{If $x_i=y_j$ then the direction of $\vec{\epsilon_i}$ can be set arbitrarily. 
			%This phenomena may occur since there may be candidates which are located precisely on ranking borders (in particular $\calc_j$ may be on a border).
			}. 
			For any $i$ such that $x_i \notin B$, let $|\vec{\epsilon}_i|=0$.
			Let $\bx'=(x_1+\epsilon_1, \ldots x_n+\epsilon_n)$.
			We choose an $\epsilon$ sufficiently small such that for each $i$, $x_i'$ remains in the ranking zone of $x_i$.
			
			We now show that $M(\bx)=M(\bx')$ by moving agents from $\bx$ to $\bx'$ one by one, and showing that if any of these transitions were to change the chosen candidate, this would lead to a violation of truthfulness of $M$. Let:
			\begin{eqnarray*}		
				&& w_0 = (x_1,\ldots,x_n) = \bx \\
				&& w_1 = (x_1',x_2,\ldots,x_n) \\
				&& \ldots \\
				&& w_i = (x_1', \ldots, x_i',x_{i+1},\ldots,x_n) \\
				&& \ldots \\				
				&& w_n = (x_1',\ldots,x_n') = \bx'
			\end{eqnarray*}		
			Assume towards a contradiction that $M(w_0) \neq M(w_n)$. Let $i$ be the minimal index such that $M(w_{i-1}) \neq M(w_{i})$. It is known that $M(w_{i-1})=M(\bx)=\calc_j$. Denote $M(w_{i})=\calc_l$, where $\calc_l$ is located at point $y_l$. There are two options:
			\begin{itemize}
				\item If $|y_l-x_i| < |y_j-x_i|$, then in profile $w_{i-1}$, $x_i$ has an incentive to misreport to $x_i'$.
				\item If $|y_l-x_i| \geq |y_j-x_i|$, then $|y_j-x_i'|=|y_j-x_i|-\epsilon < |x_i'-y_l|$. Therefore in profile $w_i$, $x_i'$ has an incentive to misreport to $x_i$.
			\end{itemize}		
			Hence, $M(\bx)=M(\bx')$ as needed.
			
			Intuitively, we create the location profile $\bx''$ by moving all agents in $\bx'$ to some specific point within their ranking zone.
			Since $\bx'$ contained no agents on borders, each agent in $\bx'$ is located in exactly one ranking zone, hence $\bx''$ is well defined.
			We now define this formally:
			For any ranking zone $\calr_i$ such that $\calr_i \setminus B \neq \emptyset$, let $\hat{x}_i$ be some point in $\calr_i \setminus B$ (for instance, the centroid of the ranking zone $\calr_i$)\footnote[5]{We can safely disregard ranking zones which do not have any points which are not on a border, as no point in $\bx'$ will be in such a zone, since $\bx'$ does not contain any points on borders.}.
			Denote the ranking zone which contains $x_i'$ as $\calr_j$. 
			For all $i \in N$, let $x_i''=\hat{x}_j$. 
			Let $\bx'' = (x_1'',\ldots,x_n'')$.
			
			We now show that $M(\bx')=M(\bx'')$ in a similar fashion as we showed that $M(\bx)=M(\bx')$ previously.
			Let:
			\begin{eqnarray*}		
				&& h_0 = (x_1',\ldots,x_n') = \bx' \\
				&& h_1 = (x_1'',x_2',\ldots,x_n') \\
				&& \ldots \\
				&& h_i = (x_1'', \ldots, x_i'',x_{i+1}',\ldots,x_n') \\
				&& \ldots \\				
				&& h_n = (x_1'',\ldots,x_n'') = \bx''
			\end{eqnarray*}				
			Assume towards a contradiction that $M(h_0) \neq M(h_n)$. Let $i$ be the minimal index such that $M(h_{i-1}) \neq M(h_{i})$. Denote $M(h_{i})=\calc_m$ such that $\calc_m$ is located at $y_m$. Since $x_i',x_i''$ are in the same ranking zone and not on a border, there are two options:
			\begin{itemize}
				\item If $|y_m-x_i'| < |y_j-x_i'|$, then in profile $h_{i-1}$, $x_i'$ has an incentive to misreport to $x_i''$.
				\item If $|y_m-x_i'| > |y_j-x_i'|$, then it also holds that $|y_m-x_i''| > |y_j-x_i''|$, and in profile $h_i$, $x_i''$ has an incentive to misreport to $x_i'$.
			\end{itemize}		
			Therefore, $M(\bx')=M(\bx'') \Rightarrow M(\bx)=M(\bx'')$.
			
			It is left to show that it is possible to perform the ``nested reductions" as shown in Figure \ref{fig:nested-reductions}. Let $f_M$ be the consistent function which breaks ties just like $M$ does. That is, $f_M$ simulates $M$ on input $\bx$, finds the candidate $\calc_j$ and breaks ties in favor of rankings closer to $\calc_j$. 
			%Upon receiving inputs $\bx$, there exists some deterministic consistent function $f$ which breaks ties in precisely the same way as in $\bx'$ (that is, by choosing the ranking which is consistent with the ranking zone closer to the chosen candidate $\calc_j$). 
			The output of $f_M$ is a ranking profile denoted by $\pi$.
			Given $\pi$, there exists some $M'$ that simulates $M$ on $\bx''$\footnote[6]{To avoid a circular definition, one can think of $M'$ simulating a location mechanism which acts precisely like $M$ does.} --- let $f'$ be a consistent function which maps every ranking $\pi_i$ (consistent with ranking zone $\calr_i$) to the point $\hat{x}_i$. 
			Therefore, such a reduction exists, and the output is $M(\bx'')=M(\bx)$. 
			
		\end{proof}
		
		\begin{claim}
			The set of truthful in expectation ranking mechanisms strictly contains the set of truthful in expectation voting mechanisms.				
		\end{claim}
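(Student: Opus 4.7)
The plan is to establish both directions in the claim: containment and strictness. For containment, I would note that ranking mechanisms are of finer granularity than voting mechanisms (a true candidate vote for location $x$ is the top-ranked candidate under any true ranking of $x$), so Observation \ref{obs:reducible-granular} applied to $\hat{A}$ = voting, $\hat{B}$ = ranking immediately gives that every truthful in expectation voting mechanism is reducible to some truthful in expectation ranking mechanism. No new work is required here.

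For strict containment, my intended approach is to recycle the construction from the deterministic Claim \ref{lemma:det-ranking-voting}. The mechanism $M$ built there is a deterministic truthful ranking mechanism, and every such mechanism is trivially a truthful in expectation ranking mechanism (its output distribution is a point mass). So it suffices to argue that $M$ is not reducible to any truthful in expectation voting mechanism $M'$. I would reuse the same two ranking profiles from that proof: $(\pi_1,\pi_2)$ and $(\pi_1,\pi_3)$, where $\pi_2 = \calc_2 \succeq \calc_1 \succeq \calc_3$ and $\pi_3 = \calc_2 \succeq \calc_3 \succeq \calc_1$, so that $M((\pi_1,\pi_2)) = \calc_1$ while $M((\pi_1,\pi_3)) = \calc_3$.

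The key observation for strictness is that in both $\pi_2$ and $\pi_3$ the unique favorite candidate is $\calc_2$, so any consistent function $f$ from rankings to votes is forced to send both to $\calc_2$ (and likewise $\pi_1$ is sent to its unique top choice). Hence $f((\pi_1,\pi_2)) = f((\pi_1,\pi_3))$ as voting profiles, and any alleged voting reduction $M'$ would have to satisfy $M'(f((\pi_1,\pi_2))) = M'(f((\pi_1,\pi_3)))$ as \emph{distributions} over candidates. But the reduction requirement forces these distributions to equal the point mass on $\calc_1$ and the point mass on $\calc_3$ respectively, which is impossible since two distinct Dirac measures cannot coincide. This rules out every truthful in expectation voting mechanism as a target, not merely deterministic ones.

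The only subtlety, and the main thing I would check carefully, is that the reduction notion from Section \ref{sec-classes} indeed requires \emph{distributional} equality of outputs (so that the Dirac-vs-Dirac conflict above really is a contradiction even when $M'$ is randomized). The definition in the appendix confirms this; with that in hand, no additional calculation is needed and the argument is essentially a one-line lift of the deterministic construction into the truthful-in-expectation setting.
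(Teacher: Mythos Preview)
Your proposal is correct and follows essentially the same approach as the paper: use Observation~\ref{obs:reducible-granular} for containment, and for strictness reuse the deterministic mechanism from Claim~\ref{lemma:det-ranking-voting}, noting that it is trivially truthful in expectation and that the original irreducibility argument already rules out all voting mechanisms (randomized or not) because any consistent $f$ collapses $(\pi_1,\pi_2)$ and $(\pi_1,\pi_3)$ to the same voting profile while $M$ outputs distinct point masses. Your explicit check that the reduction requires equality of output \emph{distributions} (so the Dirac-vs-Dirac conflict persists even for randomized $M'$) is exactly the point the paper leaves implicit.
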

		\begin{proof}
			As shown in Observation \ref{obs:reducible-granular}, any truthful in expectation voting mechanism is reducible to some truthful in expectation ranking mechanism.
			The proof of Claim \ref{lemma:det-ranking-voting} exhibits a truthful in expectation ranking mechanism which is not reducible to any voting mechanism.
		\end{proof}
		
		\begin{claim} \label{lemma:ran-location-ranking}
			The set of truthful in expectation randomized location mechanisms strictly contains the set of truthful in expectation randomized ranking mechanisms.
		\end{claim}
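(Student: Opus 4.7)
The easy direction is immediate from Observation~\ref{obs:reducible-granular}: since locations are of finer granularity than rankings, every truthful in expectation ranking mechanism reduces to some truthful in expectation location mechanism. The content of the claim therefore lies entirely in the strict direction, for which my plan is to exhibit a truthful in expectation location mechanism that is \emph{not} reducible to any ranking mechanism. Since Claim~\ref{claim:two-candidates} rules out such a separation with only two candidates, and Claim~\ref{claim:det-location-ranking} rules it out entirely in the deterministic setting, the witness must genuinely leverage both three candidates and randomization; I would take the smallest such instance: one agent on $\Re$ and three candidates placed at $y_1=0$, $y_2=1$, $y_3=2$.

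Define the candidate location mechanism $M$ so that a reported location $x$ selects $\calc_2$ deterministically when $x\in(0,2)$, and otherwise selects $\calc_1$ and $\calc_3$ each with probability $\tfrac{1}{2}$. The truthfulness verification reduces to a short direct computation: the expected cost of the inner output at true location $x$ is $|x-1|$, while the expected cost of the outer output is $\tfrac{1}{2}(|x|+|x-2|)$, which equals the constant $1$ on $[0,2]$ and equals $|x-1|$ off $[0,2]$. Hence at every $x$ the distribution that $M$ assigns is weakly preferred by an agent at $x$ to the other one, strictly so in the interior of $(0,2)$; any misreport either changes the outcome to a weakly worse distribution or leaves it unchanged.

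For non-reducibility I would pick two locations lying inside a single ranking zone at which $M$ produces different distributions; since neither lies on a ranking border, any consistent function $f$ from $\Re$ to rankings must send both to the same ranking, and so no ranking mechanism $M'$ composed with $f$ can reproduce $M$ on both inputs. Concretely, $x_a=-\tfrac{1}{10}$ and $x_b=\tfrac{1}{10}$ both lie strictly in the ranking zone associated with $\calc_1\succeq\calc_2\succeq\calc_3$, but $M(x_a)$ places mass $\tfrac{1}{2}$ on each of $\calc_1$ and $\calc_3$ while $M(x_b)$ places mass $1$ on $\calc_2$. The one step that is not entirely routine is choosing the two outputs so that incentive compatibility across the region switch is automatic: the outer distribution $(\tfrac{1}{2},0,\tfrac{1}{2})$ has expected cost that is constant on the convex hull of $\{y_1,y_3\}$, and this flatness is exactly what equalises the two options at the boundary $\{0,2\}$ and lets $M$ change its output across a non-ranking-border without violating truthfulness in expectation.
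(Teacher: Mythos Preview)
Your proof is correct and takes a genuinely different route from the paper's. Both proofs handle the containment direction via Observation~\ref{obs:reducible-granular} and then exhibit a witnessing location mechanism, but the constructions differ. The paper places three candidates asymmetrically at $y_1=0$, $y_2=3$, $y_3=4$, picks an agent uniformly at random, and outputs either the uniform distribution $(1/3,1/3,1/3)$ or the distribution $(1/4,1/2,1/4)$ according to whether the chosen agent's report is at most $1$; verifying truthfulness then requires a four-case analysis over the intervals cut out by the candidates. Your construction is more economical: with symmetric candidates at $0,1,2$ and a single agent, you exploit that the lottery $\tfrac{1}{2}\delta_{\calc_1}+\tfrac{1}{2}\delta_{\calc_3}$ has expected cost equal to the constant $1$ on $[0,2]$ and to $|x-1|$ outside, so the comparison with the deterministic output $\calc_2$ is immediate. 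The key insight you articulate---choosing the outer distribution to be \emph{flat} on the convex hull of its support so that the switch across the non-ranking boundary at $0$ (and $2$) is incentive-neutral---is exactly what makes the truthfulness check one line rather than a page. Both proofs then finish identically by picking two non-border points in the same ranking zone (the paper uses $0.75$ and $1.25$; you use $-0.1$ and $0.1$) at which the mechanism outputs distinct distributions, forcing any consistent $f$ to collapse them while $M$ does not.
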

		\begin{proof}			
			As shown in Observation \ref{obs:reducible-granular}, any truthful in expectation ranking mechanism is reducible to some truthful in expectation location mechanism.
				
			We will show a truthful in expectation location mechanism $M$ which is not reducible to any ranking mechanism: Let there be 3 candidates at points $y_1=0,y_2=3,y_3=4$. $M$ acts as follows: 
			
			Choose an agent $i$ uniformly at random. Choose the candidates with the following probabilities:
			\begin{eqnarray*}				
				M(\bara) = 
				\begin{cases}
					\calc_1=1/3,\calc_2=1/3,\calc_3=1/3 \text{       if }  a_i \leq 1 \\
					\calc_1=1/4,\calc_2=1/2,\calc_3=1/4 \text{       otherwise.} 
				\end{cases}
			\end{eqnarray*}
			
			$M$ is not reducible to any ranking mechanism --- any consistent function $f$ must map both points $x_1 = 0.75$ and $x_2 = 1.25$ to $\pi_1= \calc_1 \succ \calc_2 \succ \calc_3$. 
			However, mechanism $M$ treats these two inputs differently. 
			
			It is left to show that $M$ is truthful in expectation. 
			We do so by assessing all possibilities of misreports.
			Obviously, the mechanism is not affected by any agents except the one who was chosen. 
			Since there are only two possible outcomes, it is sufficient to compare truthful reports $a_i$ with misreports $a_i'$ such that $a_i'$ changes the outcome. Let $\bara=(a_i,a_{-i})$ and $\bara'=(a_i',a_{-i})$.
			\begin{itemize}
				\item If $x_i \leq 0$ it holds that
				\begin{eqnarray*}
					\cost_{x_i}\left(M,\bara\right) &=&\frac{1}{3} \left[-x_i+(3-x_i)+(4-x_i)  \right]=-x_i+7/3 \\
					\cost_{x_i}\left(M,\bara'\right)&=& \frac{1}{4}\left[-x_i+(4-x_i)  \right]+ \frac{1}{2} \left( 3-x_i \right)=-x_i+5/2.
				\end{eqnarray*}
				Therefore: $\cost_{x_i}\left(M,\bara\right) \leq \cost_{x_i}\left(M,\bara'\right)$.
				
				\item If $0 < x_i < 3$ it holds that 
				\begin{itemize}
					\item If the outcome is $\calc_1=1/3,\calc_2=1/3,\calc_3=1/3$, the cost of agent $i$ is:
					\begin{eqnarray*}
						\frac{1}{3} \left[x_i+(3-x_i)+(4-x_i)  \right]= -x_i/3+7/3.
					\end{eqnarray*}					
					\item If the result is $\calc_1=1/4,\calc_2=1/2,\calc_3=1/4$ the cost of agent $i$ is:
					\begin{eqnarray*}
						\frac{1}{4}\left(x_i+4-x_i  \right)+ \frac{1}{2} \left(3-x_i \right)= -x_i/2+5/2.
					\end{eqnarray*}				
					It holds that $-x_i/3+7/3 \geq -x_i/2+5/2 \Leftrightarrow x_i \geq 1$. Therefore the first outcome is preferable to agents for which $0< x_i \leq 1$ and the second is better for agents for which $1<x_i<3$, and the mechanism is truthful in expectation in this interval.
				\end{itemize}
				
				\item If $3 \leq x_i <4$ it holds that 
				\begin{eqnarray*}
					\cost_{x_i}\left(M,\bara\right) &=& \frac{1}{4}\left(x_i+4-x_i  \right)+ \frac{1}{2} \left(x_i-3 \right)=x_i/2-1/2  \\
					\cost_{x_i}\left(M,\bara'\right) &=& \frac{1}{3} \left[x_i+(x_i-3)+(4-x_i)  \right]= x_i/3+1/3.
				\end{eqnarray*}				
				Therefore: $\cost_{x_i}\left(M,\bara\right) \leq \cost_{x_i}\left(M,\bara'\right) \Leftrightarrow x_i \leq 5$, therefore agent $i$ cannot benefit from misreporting.
				
				\item If $x_i \geq 4$ it holds that	
				\begin{eqnarray*}
					\cost_{x_i}\left(M,\bara\right) &=& \frac{1}{4}\left(x_i+x_i-4  \right)+ \frac{1}{2} \left(x_i-3 \right)=x_i-5/2  \\
					\cost_{x_i}\left(M,\bara'\right) &=& \frac{1}{3} \left[x_i+(x_i-3)+(x_i-4)  \right]=x_i-7/3.
				\end{eqnarray*}			
				Therefore: $\cost_{x_i}\left(M,\bara\right) \leq \cost_{x_i}\left(M,\bara'\right)$.
			\end{itemize}		
		\end{proof}		

		\begin{claim} \label{lemma:ran-voting-tie-ut}
			The set of truthful in expectation randomized voting mechanisms strictly contains the set of universally truthful randomized voting mechanisms.
		\end{claim}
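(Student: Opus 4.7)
(Proof Plan)
The easy direction is immediate from the definitions. If $M$ is universally truthful, we may write $M = \sum_k \mu_k D_k$ for a distribution $\mu$ over deterministic truthful voting mechanisms $\{D_k\}$. Fix an agent $i$, a location $x_i$, a truthful report $a_i \in \cala(x_i)$, an alternative report $a_i'$, and any $a_{-i}$. Since each $D_k$ is deterministic truthful, $|x_i - D_k(a_i,a_{-i})| \le |x_i - D_k(a_i',a_{-i})|$ for every $k$. Averaging with weights $\mu_k$ preserves the inequality, which is exactly the TIE condition, so $M$ is truthful in expectation.

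The interesting direction is to exhibit an explicit TIE voting mechanism $M^\star$ that cannot be written as a convex combination of deterministic truthful voting mechanisms. I would work in a setting with $3$ candidates on a line and a small number of voters (two or three suffice). The plan is to specify, for every voting profile $\bara$, an output distribution $M^\star(\bara)$ over $\calc$ and then verify two things: that $M^\star$ is TIE and that it lies outside the convex hull of deterministic truthful voting mechanisms. TIE is checked directly by writing the expected cost at any location $x$ reporting $a_i$ as an affine function of $\{p_j(a_i,a_{-i})\}$ and comparing with misreports; the border-equality constraints forced by the voting analogue of Lemma~\ref{lemma:border-equal} (Observation \ref{obs:border-equal-voting}) guide the construction so that truthful and alternative truthful actions at a border of two voting zones produce equal expected cost.

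To rule out a UT decomposition, I would first enumerate the deterministic truthful voting mechanisms on the chosen example. These include the constant mechanisms, the ``dictatorships'' that output a fixed voter's reported candidate, and a handful of hybrid monotone rules, each of which imposes a particular pattern of outputs across profiles. Viewing each deterministic truthful mechanism as a point in the vector space indexed by $(\bara,\calc_j)$-pairs, the set of UT mechanisms is the convex hull of these vectors. The mechanism $M^\star$ will be constructed so that its output vector, while satisfying TIE, is not in this convex hull; separation is then established either by a linear functional (a separating hyperplane) or by showing that any attempt to subtract off each candidate deterministic truthful mechanism with a nonnegative weight drives some coordinate of the residual negative. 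The target is a mechanism whose randomization uses correlation across profiles in a way that no simultaneous lottery over deterministic truthful mechanisms can reproduce --- for instance, by having $M^\star$ produce non-trivial support on a candidate on one profile while producing no weight on the same candidate on a neighboring profile, whereas every deterministic truthful mechanism that ever outputs that candidate is constrained to output it on both profiles.

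The main obstacle is the second step: constructing a TIE candidate mechanism whose output vector is provably outside the UT polytope. Counting dimensions, the TIE feasibility region is defined by weak inequalities with border-equality constraints, while the UT polytope is defined as the convex hull of a finite, somewhat rigid set of vertices; a clean separating example requires identifying a direction in which the TIE polytope strictly extends past the UT polytope. I would pin down such a direction by first writing out a single profile on which the deterministic truthful outputs take only a few possible values, and then tuning $M^\star$ on an adjacent profile (differing in one vote) so that the pair of output distributions is jointly achievable by TIE but inconsistent with any convex combination of deterministic truthful mechanisms whose weights are fixed across all profiles.
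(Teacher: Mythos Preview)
Your containment direction is fine and matches the paper. For strictness, the paper takes a far simpler route than you propose: it uses only \emph{two} candidates and the mechanism that picks an agent $i$ uniformly at random and outputs $a_i$ with probability $0.9$, the other candidate with probability $0.1$. TIE is immediate (the selected agent prefers probability $0.9$ on her favorite to $0.1$), and the paper argues the mechanism is not universally truthful because, for the \emph{specific} randomness described, there are realizations in which the selected agent regrets truth-telling; since any voting-to-voting consistent function is the identity, reducibility to a UT mechanism fails. Your three-candidate, enumerate-the-vertices, separating-hyperplane plan is considerably heavier machinery, and --- as you yourself flag --- it is only a plan: you never actually produce $M^\star$ or the separating functional.

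That said, there is a genuine reason to prefer three candidates if one reads universal truthfulness literally as ``expressible as \emph{some} distribution over deterministic truthful mechanisms'' (the paper's stated definition). With two candidates, TIE for a voting mechanism is exactly coordinatewise monotonicity of $p(\bara)=\Pr[M(\bara)=\calc_1]$, and every monotone $p:\{\calc_1,\calc_2\}^n\to[0,1]$ \emph{is} a mixture of deterministic monotone (hence truthful) rules: take $U$ uniform on $[0,1]$ and observe that $\bara\mapsto\mathbb{1}[p(\bara)>U]$ is monotone for each fixed $U$, with expectation $p(\bara)$. In particular the paper's mechanism is re-expressible as weight $0.1$ on each constant rule and weight $0.8/n$ on each of the $n$ threshold rules, so its argument is really an ex-post-regret argument about the given implementation, not an inexpressibility argument. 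If you are after the stronger ``not expressible as any mixture'' separation, moving to three candidates as you suggest is the right instinct --- but then you must actually exhibit $M^\star$ and verify it lies outside the UT polytope, which your proposal does not yet do.
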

		\begin{proof}	
		Any universally truthful voting mechanism is reducible to a truthful in expectation voting mechanism using the identity function $f$ (which is consistent).

		We exhibit a truthful in expectation (TIE) voting mechanism $M$ which is not reducible to any universally truthful mechanism. Let there be 2 candidates. $M$ chooses an agent $i$ uniformly at random, and chooses $a_i$ with probability 0.9 and the other candidate $\calc \setminus a_i$ with probability 0.1. 
		
		$M$ is truthful in expectation since for any agent $j$, if they are chosen, they are better off receiving their favorite candidate with probability 0.9 than with probability 0.1. $M$ is not universally truthful, since for each agent $i$ there exist cases in which reporting truthfully would lead to choosing their less favorite candidate, while there exist cases in which reporting non-truthfully would lead to choosing the favorite candidate. 
		Clearly, no composition with a consistent function $f$ can transform $M$ to a universally truthful mechanism (for instance, let $a_i = \calc_j$ for some $j$. From consistency, $f(a_i)=\calc_j$, so $f$ does not change the outcome at all).
		\end{proof}

		\begin{claim} \label{claim:two-candidates}
			When there are two candidates, the set of truthful in expectation randomized location mechanisms is equivalent to the set of truthful in expectation randomized voting mechanisms.		
		\end{claim}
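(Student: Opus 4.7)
The easy direction—that every TIE voting mechanism is reducible to a TIE location mechanism—is immediate from Observation \ref{obs:reducible-granular}, since location is of finer granularity than voting. For the harder direction I fix an arbitrary TIE location mechanism $M$ on two candidates $\calc_1,\calc_2$ at positions $y_1,y_2$ and aim to build an equivalent TIE voting mechanism $M'$. Since there are only two candidates, the whole distribution $M(\bx)$ is captured by the single number $p(\bx):=\Pr[M(\bx)=\calc_1]$, and the expected cost at any point $x$ is linear in $p$.

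The core step is a zone-invariance lemma: if $\bx,\bx'$ agree in the sense that each pair $x_i,x_i'$ lies in the same open voting zone, then $p(\bx)=p(\bx')$. I plan to prove this one coordinate at a time. For an agent $i$ whose $x_i$ lies strictly inside the zone of $\calc_k$, expanding the TIE inequality at $x_i$ against the alternative report $x_i'$ yields
$$\bigl(p(\bx)-p(\bx')\bigr)\bigl(|x_i-y_1|-|x_i-y_2|\bigr)\leq 0;$$
the second factor has a strict, known sign inside the open zone, so this pins down one direction of the comparison between $p(\bx)$ and $p(\bx')$. Applying the symmetric TIE inequality at $x_i'$ produces the reverse direction, forcing equality. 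Iterating over all coordinates extends the equality to arbitrary interior profiles.

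Given zone-invariance, I define $M'(\bara):=M(\bx^*(\bara))$ where $\bx^*(\bara)_i=y_j$ whenever $a_i=\calc_j$; that is, each agent is placed at the location of the candidate she voted for. To see $M'$ is TIE: for an agent at an interior $x_i$ with favorite $\calc_k$, zone-invariance equates $\cost_{x_i}\bigl(M',(\calc_k,\bara_{-i})\bigr)$ with $\cost_{x_i}\bigl(M,(x_i,\bx^*_{-i})\bigr)$, and TIE of $M$ then dominates this by $\cost_{x_i}\bigl(M,(y_{\neg k},\bx^*_{-i})\bigr)=\cost_{x_i}\bigl(M',(\calc_{\neg k},\bara_{-i})\bigr)$; border agents are trivial since both candidates are equidistant and the cost is independent of $p$. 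The reduction itself is then the consistent $f$ sending each location to its favorite candidate: for profiles $\bx$ whose agents are all strictly interior, $M(\bx)=M'(f(\bx))$ follows directly by applying zone-invariance to $\bx$ versus $\bx^*(f(\bx))$.

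The main obstacle I anticipate is handling profiles that include agents on the border, where TIE imposes no restriction on $p(\bx)$ beyond the interval between $p$ at ``all border agents pushed to $y_1$'' and $p$ at ``all border agents pushed to $y_2$'' (these two bounds come from TIE applied at $y_1$ and $y_2$ against a border report). My plan is to match the required value inside this interval by a correlated randomization of $f$ over two deterministic consistent functions, one sending every border location to $\calc_1$ and the other to $\calc_2$. The two-candidate hypothesis is essential here: with a single border hyperplane, matching the target $p(\bx)$ only requires a single one-dimensional mixing parameter, which is what makes a single randomized $f$ suffice across all inputs.
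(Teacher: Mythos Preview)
Your proposal is correct and follows essentially the same route as the paper: a coordinate-by-coordinate hybrid argument exploiting TIE to establish zone invariance, the definition of $M'$ by placing each voter at her candidate's location, and a correlated randomization of the consistent map over the two candidates to match $p(\bx)$ on border profiles. The only minor addition is that you spell out explicitly why $M'$ is TIE, which the paper leaves implicit.
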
	
		\begin{proof}			
			As shown in Observation \ref{obs:reducible-granular}, any truthful in expectation voting mechanism is reducible to some truthful in expectation location mechanism.	
			This also holds for voting mechanisms for two candidates.
			We now show that any truthful in expectation location mechanism with two candidates is reducible to some truthful in expectation voting mechanism. 
			The proof follows similar lines as the proof of Claim \ref{claim:det-location-ranking}.
			
			Let $\bx$ be an arbitrary location profile, let $M$ be an arbitrary truthful in expectation location mechanism, and let $B$ be the border between $\calc_1$ and $\calc_2$. Define $\bx'$ as the location profile which moves all agents which are not on borders to their favorite candidate, that is:
			\begin{eqnarray*}				
				x_i' =  \begin{cases}
					y_1 & \quad \text{if }  x_i \in \calv_1 \setminus B \\
					y_2 & \quad \text{if }  x_i \in \calv_2 \setminus B \\
					x_i & \quad \text{if }  x_i \in  B 
				\end{cases}
			\end{eqnarray*}			
			We now show that $M(\bx)=M(\bx')$ by using a hybrid argument. Define:
			\begin{eqnarray*}		
				&& w_0 = (x_1,\ldots,x_n) = \bx \\
				&& w_1 = (x_1',x_2,\ldots,x_n) \\
				&& \ldots \\
				&& w_i = (x_1', \ldots, x_i',x_{i+1},\ldots,x_n) \\
				&& \ldots \\				
				&& w_n = (x_1',\ldots,x_n') = \bx'
			\end{eqnarray*}	
			
			Assume towards a contradiction that $M(w_0) \neq M(w_n)$. Then there exists some index $j$ such that $\Pr\left[M(w_j)=\calc_1\right] \neq \Pr\left[M(w_{j-1})=\calc_1\right]$. If this is the case then necessarily $x_j \notin B$ since that would imply that $w_{j-1}$ and $w_{j}$ are precisely the same profile. Assume without loss of generality that $\Pr\left[M(w_j)=\calc_1\right] > \Pr\left[M(w_{j-1})=\calc_1\right]$. There are 2 options:
			\begin{itemize}
				\item If $x_j,x_j' \in \calv_1$, then under location profile $w_j$, agent $j$ has an incentive to misreport to $x_j'$.
				\item If $x_j,x_j' \in \calv_2$, then under location profile $w_{j-1}$, agent $j$ has an incentive to misreport to $x_j$.
			\end{itemize}
			Therefore is necessarily holds that $M(\bx)=M(\bx')$.
			
			We now use $\bx'$ to show the reduction: 
			Let $f$ be a function which maps voting profiles to location profiles, by mapping each vote to candidate $\calc_i$ to location $y_i$.
			This function is clearly consistent.
			Let $M'$ be a voting mechanism which receives a voting profile, translates it to a location mechanism using the consistent function $f$, and then simulates $M$ on the output of $f$ (see Figure \ref{fig:two-candidates}).
			
			\begin{figure}[t]			
				\setlength{\fboxrule}{0.5 pt}
				\noindent \fbox{\noindent\makebox[1\textwidth][c]
					{\begin{minipage}{1\textwidth}						
							\centering						
							\includegraphics[scale=0.7]{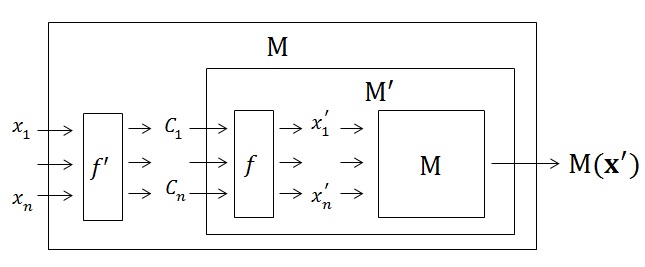}						
							\caption{A graphic sketch of the reductions in the proof of Claim \ref{claim:two-candidates}.}					
							\label{fig:two-candidates}						
						\end{minipage}}} \par\setlength{\fboxrule}{0.2pt}
			\end{figure}				
			
			For cases in which no agent is on the border, then the function $f'$ mapping location profiles to voting profiles is uniquely defined, and it holds that $M(\bx)=M'(f'(\bx))=M(f(f'(\bx)))$.
			It is left to show that in cases of agents on borders, there exists some consistent function $f'$ which breaks ties in the same manner that $M$ does.
			
			Let $\bx'^{(1)}$ be a location profile with $n_1$ agents at $y_1$, $n_2$ agents somewhere on the border $B$ and $n_3$ agents at $y_2$. In short, we note $\bx'^{(1)} = (n_1,n_2,n_3)$. 
			Recognize that $\bx'^{(1)}$ is a general location profile, after moving agents to their favorite candidates. 
			Using these amounts, define the following two location profiles:
			\begin{itemize}
				\item Let $\bx'^{(2)}$ be the profile in which there are $n_1+n_2$ agents at $y_1$ and $n_3$ agents at $y_2$ (that is $\bx'^{(2)}=(n_1+n_2,0,n_3)$).
				\item Let $\bx'^{(3)}$ be the profile in which there are $n_1$ agents at $y_1$ and $n_2+n_3$ agents at $y_2$ (that is $\bx'^{(3)}=(n_1,0,n_2+n_3)$).
			\end{itemize}
			Let $p_1=\Pr[M(\bx'^{(1)})=\calc_1]$, and similarly: $p_2=\Pr[M(\bx'^{(2)})=\calc_1]$ and $p_3=\Pr[M(\bx'^{(3)})=\calc_1]$. 
			Under these definitions, we show that: $p_3 \leq p_1 \leq p_2$:
			\begin{itemize}
				\item $p_1 \leq p_2$: Start with the profile $\bx'^{(1)}$, and move agents on the border one by one to $y_1$. If in each step the probability of choosing $\calc_1$ does not decrease then $p_1 \leq p_2$ as needed. Otherwise, there exists a profile $\hat{\bx}=(n_i,n_j,n_3)$ for which the probability of choosing $\calc_1$ is smaller than in the profile with $\hat{\bx'}=(n_i-1,n_j+1,n_3)$. If this were the case, then the agents on $y_1$ in profile $\hat{\bx}$ would benefit from misreporting to the point on a border, in contradiction to truthfulness.
				\item $p_3 \leq p_1$ is proved in the exact symmetrical manner, by moving agents from $B$ to $y_2$ one by one.
			\end{itemize}			
			
			By definition, a consistent function can map agents on borders to either of the two candidates, and can also choose any probabilities over the two agents. Therefore, for any $0 \leq q \leq 1$, there exists a consistent function which takes $n$ agents on the border and maps all of them to $\calc_1$ with probability $q$ and maps all of them to $\calc_2$ with probability $1-q$. For any $p_1$, we choose the consistent function $f'$ which uses a $q$ such that $p_2\cdot q + p_3\cdot(1-q)=p_1$. Under this function $f'$, the reduction does not change the outcome of the mechanism, as needed. 
			
		\end{proof}

	\end{proof}

	\subsection{Missing Proofs from Section \ref{sec-spike}}
	
	\begin{proof} of Lemma \ref{lemma:WPV-truthful}: \label{prf:wpv}
		Let $\bx$ be some location profile. Let $\bara$ be a voting profile, where $\bara$ is in ascending order, that is: $y(a_1) \leq y(a_2) \leq \ldots \leq y_(a_n)$. 
		Let $a_i'$ be a misreport of some agent $i$, and without loss of generality assume $y(a_i') < y(a_i)$. Denote the report directly to the left of $a_i'$ as $a_j$, that is: $y(a_j) \leq y(a_i') \leq y(a_{j+1})$ . Let $\bara'$ be the reports $(a_i',a_{-i})$ in ascending order (see Figure \ref{fig:WPV}).
		% and denote the Denote there be a misreport of agent  $i$ be an agent misreporting to $a_i'$ under location profile $\bx$. Let $\bara=(a_i,a_{-i})$ and $\bara'=(a_i',a_{-i})$. Denote the location of $a_i$ as $y(a_i)$. In addition, let the report directly left of $a_i'$ as $j$ ($j=0$ if $a_i'$ is the minimal report), that is: $a_j \leq a_i' \leq a_{j+1}$ (see Figure \ref{fig:WPV}). Assume WLOG that $j<i$.
		\begin{eqnarray*}		
			cost_{x_i}(M,\bara)-cost_{x_i}(M,\bara') &=& \sum_{k=1}^n p_k|y(a_k)-x_i| - \sum_{k=1}^n p_k|y(a_k')-x_i| \\
			&=& \sum_{k=j+1}^i p_k[|y(a_k)-x_i|-|y(a_k')-x_i|] 			
		\end{eqnarray*}		
		The last equality holds since for any $k$ such that $k \leq j$ or $k>i$: $y(a_k)=y(a_k')$. 
		
		\noindent For any $k$ such that: $j<k\leq i$: $|y(a_k')\leq y(a_k)|$. In this case:
		\begin{itemize}
			\item If $x_i \geq y(a_k)$: $|y(a_k')-x_i| \geq |y(a_k)-x_i|$ from the triangle inequality.
			\item If $x_i < y(a_k)$: Then $y(a_k)=y(a_i)$ and $a_i$ is the true action of $x_i$, i.e., there is no candidate closer to $x_i$ than it.			
		\end{itemize}
		Therefore, for all $k$: $|y(a_k)-x_i|\leq |y(a_k')-x_i|$. Hence the agent cannot benefit from misreporting regardless of the results of the random bits, so $M$ is universally truthful.
		\begin{figure} [t]
			
			\setlength{\fboxrule}{0.5 pt}
			\noindent \fbox{\noindent\makebox[1\textwidth][c]
				{\begin{minipage}{1\textwidth}
						
						\centering
						
						\begin{tikzpicture}[y=.3cm, x=.3cm,font=\sffamily]
						\draw[<-, thick] (0,0) -- (20,0);
						\draw[dashed, thick] (20,0) -- (26,0);
						\draw[->, thick] (26,0) -- (40,0);										
						
						\draw[fill=white] (5, 0) circle (0.4) node  (1) {};
						\draw[fill=white] (8, 0) circle (0.4) node  (2) {};
						
						\draw[fill=white] (12, 0) circle (0.4) node (3) {} ;       			
						\draw[->, thick] (32,1) .. controls (22,2) .. (13, 1);			
						
						\draw[fill=white] (16, 0) circle (0.4) node (4) {};
						\draw[fill=white] (28, 0) circle (0.4) node (5) {};
						\draw[fill=white] (33, 0) circle (0.4) node (6) {};
						\draw[fill=white] (38, 0) circle (0.4) node (7) {};
						
						\node [below = 0.3 cm of 1](a){$a_{j-1}$};
						\node [below = 0.8 cm of 1](b){$a_{j-1}$};			
						\node [below = 1.4 cm of 1](c){$a_{j-1}'$};			
						\node [below = 0.3 cm of 2]{$a_{j}$};
						\node [below = 0.8 cm of 2]{$a_{j}$};				
						\node [below = 1.4 cm of 2]{$a_{j}'$};							
						\node [below = 0.8 cm of 3]{$a_{i}'$};	
						\node [below = 1.4 cm of 3]{$a_{j+1}'$};				
						\node [below = 0.3 cm of 4]{$a_{j+1}$};
						\node [below = 0.8 cm of 4]{$a_{j+1}$};	
						\node [below = 1.4 cm of 4]{$a_{j+2}'$};
						\node [below = 0.3 cm of 5]{$a_{i-1}$};
						\node [below = 0.8 cm of 5]{$a_{i-1}$};
						\node [below = 1.4 cm of 5]{$a_{i}'$};			
						\node [below = 0.3 cm of 6]{$a_{i}$};
						\node [below = 0.3 cm of 7]{$a_{i+1}$};
						\node [below = 0.8 cm of 7]{$y(a_{i+1})$};
						\node [below = 1.4 cm of 7]{$y(a_{i+1}')$};
						
						\node [left = 0.5 cm of a, align=center]{$\bara$:};
						%\node [left = 0.5 cm of b, align=center ]{Profile $\mathbf{x}'$ (before renaming):};
						\node [left = 0.5 cm of c, align=center ]{$\bara'$:};
						\end{tikzpicture}
						
						\caption{Misreporting in a WPV: $\bara$ is shown in the first line, the misreport in the second line, and the misreport after renaming the reports in ascending order in the third line.}						
						\label{fig:WPV}
					\end{minipage}}} \par\setlength{\fboxrule}{0.2pt}

				\end{figure}
				
			\end{proof}

	\begin{proof} of Lemma \ref{lemma:tight}: \label{prf:tight}
		For any candidate $j$ denote $p_j=\Pr[M(\bara)=y_j]$ and $p_j'=\Pr[M(\bara')=y_j]$.
		Define $\Delta_j$ as the difference in the cost of candidate $j$ under profile $\mathbf{x}$ and her cost under  $\mathbf{x}'$, that is:
		$\Delta_j = \sum_{i=1}^n |y_j-x_i|- \sum_{i=1}^n |y_j-x_i'|$. 
		Since $\bx'$ was defined by moving all agents towards $\calc_{opt}$, then: $\forall j: \Delta_{opt} \geq \Delta_j$. As noted previously, this means that $\calc_{opt}$ remains the optimal candidate under $\bara'$. According to Lemma \ref{lemma:border-outward}, the worst-case ratio occurs when all agents on borders vote outwards (farther from $y_{opt}$), so the probabilities under $\bara'$ remain the same as under $\bara$: $\forall j: p_j=p_j'$.
		
		We now assess the approximation ratio under profile $\bx'$:
		$$\SC(\OPT,\bx') = \sum_i |y_{opt}-x_i'|=\SC(\OPT,\bx)-\Delta_{opt}$$
		The cost of the spike mechanism on $\bx'$ is:
		\begin{eqnarray*}				
			\SC(M,\bx') &=& p_j' \cdot \left[\sum_i |y_j-x_i'|\right] \\
			&=& p_j \cdot \left[ \left( \sum_i |y_j-x_i| \right) -\Delta_j\right] \\
			&\geq& p_j \cdot \left[ \left(  \sum_i |y_j-x_i| \right) -\Delta_{opt}\right] \\
			&=& \SC(M,\bx)-\Delta_{opt}
		\end{eqnarray*}	
		Therefore, the new approximation ratio is:
		\begin{eqnarray*}				
			\displaystyle \frac{\SC(M,\bx')}{\SC(\OPT,\bx')} &\geq&
			\displaystyle \frac{\SC(M,\bx') - \Delta_{opt}} {\SC(\OPT,\bx')-\Delta_{opt}} 
			\geq \displaystyle \frac{\SC(M,\bx)}{\SC(\OPT,\bx)}
		\end{eqnarray*}			
	\end{proof}		
	
	\begin{proof} of Lemma \ref{lemma:inwards}: \label{prf:inwards}
		We use the same notation as in Figures \ref{fig:spike-2} and \ref{fig:spike-3} where there are $n_i$ agents at point $\hat{x}_i$.
		
		Let $\Delta=\SC(\OPT,\bx)-\SC(\OPT,\bx') = n_1\cdot(\hx_2-\hat{x}_1)>0$. 
		It is sufficient to show that $\SC(S,\bx, \bara)-\SC(S,\bx', \bara) \leq 2\Delta$ since that would imply:
		\begin{eqnarray*}				
			\frac{\SC(S,\bx, \bara)}{\SC(\OPT,\bx)} \leq 
			\frac{\SC(S,\bx', \bara)+2\Delta}{\SC(\OPT,\bx')+\Delta} \leq \frac{2\cdot \SC(\OPT,\bx'))+2 \Delta}{\SC(\OPT,\bx')+\Delta}=2
		\end{eqnarray*}
		Denote the probabilities as follows: $p_i=\Pr[S(\bara)=y_i]$, $p_i'=\Pr[S(\bara')=y_i]$. Similarly, the costs of the candidates are denoted: $c_i= \SC (\calc_i,\bx)= \sum_{j=1}^n|x_j-y_i|$ and $c_i'= \SC (\calc_i,\bx')=\sum_{j=1}^n|x_j'-y_i|$. 
		The probabilities and costs under profile $\bx'$ are:
		\begin{eqnarray*}				
			p_i' = \begin{cases}
				0 & \quad \text{if }  i=1 \\
				p_1+p_2 & \quad \text{if }  i=2 \\
				p_i & \quad \text{if }  i \geq 3
			\end{cases}
		\end{eqnarray*}			
		Denote $\delta = n_1\left(|\hat{x}_2-y_2|-|y_2-\hat{x}_1|\right)$, so the costs are:
		\begin{eqnarray*}				
			c_i' = \begin{cases}
				c_2+n_1\left(|x_2-y_2|-|y_2-x_1|\right)=c_2+\delta & \quad \text{if }  i=2 \\
				c_i-\Delta & \quad \text{if }  i \geq 3
			\end{cases}
		\end{eqnarray*}				
		
		Therefore, the difference in the cost is:
		\begin{eqnarray*}				
			\SC(S,\bx)-\SC(S,\bx') &=& \sum_i \left(p_i c_i - p_i' c_i'\right) \\
			&=& p_1 c_1 + p_2 c_2 - \left( p_1 + p_2 \right)\left(c_2 +\delta\right)  + \sum_{i \geq 3}p_i \left[c_i - (c_i-\Delta )\right] \\
			&=& p_1\left(c_1-c_2-\delta\right) - p_2\cdot \delta +   \sum_{i \geq 3}p_i\Delta \\
			&=& p_1\left(c_1-c_2-\delta\right) - p_2\cdot \delta + \left(1 - p_1 - p_2\right)\Delta
		\end{eqnarray*}			
		Due to the triangle inequality: 
		\begin{eqnarray*}				
			&& |\hx_2-y_2| \leq |\hx_2 - \hx_1| + |\hx_1 - y_2| \\
			&\Leftrightarrow& \delta = n_1 \left( |\hx_2 - y_2| - |\hx_1 - y_2|  \right) \leq n_1 |\hx_2 - \hx_1| = |\Delta|
		\end{eqnarray*}				
		
		Therefore:
		\begin{eqnarray*}				
			&& p_1\left(c_1-c_2-\delta\right) - p_2\cdot \delta + \left(1 - p_1 - p_2\right)\Delta \\
			&\leq& p_1\left(c_1-c_2+\Delta \right) + p_2\cdot \Delta + \left(1 - p_1 - p_2\right) \Delta \\
			&=& p_1\left(c_1-c_2 \right) + \Delta
		\end{eqnarray*}													
		Also $c_1-c_2=(n-n_1)|y_2-y_1|$, so together:
		\begin{eqnarray*}				
			\SC(S,\mathbf{x})-\SC(S,\mathbf{x}') &\leq& p_1\left[(n-n_1)|y_2-y_1| \right] + \Delta
		\end{eqnarray*}		
		To conclude the proof it is left to show that $p_1\left[(n-n_1)|y_2-y_1| \right] \leq \Delta=|\hat{x}_2-\hat{x}_1|n_1$. 
		Since $\frac{|y_2-y_1|}{2} = |y_2-\hat{x}_1|<|\hat{x}_2-\hat{x}_1|$, it is sufficient to show that: $p_1(n-n_1)\leq \frac{n_1}{2}$. 
		$S$ is a spike mechanism, so $p_1$ is:
		\begin{itemize}
			\item If $n_1 \leq n/2$ then: $p_1=\frac{n_1}{2(n-n_1)} \Rightarrow p_1(n-n_1)=\frac{n_1}{2(n-n_1)}(n-n_1)=n_1/2$.
			\item If $n_1 > n/2$ then:  $p_1=1.5-\frac{n}{2n_1}=\frac{3n_1-n}{2n_1}$, so:
			\begin{eqnarray*}				
				p_1(n-n_1) &=& \frac{3n_1-n}{2n_1}(n-n_1) = \frac{-n^2-3n_1^2+4nn_1}{2n_1} = \frac{-n^2-4n_1^2+4nn_1}{2n_1}+\frac{n_1^2}{2n_1} \\
				&=& \frac{-(n-2n_1)^2}{2n_1}+\frac{n_1}{2} \leq \frac{n_1}{2}.
			\end{eqnarray*}	
			This concludes the proof of the lemma.
		\end{itemize}		
	\end{proof}	
	
	\begin{proof} of Lemma \ref{lemma:three-candidates}: \label{prf:three-candidtes}
		We use the notations of the location of the left, center and right candidates in the following manner $y_L=y_{opt-1}$, $y_C=y_{opt}$, $y_R=y_{opt+1}$, and the amount of agents in $b_L,y_C,b_C$ as $L,C,R$ respectively (see Figure \ref{fig:spike-4}). Denote the probabilities of choosing the candidates as $p_L=\Pr(S(\bx)=y_L)$, $p_C=\Pr(S(\bx)=y_C)$, $p_R=\Pr(S(\bx)=y_R)$.  Also, without loss of generality, the distances can be scaled such that $b_C-y_C=1$. Define: $\beta = y_C-b_L$.
		
		The costs of the different candidates are:
		\begin{eqnarray*}		
			&& \SC(\calc_L, \bx) = \beta \cdot (L+2C+2R)+R  \\
			&& \SC(\calc_C, \bx) = L\cdot \beta + R =\SC(\OPT,\bx) \\
			&& \SC(\calc_R, \bx) = L\beta + (2L+2C+R)
		\end{eqnarray*}	
		
		Due to the definition of the spike mechanism, the proof is broken into two parts:
		\begin{enumerate}
			\item The median agent is on $y_C$
			\item The median agent is on $b_L$
		\end{enumerate}
		Note that the last option (in which the median agent is on $b_C$) is identical to the second case due to symmetry, therefore proving for these two cases is sufficient.
		
		In the first case, the median agent is at the center, therefore $L<C+R$ and $R<L+C$, and from the definition of the spike mechanism:
		\begin{eqnarray*}		
			&& p_L = \frac{L}{2(C+R)} \\
			&& p_R = \frac{R}{2(C+L)} \\
			&& p_C = 1- p_L - p_R = 1 - \frac{L}{2(C+R)} - \frac{R}{2(C+L)}
		\end{eqnarray*}			
		Therefore the ratio is:
		\begin{eqnarray*}		
			\frac{\SC(M,\bx)}{\SC(\OPT,\mathbf{x})} &=&  \frac{p_L\SC(\calc_L, \bx) + p_C\SC(\calc_C, \bx) + p_R\SC(\calc_R, \bx)} {\SC(\calc_C, \bx)} \\
			&=& \frac{p_L\SC(\calc_L, \bx) + p_R\SC(\calc_R, \bx)} {\SC(\calc_C, \bx)} + p_C \\
			&=& \frac{1}{L\beta+R} \left[ \frac{L(\beta(L+2C+2R)+R )}{2(C+R)}+\frac{R(L\beta + (2L+2C+R) )}{2(L+C)}  \right] \\
			&+& \left( 1 - \frac{L}{2(C+R)} - \frac{R}{2(L+C)} \right) \\
			&=& 1 + \frac{L(L\beta+2C\beta+2R\beta+R )}{2(C+R)(L\beta+R)} - \frac{L(L\beta+R)}{2(C+R)(L\beta+R)} \\
			&+& \frac{R(L\beta+2L+2C+R)}{2(L+C)(L\beta+R)} - \frac{R(L\beta+R)}{2(L+C)(L\beta+R)} \\
			&=& 1 + \frac{L(2C\beta+2R\beta)}{2(C+R)(L\beta+R)} + \frac{R(2L+2C)}{2(L+C)(L\beta+R)} \\
			&=& 1 + \frac{L\beta}{L\beta+R}+ \frac{R}{L\beta+R} = 2
		\end{eqnarray*}				
		Therefore the ratio cannot exceed 2 in the first case. 
		
		In the second case the median agent is at $b_L$ and the probabilities are:
		\begin{eqnarray*}		
			p_L &=& 1.5 - \frac{L+C+R}{2L} = 1 - \frac{C+R}{2L} \\
			p_R &=& \frac{R}{2(C+L)} \\
			p_C &=& 1- p_L - p_R = 1 - \left( 1 - \frac{C+R}{2L} \right) - \frac{R}{2(C+L)} \\
			&=& \frac{C+R}{2L} - \frac{R}{2(C+L)}
		\end{eqnarray*}				
		Therefore the approximation ratio is:
		\begin{eqnarray*}		
			&& \frac{\SC(M,\bx)}{\SC(\OPT,\bx)} =   \frac{p_L\SC(\calc_L,\bx)+p_R\SC(\calc_R,\bx)}{\SC(\calc_C,\bx)}+p_C \\
			&=& \frac{1}{L\beta+R} \left[\left(1 - \frac{C+R}{2L}  \right) (L\beta+2C\beta+2R\beta+R) + \frac{R}{2(L+C)}(L\beta+2L+2C+R) \right] \\
			&+& \frac{C+R}{2L} - \frac{R}{2(C+L)} \\
			&=& \frac{1}{L\beta+R} \left[ L\beta+R+2C\beta+2R\beta - \frac{(C+R)(L\beta+2C\beta+2R\beta+R)}{2L} + \frac{R(2L+2C+L\beta+R)}{2(L+C)}  \right] \\
			&+& \frac{C+R}{2L} - \frac{R}{2(C+L)} \\
			&=& 1 + \frac{1}{L\beta+R} \left[ 2\beta(C+R) - \frac{(C+R)L\beta}{2L}-\frac{(C+R)(2C\beta+2R\beta+R)}{2L} + R +  \frac{R(L\beta+R)}{2(L+C)}  \right] \\
			&+& \frac{C+R}{2L} - \frac{R}{2(C+L)} \\
			&=& 1 + \frac{1}{L\beta+R} \left[ \frac{3\beta(C+R)}{2} - \frac{(C+R)(2C\beta+2R\beta+R)}{2L}  + R   \right]  + \frac{R}{2(L+C)} \\
			&+& \frac{C+R}{2L} - \frac{R}{2(C+L)} \\
			&=& 1 + \frac{3\beta(C+R)+2R}{2(L\beta+R)} -  \frac{(C+R)(2C\beta+2R\beta+R)}{(L\beta+R)2L}  + \frac{LR}{2L(L+C)} + \frac{C+R}{2L} - \frac{R}{2(C+L)}   \\
			&=& 1 + \frac{3\beta(C+R)+2R}{2(L\beta+R)} - \frac{(C+R)(2C\beta+2R\beta+R)}{2L(L\beta+R)}  +  \frac{(C+R)}{2L}   \\
		\end{eqnarray*}		
		Now, in order to show this is a 2 approximation:
		\begin{eqnarray*}		
			&& 1 + \frac{3\beta(C+R)+2R}{2(L\beta+R)} - \frac{(C+R)(2C\beta+2R\beta+R)}{2L(L\beta+R)}  +  \frac{(C+R)}{2L}  \leq 2 \\
			&\Leftrightarrow& \frac{3\beta(C+R)+2R}{2(L\beta+R)} - \frac{(C+R)(2C\beta+2R\beta+R)}{2L(L\beta+R)}  +  \frac{(C+R)(L\beta+R)}{2L(L\beta+R)}  \leq 1
		\end{eqnarray*}			
		And by multiplying both sides by the common denominator $2L(L\beta+R)$:
		\begin{eqnarray*}		
			&& L[3\beta(C+R)+2R] + (C+R)(-2C\beta-2R\beta-R+L\beta+R)  \leq 2L(L\beta+R) \\
			&\Leftrightarrow&  L[3\beta(C+R)+2R] + (C+R)(L\beta-2C\beta-2R\beta)  \leq 2L(L\beta+R) \\
			&\Leftrightarrow& L[3\beta(C+R)] + \beta(C+R)(L-2C-2R)  \leq 2L \cdot L\beta
		\end{eqnarray*}				
		Since $\beta$ is always positive, it is possible to divide both sides by $\beta$:
		\begin{eqnarray*}		
			&& L[3(C+R)] + (C+R)(L-2C-2R)  \leq 2L^2 \\
			&\Leftrightarrow& 3LC+3LR + LC-2C^2-2CR+LR-2CR-2R^2  \leq 2L^2 \\
			&\Leftrightarrow& 0  \leq 2L^2+2C^2+2R^2-4LC-4LR+4CR \\
			&\Leftrightarrow& 0  \leq L^2+C^2+R^2-2LC-2LR+2CR \\
			&\Leftrightarrow& 0  \leq (L-C-R)^2 \\			
		\end{eqnarray*}						
		This term is indeed always non-negative, so this concludes the proof.
	\end{proof}			
	
	\begin{lemma} \label{lemma:border-outward}
		Let $\mathbf{x}$ be an arbitrary location profile, and let agent $i$ be on a border $b_j$ such that $y_{j}<x_i=b_j<y_{j+1} \leq y_{opt}$. 
		
		\noindent Let $\bara_1=(a_i=\calc_{j},\bara_{-i})$, let $\bara_2=(a_i=\calc_{j+1},\bara_{-i})$, and let $M$ be some WPV mechanism. 
		
		\noindent Then $\SC(M,\bx,\bara_1) \geq \SC(M,\bx,\bara_2)$.
	\end{lemma}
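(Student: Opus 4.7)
The intuition is that replacing one vote at $y_j$ by one at $y_{j+1}$ shifts the ascending-sorted vote sequence rightward at a single position, and because $y_{j+1}$ is at least as close as $y_j$ to $y_{opt}$, this single change only improves the social cost.

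\textbf{Step 1: Reduce to comparing $\SC(\calc_j,\bx)$ and $\SC(\calc_{j+1},\bx)$.} I would inspect the two ascending-sorted vote sequences induced by $\bara_1$ and $\bara_2$. Since only agent $i$'s vote changes, the two multisets of vote locations differ only by replacing one occurrence of $y_j$ by one occurrence of $y_{j+1}$. Letting $k^\star := 1 + |\{ l \neq i : y(a_l) \leq y_j\}|$, a short bookkeeping argument shows that the $k^\star$-th entry of the sorted sequence equals $y_j$ under $\bara_1$ and $y_{j+1}$ under $\bara_2$, while every other entry is identical across the two profiles. Because $M$ is a WPV mechanism whose weights $\{p_k\}$ do not depend on the reports, this yields the clean identity
\[
\SC(M,\bx,\bara_1) - \SC(M,\bx,\bara_2) \;=\; p_{k^\star}\bigl[\SC(\calc_j,\bx) - \SC(\calc_{j+1},\bx)\bigr].
\]
Since $p_{k^\star} \geq 0$, proving the lemma reduces to showing $\SC(\calc_j,\bx) \geq \SC(\calc_{j+1},\bx)$.

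\textbf{Step 2: Deduce the inequality from convexity and tie-breaking.} The function $y \mapsto \SC(y,\bx) = \sum_l |y - x_l|$ is convex and attains its minimum at any median $m$ of $\bx$; it is non-increasing on $(-\infty, m]$ and non-decreasing on $[m, +\infty)$. If $y_{j+1} \leq m$, then $y_j < y_{j+1} \leq m$ both lie in the non-increasing region and $\SC(\calc_j,\bx) \geq \SC(\calc_{j+1},\bx)$ is immediate. Otherwise $m < y_{j+1} \leq y_{opt}$, placing both $y_{j+1}$ and $y_{opt}$ in the non-decreasing region, so $\SC(y_{j+1},\bx) \leq \SC(y_{opt},\bx)$; combined with optimality of $\calc_{opt}$ among candidates this is an equality, i.e., $\calc_{j+1}$ is also optimal. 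The leftmost tie-breaking rule defining $\calc_{opt}$, together with $y_{j+1} \leq y_{opt}$, then forces $y_{j+1} = y_{opt}$, and the desired inequality collapses to the optimality statement $\SC(y_{opt},\bx) \leq \SC(\calc_j,\bx)$.

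\textbf{Expected main obstacle.} Neither step involves serious computation; the delicate point is the second case of Step 2, where one must carefully invoke the leftmost tie-breaking convention to rule out the configuration $m < y_{j+1} < y_{opt}$ with $\calc_{j+1}$ also optimal. Everything else is routine bookkeeping about sorted multisets and the standard convexity/monotonicity properties of sums of absolute values.
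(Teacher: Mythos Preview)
Your proposal is correct and follows essentially the same approach as the paper: both reduce to showing $\SC(\calc_j,\bx)\ge\SC(\calc_{j+1},\bx)$ via the observation that the vote change shifts probability mass only between $\calc_j$ and $\calc_{j+1}$ (your $p_{k^\star}$ is exactly the paper's $\alpha$), and then both use the convexity/single-troughedness of $y\mapsto\sum_l|y-x_l|$ together with the leftmost tie-breaking convention for $\calc_{opt}$ to finish. Your Step~2 case split is a mild rephrasing of the paper's, and you are explicit about the tie-breaking point that the paper leaves implicit under ``by definition of optimality.''
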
	
	\begin{proof}
		Let $p_i = \Pr[M(\bara_1)=\calc_i]$ and $q_i = \Pr[M(\bara_2)=\calc_i]$. 
		According to the definition of WPV mechanisms, the change of vote only affects the probabilities of candidates $\calc_j$ and $\calc_{j+1}$, that is: $\forall k \neq j, j+1$: $p_k=q_k$. 
		Denote: $p_{j}=q_{j}+\alpha$ and $p_{j+1}+\alpha=q_{j+1}$ for some $\alpha>0$. 
		Therefore:
		\begin{eqnarray*}		
			\SC(M,\bx, \bara_1)-\SC(M,\bx, \bara_2) &=& 
			\sum_k p_k \cdot \SC(\calc_k,\bx) - \sum_k q_k \cdot  \SC(\calc_k,\bx) \\
			&=& \alpha \left[\SC(\calc_{j},\bx)-\SC(\calc_{j+1},\bx)\right]
		\end{eqnarray*}	
		
		Therefore it is sufficient to show that $\SC(\calc_{j},\bx) \geq \SC(\calc_{j+1},\bx)$. We define the cost function for any point on the line: $f(x) = \sum_k |x-x_k|$. By definition, for any candidate $\calc_l$: $\SC(\calc_{l},\bx) = f(y_l)$. 
		
		Clearly, $f(x)$ is single-peaked, with a peak at the median $x_{\ceil*{n/2}}$, since moving in any direction away from the median only increases the distance to at least half of the agents. 
		
		We check the different cases:
		\begin{itemize}
			\item If $y_{opt} \leq \xmed$: Then $y_j < y_{j+1} \leq y_{opt} \leq \xmed \Leftarrow f(y_j) \geq f(y_{j+1})$.
			\item If $y_{opt} > \xmed$: 
			\begin{itemize}
				\item If $y_{j+1}=y_{opt}$: Then the proof is concluded by definition of optimality.
				\item If $y_{j+1} < y_{opt}$ then by definition of optimality: $y_j < y_{j+1} <\xmed < y_{opt}$. Therefore: $f(y_j) \geq f(y_{j+1})$.
			\end{itemize}
		\end{itemize}		
	\end{proof}			
	The proof also holds for the symmetrical case in which $y_{opt} \leq y_j < b_j = x_i < y_{j+1}$.

	\subsection{Missing Proofs from Section \ref{sec-randomized}}
	
	\begin{proof} of Lemma \ref{lemma:border-equal}: \label{prf:border-equal}
		Proof via contradiction. 
		Assume $\cost_{x_l}(M,(a_l=\pi_j,a_{-l})) = \cost_{x_l}(M,(a_l=\pi_i,a_{-l}))+\delta$ for some $\delta>0$.
		Let there be an agent $k$ located in ranking zone $\calr_j$ such that $|x_k-x_l|=\epsilon < \frac{\delta}{2}$. 
		
		Then agent $k$ has an incentive to misreport:
		\begin{eqnarray*}		
			\cost_{x_k}(M,(a_k=\pi_j,a_{-k})) &\geq&  \cost_{x_l}(M,(a_k=\pi_j,a_{-k})) - \epsilon \\
			&=& \cost_{x_l}(M,(a_k=\pi_i,a_{-k})) + \delta - \epsilon \\
			&>& \cost_{x_l}(M,(a_k=\pi_i,a_{-k})) + \epsilon \\
			&\geq& \cost_{x_k}(M,(a_k=\pi_i,a_{-k}))
		\end{eqnarray*}		
		The transitions in the first and last rows are due to the triangle inequality (for any location the mechanism may choose), the second row holds by the assumption, and the third row holds since $\epsilon < \frac{\delta}{2}$.
		
		Agent $k$ has an incentive to misreport, contradicting the assumption and completing the proof.
	\end{proof}

	\begin{proof} of Lemma \ref{lemma:rand-lb-2-non-strategic}: \label{prf:rand-lb-2-non-strategic}
		We prove the lower bounds for the case of two agents and two candidates.
		Let $y_1=-1,y_2=1$.
		Let $\bara=(a_1,a_2)$ be the ranking profile in which the two agents prefer different candidates, that is: $a_1=\calc_1 \succeq \calc_2$, $a_2 = \calc_2 \succeq \calc_1$.
		
		Examine the following two location profiles $\bx,\bx'$, (in both cases for which $\bara$ is a truthful voting profile): $\mathbf{x}=(-1,\epsilon)$, $\mathbf{x'}=(-\epsilon,1)$.
		
		We show that any decision of the mechanism makes will cause an approximation ratio of 2 either in $\bx$ or in $\bx'$.
		
		It is easy to see that:
		\begin{eqnarray*}
			&& \SC(\calc_1,\bx) = 1 + \epsilon = \SC(\OPT,\bx) \\
			&& \SC(\calc_2,\bx) = 3 - \epsilon \\			
			&& \SC(\calc_1,\bx') = 3 - \epsilon \\			
			&& \SC(\calc_2,\bx') = 1 + \epsilon = \SC(\OPT,\bx')
		\end{eqnarray*}

		Denote $p  =\Pr[M(\bara)=\calc_1]$. Therefore:
		\begin{eqnarray*}
			&& \SC(M,\bx) = p(1 + \epsilon) + (1-p)(3-\epsilon) \\
			&& \SC(M,\bx') = p(3 - \epsilon) + (1-p)(1 + \epsilon) 
		\end{eqnarray*}			 
		
		The approximation ratio is therefore at least: 
		
		\begin{eqnarray*}		
			&& \min_{0 \leq p \leq 1} \left\lbrace \max \left\lbrace \frac {\SC(M,\bx) }{\SC(\OPT,\bx)}, \frac{\SC(M,\bx')}{\SC(\OPT,\bx')} \right\rbrace \right\rbrace = \\
			&& \min_{0 \leq p \leq 1} \left\lbrace \max \left\lbrace \frac {p(1+\epsilon)+(1-p)(3-\epsilon)}{1+\epsilon}, \frac{(1-p)(1+\epsilon)+p(3-\epsilon)}{1+\epsilon} \right\rbrace \right\rbrace = \\
			&& \min_{0 \leq p \leq 1} \left\lbrace \max \left\lbrace 1+2p-p\epsilon, 3-2p+2p\epsilon-\epsilon \right\rbrace \right\rbrace
		\end{eqnarray*}
		The optimal value is reached at $p=0.5$, and it is $2-\frac{\epsilon}{2}$, which tends to 2 as $\epsilon$ tends to $0$.
	\end{proof}

	\begin{proof} of Lemma \ref{lemma:rd-3} (Random Dictator): \label{prf:rd-3}
		Random dictator (RD) is a WPV mechanism and so it is universally truthful according to Lemma \ref{lemma:WPV-truthful}.
		
		We start by showing that the ratio can be arbitrarily close to 3. Let $y_1=-1,y_2=1$, and let $x_1=\ldots=x_{n-1}=-1$ and $x_n=1$. Therefore the costs are: $\SC(\calc_1, \bx)=1+\epsilon=\SC(\OPT,\bx)$ and $\SC(\calc_2, \bx) = 2(n-1)+(1-\epsilon)$. 
		
		Ergo: $\SC(RD,\bx)= \frac{n-1}{n}\cdot \SC(\calc_1,\bx) + \frac{1}{n}\cdot \SC(\calc_2,\bx)=3-\frac{2}{n}+\frac{2\epsilon}{n}+\epsilon$. 
		The approximation ratio is therefore:
		\begin{eqnarray*}		
			\frac{\SC(RD,\bx)}{\SC(\OPT,\bx)} = \frac{3-\frac{2}{n}+\frac{2\epsilon}{n}+\epsilon}{1+\epsilon} 
			= 3-\frac{2\epsilon+\frac{2}{n}-\frac{2\epsilon}{n}}{1+\epsilon}
		\end{eqnarray*}		
		
		Clearly this ratio tends to $3$ as $n \rightarrow \infty, \epsilon \rightarrow 0$.
		
		We now show that the approximation ratio is bounded from above by 3.
		%Denote the location of agent $i$'s vote, $a_i$, as $y(a_i)$.
		The social cost is:
		\begin{eqnarray*}		
			&& \SC(RD,\bx)=\frac{1}{n} \sum_{i} \left( \sum_j |x_j-y(a_i)| \right) \\
			&\leq& \frac{1}{n} \sum_{i} \left( \sum_j |x_j-y_{opt}| +|y_{opt}-y(a_i)| \right) \\
			&\leq& \frac{1}{n} \sum_{i} \left( \sum_j |x_j-y_{opt}| +|y_{opt}-x_i|+|x_i-y(a_i)| \right) \\
			&\leq& \frac{1}{n} \sum_{i} \left( \sum_j |x_j-y_{opt}| +2|y_{opt}-x_i| \right) \\		
			&=& \frac{1}{n} \sum_{i} \left( \sum_j |x_j-y_{opt}| \right) + \frac{1}{n} \sum_{i} \left( \sum_j 2|y_{opt}-x_i| \right) \\
			&=& \frac{1}{n} \sum_j \left(|x_j-y_{opt}|\sum_{i}(1) \right) + \frac{2}{n} \sum_{i} \left(|y_{opt}-x_i| \sum_j (1) \right) \\	
			&=& \frac{1}{n} \sum_j \left(|x_j-y_{opt}|\cdot n \right) + \frac{2}{n} \sum_{i} \left(|y_{opt}-x_i| \cdot n \right)	\\			
			&=& \SC(\OPT,\bx) + 2\cdot \SC(\OPT,\bx) = 3\cdot \SC(\OPT,\bx)
		\end{eqnarray*}		
		
		The first two transitions hold due to the triangle inequality, and the third inequality holds due to fact that no candidate is closer to $x_i$ than $y(a_i)$ is.
		
		Notice that this holds in any metric space since we only used the triangle inequality, and did not use any notion which is specific to the line.
	\end{proof}

	\label{GSP}
	A mechanism is \textit{group-strategyproof (GSP)} if for any location profile and any coalition $S \subseteq N$, there is no joint deviation of the agents in $S$ from the truthful reports such that they all gain.
	That is: 
	\begin{eqnarray*}		
		&& \forall S\subseteq N, \forall a_S \in \cala(\bx_S), \forall a_{-S} \in \cala^{n-|S|}, \forall a_S' \in \cala^{|S|}, \exists i \in S: \\
		&& \cost_{x_i}(M,(a_S,a_{-S})) \leq \cost_{x_i}(M,(a_S',a_{-S}))
	\end{eqnarray*}

	%That is, for all $\mathbf{x} \in \Re^n$, for all $S \subseteq N$ for all $x_S' \in \Re^{|S|}$ and for all there exists an agent $i \in S$ such that $\cost_i(M,(x_S,x_{-S})) \leq \cost_i(M,(x_S',x_{-S}))$. 
	In the continuous model, random dictator is GSP on the line (and even on the circle, see \cite {alon2010walking}). We show that in our candidate model random dictator is not GSP on the line.
	Notice that random dictator is in particular a WPV mechanism, therefore a corollary of the lemma is that there exist WPV mechanisms which are not GSP.
	%\inote{This can easily be extended to show that any WPV mechanism that gives some positive probability to at least 2 non-neighboring candidates, is not GSP!}. This is not the case in this model:
	\begin{lemma} \label{lemma:GSP}
		Random dictator is not group group-strategyproof
	\end{lemma}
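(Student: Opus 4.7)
The plan is to exhibit a concrete counterexample on the line with two agents and three candidates. The conceptual point is that, in contrast to the continuous facility-location setting, in the candidate-selection model the mechanism must output one of finitely many fixed locations; this lets a coalition jointly redirect the probability mass from their individually favourite extreme candidates onto a common compromise candidate that is closer to all of them.

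Concretely, I would place three candidates on the line at $y_1=0$, $y_2=5$, and $y_3=10$, and two agents at $x_1=1$ and $x_2=9$, so that the unique favourite candidate of agent $1$ is $\calc_1$ and the unique favourite of agent $2$ is $\calc_3$. Under truthful voting, random dictator returns each of $\calc_1, \calc_3$ with probability $1/2$, giving each agent expected cost $(1+9)/2 = 5$. Now consider the joint deviation by the coalition $S=\{1,2\}$ in which both agents vote for $\calc_2$: the mechanism then outputs $\calc_2$ deterministically, at cost $4$ to each agent. Both members of $S$ strictly decrease their cost, so the definition of group-strategyproofness is violated.

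The computations are elementary, so the only real step is the construction itself. The main consideration is to ensure that each agent's distance to the compromise candidate is strictly smaller than her expected cost under the uniform mixture over the two individual favourites; placing the agents symmetrically slightly off the two extremes, together with a midpoint compromise candidate, makes this automatic, since an agent's cost in the truthful mixture averages a small distance to her own favourite with a very large distance to the opposite extreme, while her distance to the midpoint is bounded by half the diameter.
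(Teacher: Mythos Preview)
Your proposal is correct and takes essentially the same approach as the paper: a symmetric two-agent, three-candidate instance on the line in which each agent's truthful vote is an extreme candidate, and the coalition $\{1,2\}$ jointly deviates to the central candidate, strictly lowering both expected costs. The paper uses candidates at $-1,0,1$ with agents at $\pm 0.51$ (expected cost $1$ truthfully versus $0.51$ after deviation), which is just a rescaling of your example.
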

	\begin{proof}
		Let there be three candidates at locations $y_1 = 1,y_2 = 0, y_3 = 1$ and let there be two agents at $x_1=-0.51$ and $x_2=0.51$.
		When both agents report truthfully ($a_1 = \calc_1$, $a_2 = \calc_3$), the mechanism chooses $\calc_1,\calc_3$, each with probability 0.5.
		The cost of each of the agents in this case is: $\cost_{x_1}(RD,\bara)=\cost_{x_2}(RD,\bara)=0.5\cdot (0.51+1.49)=1$.
		
		However, if both agents misreport together to $\bara'=(a_1'=\calc_2,a_2'=\calc_2)$, then $\calc_2$ will always be chosen. The costs in this case will be: $\cost_{x_1}(RD,\bara')=\cost_{x_2}(RD,\bara')=0.51$.
	\end{proof}		
	
	\subsection{Missing Proofs from Section \ref{sec-deterministic}}
	
\begin{proof} of Lemma \ref{lemma:det-med-3}: \label{prf:det-med-3}
	This mechanism is truthful - any agent located at the median location has no incentive to misreport since the only possible consequence is for the mechanism to select a different location. Similarly, other agents have no incentive to misreport, since misreporting either has no effect or moves the chosen location further away. 
	
	We now move on to the approximation ratio - 
	Let $\pi$ be the permutation of $\{1 \ldots n\}$ such that $y \left( a_{\pi(i)} \right) \leq \left(a_{\pi(i+1)}\right)$ for each $i \in \{1 \ldots n-1\}$.
	Denote $\hat{a}_i=a_{\pi(i)}$.
	The median mechanism chooses candidate $\calc_j = \hat{a}_{\ceil*{n/2}}$ located at $y_j$.
	
	Assume without loss of generality that $y_{opt} > y_j$. The social cost of median is:
	\begin{eqnarray*}		
		\SC(M,\bara)=\SC(\calc_j,\bara) &=&\displaystyle \sum_{i: y(\hat{a}_i)\leq y_j}|x_i-y_j| + \sum_{k:y(\hat{a}_k)>y_j}|x_i-y_j|
	\end{eqnarray*}		 
	Denote the first term as $\alpha$, and the second as $\beta$. 
	
	The social cost to the optimal candidate is:
	\begin{eqnarray*}		
		\SC(\OPT,\bx) &=& \displaystyle \sum_{i: y(\hat{a}_i)\leq y_j}|x_i-y_{opt}| + \sum_{k:y(\hat{a}_k)>y_j}|x_i-y_{opt}|
	\end{eqnarray*}			
	Denote the first term by $\gamma$, and the second by $\delta$. 
	
	It is easy to see that $\alpha \leq \gamma$ since for any agent $i$ in these sums (agent whose favorite candidate is not right of $y_j$): $|x_i-y_j| \leq |x_i-y_{opt}|$, and this obviously holds when taking the sum. 
	
	We now show that $\beta \leq \alpha+\gamma+\delta$, due to the following inequalities (justifications for the transitions appear below):
	\begin{eqnarray*}		
		\beta &=&  \sum_{k:\hat{a}_k>y_j}|x_k- y_j| \\
		&\leq& \sum_{k:\hat{a}_k>y_j}|y_j-y_{opt}|  + \sum_{k:\hat{a}_k>y_j}|y_{opt}-x_k| \\
		&=& \sum_{k:\hat{a}_k>y_j} |y_j-y_{opt}|+\delta \\
		&\leq& \sum_{i: y(\hat{a}_i)\leq y_j} |y_j-y_{opt}|+\delta \\
		&\leq& \sum_{i: y(\hat{a}_i)\leq y_j} \left(|y_j-x_i|+|x_i-y_{opt}| \right)+\delta = \alpha+\gamma+\delta
	\end{eqnarray*}	
	The inequalities in the second and fifth lines hold due to the triangle inequality, and the inequality in the fourth line holds because we are summing over a greater or equal amount of non-negative numbers (since $|i:y(\hat{a}_i)\leq y_j| \geq |k:y(\hat{a}_k)> y_j|$).

	Putting this all together: $$\frac{\SC(\calc_j,\bx)}{\SC(\OPT,\bx)} = \frac{\alpha+\beta}{\gamma+\delta} \leq \frac{\gamma+\alpha+\gamma+\delta}{\gamma+\delta} \leq \frac{3\gamma+\delta}{\gamma+\delta}\leq \frac{3\gamma+3\delta}{\gamma+\delta}=3$$
\end{proof}

\end{document}